\newcommand{\eg}{e.g.,\ }
\newcommand{\ie}{i.e.,\ }
\newcommand{\greedy}{Single-Exploratory-Greedy\xspace}
\newcommand{\tiebreak}{Tie Breaker Rule\xspace}
\newcommand{\tiebreaks}{Tie Breaker Rules\xspace}
\newcommand{\minerterminal}{Miner-Stable Terminal\xspace}
\newcommand{\networkterminal}{Network-Stable Terminal\xspace}
\newcommand{\mineventi}{Minimum-Score-Event$_i$\xspace}
\newcommand{\cmark}{{\color{green}\ding{51}}}
\newcommand{\xmark}{{\color{red}\ding{55}}}
\newtheorem{theorem}{Theorem}[section]
\newtheorem{corollary}{Corollary}[theorem]
\newtheorem{lemma}[theorem]{Lemma}
\newtheorem{proposition}[theorem]{Proposition}
\newtheorem{definition}{Definition}[section]
\newcounter{listCounter}
\newcommand{\ListLengths}{\setlength{\itemsep}{0ex}\setlength{\topsep}{1ex}\setlength{\partopsep}{0ex}}
\newenvironment{NoIndentEnumerate}{\begin{list}{\arabic{listCounter}.}{
        \usecounter{listCounter}\setlength{\leftmargin}{1em}\ListLengths}}{\end{list}}
\newenvironment{mylowitemize}{\begin{list}{$\bullet$}{\setlength{\leftmargin}{1em}\ListLengths}}{\end{list}}
\title{Stability of P2P Networks Under Greedy Peering (Full Version)}
\author[1]{Lucianna Kiffer }
\author[2]{Rajmohan Rajaraman}
\affil[1]{ETH Zurich, {\tt lkiffer@ethz.ch}}
\affil[2]{Northeastern University, {\tt r.rajaraman@northeastern.edu}}
\date{}
\begin{document}

    \maketitle

    \begin{abstract}
    Historically, major cryptocurrency networks have relied on random peering choice rules for making connections in their peer-to-peer networks. Generally, these choices have good properties, particularly for open, permissionless networks. Random peering choices however do not take into account that some actors may choose to optimize who they connect to such that they are quicker to hear about information being propagated in the network. In this paper, we explore the dynamics of such greedy strategies. We study a model in which nodes select peers with the objective of minimizing their average distance to a designated subset of nodes in the network, and consider the impact of several factors including the peer selection process, degree constraints, and the size of the designated subset. The latter is particularly interesting in the context of blockchain networks as generally only a subset of nodes are miners/content producers (i.e., the propagation source for content), or are running specialized hardware that would make them higher performing. 
    
    We first analyze an idealized version of the game where each node has full knowledge of the current network and aims to select the $d$ best connections, and prove the existence of equilibria under various model assumptions. Since in reality nodes only have local knowledge based on their peers' behavior, we also study a greedy protocol which runs in rounds, with each node replacing its worst-performing edge with a new random edge. We exactly characterize stability properties of networks that evolve with this peering rule and derive regimes (based on number of nodes and degree) where stability is possible and even inevitable. We also run extensive simulations with this peering rule examining both how the network evolves and how different network parameters affect the stability properties of the network. Our findings generally show that the only stable networks that arise from greedy peering choices are low-diameter and result in disparate performance for nodes in the network. 
\end{abstract}

    \newpage
    \section{Introduction}

    The use of cryptocurrency networks has exploded in recent years, with billions of dollars being transferred daily across various platforms \cite{coinmarketcap}. As these networks grow, it becomes increasingly important to understand the factors that impact their structure and stability. Traditionally, networks like Bitcoin and Ethereum\footnote{Historically the two largest by market cap \cite{coinmarketcap}.} have implemented peer selection strategies which aim at randomizing peer choices to avoid attacks \cite{henningsen2019eclipsing,marcus2018low,heilman2015eclipse}, and favor properties such as low diameter \cite{bitcoincore,libp2p} for fast information propagation. Recently in Ethereum's new Proof-of-Stake network, the Beacon network, some clients have taken a new approach of greedily choosing peers based on performance \cite{Ethereum-lighthouse} where nodes drop some fraction of worst-behaving peers. In this work, we analyze a simplified model of such greedy peering strategies. 

    This paper explores the dynamics of greedy peering strategies, specifically the case where nodes aim to optimize their distance to all nodes in the network or to a special subset of high-value nodes (which we term the \textit{miners}). We are interested in characterizing how heterogeneous peering preferences affect the evolution and stability of the network. Our study is driven  by the following questions:
    


    \begin{NoIndentEnumerate}
        \item Do greedy peering strategy games have stable networks and can we characterize the conditions for stability? 
        \item What are the properties of stable networks?  When stability is reachable, how do networks converge to a stable state? 
        \item Do protocol parameters (\eg degree constraints, number of miners) alleviate or exacerbate stability dynamics?
    \end{NoIndentEnumerate}

    
    
    \vspace*{-4mm}
    
    \subsection{Our results}
        \vspace*{-.7mm}
        We consider networks of $n$ nodes where $m \leq n$ nodes are special and are labeled as \textit{miners}; these special nodes represent miners in a cryptocurrency network or content generators in a gossip-based dissemination network.  The primary objective for each node is to be ``close" to the miners in the network; we assign a score for each node as the average hop-distance to the miners in the \emph{undirected network} formed by all the edges.  The P2P network topologies that evolve in the process depend on a number of factors, including the fraction of the network that are miners, and nodes' out- and in-degree constraints.
        
        We begin by evaluating an idealized version of a non-cooperative game where nodes choose their optimal peers to minimize their own score (\ie average distance to the miners). 
        \begin{mylowitemize}
            \item {\bf Existence of stable networks:} We show that \textit{there always exist pure Nash equilibria (or stable networks) when nodes have unbounded in-degree}.  We also prove that \textit{Nash equilibria always exist for a non-trivial infinite family of networks with bounded in-degree}; the general case with bounded in-degrees is open.  These results are presented in Section~\ref{sec:ideal}.
        \end{mylowitemize}
        While the idealized game provides a sense of stable network topologies that can arise, the game is hard to implement in practice since nodes do not have global knowledge of the network and cannot make arbitrary peering choices.

        Our main results concern a natural peering protocol, which we call \textit{\greedy}, where nodes have knowledge of only their neighbors' scores and can add an exploratory random link to replace an existing link.  For this protocol, we consider a notion of stability in the network with respect to stable edges (edges that will never be dropped).  
        \begin{mylowitemize}
            \item 
        {\bf Conditions for stability:} We find that under the \greedy protocol, \textit{often the only stable networks involve a centralized, well-connected core of nodes (the miners), with all other nodes directly connecting to this core}. We also show that \textit{when nodes are faced with ties on which connection to drop, the \tiebreak is a determining factor on whether any stable topologies exist}; for instance, we show the impossibility of stability with two such rules.  Our detailed analysis of \greedy is in Section~\ref{sec:greedy} with a summary of results in Table~\ref{tab:stab-results}. 
        \item {\bf Reaching stability:} We also show that when the number of miners is capped (to at most the node out-degree capacity), if a stable network exists, networks starting from any initial state will \textit{always eventually} reach a stable state.
        \end{mylowitemize}
        
        The bulk of our theoretical results are for networks with unbounded in-degree. To explore how and if bounded networks differ in their evolution to stable topologies, we run simulations of the \greedy protocol in diverse scenarios (see Section~\ref{sec:simulations}).  
        \begin{mylowitemize}
            \item {\bf Connectivity:} Our simulations largely support that in networks where a subset of the nodes are miners, even with bounded in-degree, the miners become more connected to one another and more centralized in the network over time (\ie lower miner diameter and eccentricity), and that this behavior is quicker in networks where miners have a higher probability of connecting (\eg smaller networks, larger inbound capacity, more than one exploratory edge). 
            \item {\bf Fairness:} Additionally, across the board, our simulations show that though \greedy lowers network properties like average distance to miners, diameter, and average eccentricity, this often comes at the price of fairness, where some nodes are worse off in the network (\eg have a higher average distance to miners than in the random network) and that the network-level average advantage is likely due to a few nodes 
        being much better off and skewing the average.
        \end{mylowitemize}

    \section{Related Work}
We aim to understand the topological impact of a greedy peering protocol on a peer-to-peer network with a subset of preferential nodes everyone is trying to get close to. Most major existing P2P networks of this kind, namely cryptocurrencies, implement peering protocols that approximate low-diameter random networks \cite{bitcoincore,libp2p}. Topological analysis of these networks, however, is hard as they tend to obfuscate peer connections to avoid attacks \cite{henningsen2019eclipsing,marcus2018low,heilman2015eclipse}, and existing measurement techniques are prohibitively expensive or invasive \cite{delgado2019txprobe, li2021toposhot}, thus they tend to be run on test networks as a proof of concept. In \cite{delgado2019txprobe}, Delgado et al. map the topology of the Bitcoin Ropsten test network and compared it with simulated 
random networks of the same size and similar degrees. They find the Ropsten network to have some behavior similar to networks in our simulations (e.g., a small number of concentrated central nodes).

An inspiration for the \greedy of this paper is the Perigee protocol of Mao et al.~\cite{mao2020perigee}, which also implements a greedy peering protocol. 
The authors consider a network of all miners in an Euclidean space and show theoretically that nodes choosing neighbors geographically close to them results in network distance that is close to the optimum distance (shortest path is a constant away from the Euclidean distance). 
They show via simulations that the Perigee protocol approximates Euclidean distances between all nodes. In their simulations, they show that as processing time at each node increases, the performance of Perigee and the random network converge. In our work, this is the regime we are interested in studying: negligible link latency in comparison to node processing latency such that the properties of the overlay topology (and hop distance) are more important than the underlying communication link latency. We also focus on networks where a subset of the nodes are miners, as this is often the case \cite{miller2015discovering,kiffer2021under}. Node processing times and congestion have also been a recent focus in security analyses of blockchain consensus protocols \cite{bwlimitedposlc,kiffer2023security}. 

There is a growing body of work on peering protocols in cryptocurrency networks. Recent work has shown that variations of the Perigee protocol run by a single entity can work well as a strategy for reducing a node's own latency to the source of transactions in a network \cite{tang2022strategic,babel2022strategic}. In \cite{park2019nodes}, Park et al.\ use their own measurements of Bitcoin to propose an IP-based distance peer selection protocol to reduce round trip time of peers. Though this is in the same spirit of Perigee-like greedy protocols, using IP location is notoriously unreliable \cite{poese2011ip,gouel2021ip}, and tricky as a basis for routing as the Internet does not satisfy the triangle inequality \cite{lumezanu2009triangle}.  Other proposals for overlay networks include randomly replacing edges to maintain approximate sparse random graphs \cite{zich2008jumpnet}, and more structured proposals such as one based on the DHT Kademlia  \cite{rohrer2019kadcast}, and a hypercubic overlay network \cite{aradhya2022overchain}, as well as others \cite{toshniwal2021comparative}.

Our game-theoretic formulation falls under the category of network
creation games that have been extensively studied by researchers in
computer science, game theory, and economics.  One of the earliest
works in this space is due to Jackson and Wolinsky who introduced a
model in which there is an underlying cost for each link, and studied
tradeoffs between stability and efficiency~\cite{jackson+w:strategic}.
Demaine et al studied the price of anarchy in network creation,
quantifying the overhead of the worst stable network, when compared
with the social optimum~\cite{demaine+hmz:network}.  Many different
models based on cost, direction of links, degree constraints, and
initial conditions have been introduced, with the aim of capturing
diverse scenarios including social networks and the
Internet~\cite{BalaGoyal,corbo05,LAOUTARIS20141266,meirom+mo:network}.
Our game-theoretic model is different from models studied in past work
in the following way.  Previous work has considered either directed
networks with the node actions being given by directed edges
(e.g.,~\cite{LAOUTARIS20141266}) or undirected networks with the node
actions being given by undirected edges (e.g.,~\cite{meirom+mo:network}).  In
contrast, P2P networks underlying blockchains are better modeled by processes in which nodes select directed out-going edges to peers under some capacity constraints, while the latency incurred in the resulting network is independent of the edge direction.
Previous models in network creation games do not consider this important distinction on how links are created and their impact on delays.  As a result, the stability and convergence properties of past models differ from the results we derive for our model.  Furthermore, past work on network creation games is primarily on the nature of Nash equilibria; while we also study equilibria for our model, our focus is on the analysis of a greedy peering process that captures the peering approaches of real P2P networks.

    \section{Model}
    \label{sec:model}


We consider a network with $n$ nodes, the first $m$ of which are miners, with each node having $d$ out-going edges and at most $d_{in}\leq n$ incoming edges. Let $G$ denote the graph consisting of the nodes and the directed edges.  We use $(i,j)$ to denote the directed edge from $i$ to $j$.  For convenience, we also use $e_{i,j}$ as the indicator variable for whether either of edges $(i,j)$ or $(j,i)$ exist; so, $e_{i,j}$ is $0$ (resp., 1) if neither $(i,j)$ nor $(j,i)$ exist (resp., if $(i,j)$ exists or $(j,i)$ exists) in $G$.  Thus, $e_{i,j} = e_{j,i}$. We use the terms peer and neighbor interchangeably, and edge and connection interchangeably.  We also refer to $d_{in}$ as the inbound connection capacity.

The \textbf{\greedy} protocol operation proceeds in rounds, each of which consists of the following steps:
    \begin{NoIndentEnumerate}
        \item In an arbitrary order, each node $i$ selects at random a peer $j$ s.t. currently $e_{i,j}=0$ (thus preventing bi-directional edges) and $j$ has not reached it's inbound capacity, and adds $(i,j)$ to $G$ (thus setting $e_{i,j}=1$). If $i$ has an edge to or from every node, it adds no new edges.
    
        \item Every node $i$ updates its score $score_G(i)$ as follows. 
 If $dist_G(i,j)$ denotes the shortest distance between $i$ and $j$ in the \textit{undirected} version of $G$, then
            $$score_G(i)=\sum_{j\in miners} \frac{dist_G(i,j)}{m},$$
            which is the average distance to all miners. A miner's distance to themselves is $0$, thus the minimum score for a miner is $\frac{m-1}{m}$ and the minimum score of a non-miner is 1.
        \item Each node $i$ with $d$ out-going edges after the above steps, drops an out-going edge $(i,j)$ (i.e., sets $e_{i,j} = 0$), where $j$ is the out-neighbor of $i$ with the maximum score; if $i$ does not have $d$ out-going edges then it does not drop a peer. If two or more peers have the highest score, then a \tiebreak as defined below is used to pick one peer.
    \end{NoIndentEnumerate}

    \begin{definition} [\tiebreak]
        We define the \tiebreak as the rule used for deciding ties when a node has multiple peers sharing the highest score in step (3). We consider nodes deciding ties according to the following possible rules: uniformly at random (Random), First-In-First-Out (FIFO), Last-In-First-Out (LIFO), or a global ordering where all nodes are ranked in some order that does not change during the protocol and the node with highest rank wins (Global-Ordering).
    \end{definition}

    \begin{definition} [Stability]
            We define an edge as \textbf{stable} at some round $r$, if its value at the end of the round (after step 3), does not change for all future rounds. Formally, let $e^r_{i,j}$ be the value of $e_{i,j}$ at the end of round $r$. Edge $e_{i,j}$ is stable at round $r$, if $\forall r'\geq r$, $e^{r'}_{i,j}=e^{r}_{i,j}$.
         
         A \textbf{network-stable topology} is one where all nodes have $d-1$ \textit{stable outgoing edges}. Formally, $G$ is network-stable at round $r$ if $\forall i,j \leq n, r'\geq r$ $e^{r'}_{i,j}=e^{r}_{i,j}$. A network-stable topology is therefore an equilibrium since each node has one outgoing edge they keep replacing with a random node, but the choice of this edge will never cause a stable edge to be dropped. 
         
         A \textbf{miner-stable topology} is one where all edges between miners are stable, and therefore no further edge between miners will arise outside of the $d$-th edge. Formally, graph is miner-stable at round $r$ if $\forall i,j \leq m, r'\geq r$ $e^{r'}_{i,j}=e^{r}_{i,j}$.
         
    \end{definition}

    Note that for a topology to be network-stable or miner-stable, we require all nodes to have at least $d-1$ outgoing edges since a node will never drop an edge if they have less than $d$. Additionally, the following event arises often in our arguments related to stability.

    \begin{definition}[Special Event]
        We define \mineventi as the event where at step (1) of some round, all miners who were not yet connected to some node $i$, become connected to $i$. At step (2) of the protocol, node $i$ thus has its minimal score ($\frac{m-1}{m}$ for miners, and $1$ for non-miners). Assuming nodes have no inbound connection capacity, there is always some probability this event happens for a given $i$ in the next round.
    \end{definition}

    This event goes to show that it is always possible for any node to come to have its minimal score in any round, thus, most of our proofs involve showing that stable connections must be to minimal scoring peers so that they are not replaced.
    \newline \newline
    \noindent \textit{Model motivation.} In our model (and in practice), nodes do not know if a peer is a miner or not, all they know is a peer’s relative performance. Thus, both miners and non-miners are trying to minimize their distance to all miners via the score function. While for simplicity, in our model, we use hop distance to score peers, in practice, network propagation delays can be used locally to score peers against each other. We also adopt a graph model where the set of connections are made in a directed fashion, while their performance is evaluated over the undirected edges. This mirrors the behavior of real P2P networks, where nodes generally have a set inbound and outbound connection cap, and will accept connections as long as they have slots (allowing new nodes who join the system to find peers) but will be more selective with their out-going edges (e.g., to prevent DoS). When it comes to information propagation however (e.g., gossiping of blocks and transactions), all peers are treated the same and thus message delay is dependent only on undirected hops.  
    Additionally, we omit a peer discovery process and instead assume nodes can sample connections at random from all nodes in the network\footnote{ In reality, information of new nodes in the network needs to propagate and, though this generally happens quickly \cite{dotan2021survey}, it is possible that propagation of information on new nodes impacts how nodes cluster  in the network.}. We also consider a greedy peering process that proceeds in rounds, though within each step there is no coordination between nodes (except the assumption that in step 1 no two nodes choose each other). These simplifying assumptions allow for an initial clean theoretical evaluation.

\section{Theoretical Analyses}
Our goal is to analyze \greedy and the underlying network formation process.  To begin, we consider an idealized game-theoretic version of \greedy where nodes have full knowledge of the graph and can choose all peers to optimize their score, and explore the equilibria of this game. We then analyze the \greedy protocol, studying the stability of the networks that arise during the protocol.

\newcommand{\game}{{\mathcal G}}
\newcommand{\score}{{score_G}}
\newcommand{\junk}[1]{}
\subsection{An idealized game-theoretic model}
\label{sec:ideal}
We consider first an idealized game-theoretic model 
in which each node wants to select their outgoing peers so as to minimize the average distance to a miner.  A natural approach is via the concept of Nash equilibrium~\cite{osborne+r:game}.  Let $d$ denote the out-degree of each node.  For a given node $v$, an action is a set of at most $d$ nodes of the network representing the out-neighbors that $v$ selects. Note that in this idealized game, nodes can choose \textit{all} new outgoing edges.  Recall that while we distinguish between in- and out-edges, the distances among nodes and the average diameter are based on the graph that is obtained by viewing each edge as undirected.  

Let $\game(n,m,d)$ (resp., $\game(n,m,d,d_{in})$) denote the \emph{uncapped} (resp., \emph{capped}) \emph{game} with $n$ nodes, $m \le n$ miners, out-degree $d$, and unbounded in-degree (resp., in-degree at most $d_{in}$).  We first show that the uncapped game always has a pure Nash equilibrium, a stable topology in which no node wants to change its out-neighbors.

\begin{lemma}[Every Uncapped Game has a Stable Network]
\label{lem:stable-nash}
For all $n, m, d$, $\game(n,m,d)$ has a Nash equilibrium.
\end{lemma}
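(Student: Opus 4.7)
The plan is to exhibit an explicit Nash equilibrium $G^*$ and verify every node's choice is a best response. The construction is a hub-and-spoke graph: let $h = \min(d, m)$ and designate the miners $\{1, \ldots, h\}$ as \emph{hubs}. In $G^*$, each hub directs its $d$ out-edges to the other $h-1$ hubs (forming an undirected clique over the hub set) and fills its remaining $d - h + 1$ out-edges with arbitrary filler (a non-miner if one exists, otherwise a non-hub miner). Every other node --- whether a non-hub miner or a non-miner --- directs all $d$ of its out-edges to the hub set (and fills with non-miner edges if $h < d$). Intuitively, $G^*$ is the natural topology in which every node wants to be directly adjacent to the ``most central'' $d$ miners.

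Next I would verify that every node in $G^*$ is at its minimum achievable score, which immediately yields the Nash property. A hub $i$ is undirected-adjacent to every other miner --- other hubs via the clique, every non-hub miner via the in-edges from the spokes --- so $\score(i) = (m-1)/m$, the global minimum for a miner. A non-hub miner $v$ has exactly $d$ miner neighbors in $G^*$ (the hubs; any miner in-edges $v$ receives are to hubs it already points to and therefore do not enlarge its miner-neighborhood), and every miner that $v$ does not directly point to lies at distance exactly $2$ via any hub, giving $\score(v) = (2m - d - 2)/m$. This matches the lower bound forced by the at-most-$d$ miner-neighbor constraint. A non-miner analogously achieves the minimum $\score(v) = (2m - d)/m$.

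To rule out profitable unilateral deviations by a non-hub $v$, I would parameterize an alternative out-set by $(a, b, c)$, the number of hubs, non-hub miners, and non-miners among $v$'s chosen out-neighbors, with $a + b + c = d$. The analysis splits on $a$. If $a \ge 1$, every miner not directly chosen still sits at distance exactly $2$ via a chosen hub, and a direct calculation gives $v$'s total miner-distance as $2m - 2 - (a+b)$ for non-hub-miner $v$ (respectively $2m - (a+b)$ for non-miner $v$), which equals the $G^*$-value precisely when $c = 0$ and exceeds it when $c > 0$. If $a = 0$, every $1$-hop neighbor of $v$ points only to hubs in $G^*$, so $v$'s $2$-hop reach is confined to the hub set and missed non-hub miners sit at distance $3$; the resulting total $3m - d - 3 - 2b$ strictly exceeds the $G^*$-value in every feasible regime. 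The main obstacle is the $a = 0$ subcase when $m \le 2d$, where feasibility forces $c \ge 1$ and $b \le m - d - 1$; here a short bookkeeping step uses exactly these bounds to verify the gap remains nonnegative, completing the proof that $G^*$ is a pure Nash equilibrium.
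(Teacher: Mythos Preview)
Your construction $G^*$ fails to be a Nash equilibrium when $m > d$. Take any non-miner $u$ that receives a filler in-edge from some hub $i$ (such a $u$ exists whenever $n > m$; if $n = m$ the same argument hits a non-hub miner receiving filler). In $G^*$, $u$ directs all $d$ out-edges to the hubs, so its out-edge to hub $i$ is redundant with the incoming edge $(i,u)$. If $u$ redirects that one out-edge to any non-hub miner $w$, then $u$ is adjacent to all $d$ hubs (via $d-1$ out-edges plus the in-edge from $i$) \emph{and} to $w$---that is, $d+1$ miner-neighbors instead of $d$---and its total miner-distance drops strictly from $2m-d$ to $2m-d-1$. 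So $u$ has a profitable deviation and $G^*$ is not stable.

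The root error is your claimed lower bound. You assert the score ``matches the lower bound forced by the at-most-$d$ miner-neighbor constraint,'' but the game constrains out-degree, not miner-neighborhood size: a node with $k$ miner in-edges (which are fixed by the other players' strategies and persist under deviation) can achieve up to $d+k$ miner-neighbors by directing its out-edges to miners disjoint from its in-neighbors. Your deviation formula $2m-(a+b)$ tacitly assumes $k=0$, which is false precisely for the filler targets. The paper's construction handles exactly this via its condition (iii)---no pair $(u,v),(v,u)$ both present---so that each node's out-edges never duplicate its in-neighbors, and the in-edge count $i_u$ is carried explicitly through the best-response computation. Your approach is easily repaired: drop the filler edges entirely (actions are sets of \emph{at most} $d$ nodes, and every hub already sits at the minimum score $(m-1)/m$ without them); then no non-hub receives any miner in-edge, your lower bound becomes valid, and your $(a,b,c)$ case analysis goes through.
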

\begin{proof}

Consider any graph $G$ with the following properties: (i) there exists a node $r$ such that every node in $G$ has an edge to $r$; (ii) every edge is directed to a miner; (iii) there does not exist any pair of nodes $u$ and $v$ such that $(u,v)$ and $(v,u)$ are both present in $G$; and (iv) if a node $v$ has less than $d$ outgoing edges, then it has an edge to or from every miner.

We claim that $G$ is a Nash equilibrium.  We compute the score for each node in $G$.  The score of miner $r$ equals $(m-1)/m$ since every other miner is at distance 1; this is the lowest score possible and no action of $r$ can improve it.  Let $u \neq r$ denote any other miner, and let $i_u$ denote the number of miners that have edges to $u$.  Since $r$ is adjacent to every other node, $u$ is at distance at most $2$ from any other miner.  Given the choices of all other nodes, a best-response action of $u$ is to set all its out-edges to miners that do not have an edge to $u$; the best score achievable equals $\max\{(d + i_u) + 2(m-1-i_u-d), (m-1)\}/m$.  This is precisely what is achieved in $G$ given the conditions: by (ii), all edges are directed to a miner; by (iii) $u$ does not put any out-edges to a miner that already has an edge to $u$; finally, by (iv), if $u$ does not use all of its $d$ edges, then it already has a score of $(m-1)/m$.

We finally consider a non-miner $v$.  By the conditions, all of the edges of a non-miner are to miners.  Its distance to $\min\{d, m\}$ miners is 1 and its distance to every other miner is 2, leading to an optimal possible score of $\max\{(d + 2(m-d))/m,1\}$.  Thus, $v$ has no incentive to change its out-edges.  We have thus shown that $G$ is a pure Nash equilibrium.
\end{proof}

We next consider the network formation game with bounded in-degree.  Since the endpoints of the out-edges are constrained by this bound, reasoning about stable networks is much more challenging. We first show that the simplest network formation games in the capped setting, the family $\game(n, m, 1, d_{in})$, have pure Nash equilibrium for all $d_{in}$ and infinitely many choices of $n$ and $m$. We sketch the graphs constructed in the following proofs in Figure~\ref{fig:proof_drawing}.

\begin{figure}
    \centering
            \begin{subfigure}{.53\textwidth}
                \centering
                \resizebox{\textwidth}{!}{
                \tikzset{every picture/.style={line width=0.75pt}} 

\begin{tikzpicture}[x=0.75pt,y=0.75pt,yscale=-1,xscale=1]

\draw   (216,213.75) .. controls (216,205.88) and (222.38,199.5) .. (230.25,199.5) .. controls (238.12,199.5) and (244.5,205.88) .. (244.5,213.75) .. controls (244.5,221.62) and (238.12,228) .. (230.25,228) .. controls (222.38,228) and (216,221.62) .. (216,213.75) -- cycle ;
\draw   (171,175.75) .. controls (171,167.88) and (177.38,161.5) .. (185.25,161.5) .. controls (193.12,161.5) and (199.5,167.88) .. (199.5,175.75) .. controls (199.5,183.62) and (193.12,190) .. (185.25,190) .. controls (177.38,190) and (171,183.62) .. (171,175.75) -- cycle ;
\draw   (262,175.75) .. controls (262,167.88) and (268.38,161.5) .. (276.25,161.5) .. controls (284.12,161.5) and (290.5,167.88) .. (290.5,175.75) .. controls (290.5,183.62) and (284.12,190) .. (276.25,190) .. controls (268.38,190) and (262,183.62) .. (262,175.75) -- cycle ;
\draw    (196,189.5) -- (212.76,204.5) ;
\draw [shift={(215,206.5)}, rotate = 221.82] [fill={rgb, 255:red, 0; green, 0; blue, 0 }  ][line width=0.08]  [draw opacity=0] (8.93,-4.29) -- (0,0) -- (8.93,4.29) -- cycle    ;
\draw    (245,206.5) -- (260.76,192.49) ;
\draw [shift={(263,190.5)}, rotate = 138.37] [fill={rgb, 255:red, 0; green, 0; blue, 0 }  ][line width=0.08]  [draw opacity=0] (8.93,-4.29) -- (0,0) -- (8.93,4.29) -- cycle    ;
\draw    (260,169.5) -- (207,169.5) ;
\draw [shift={(204,169.5)}, rotate = 360] [fill={rgb, 255:red, 0; green, 0; blue, 0 }  ][line width=0.08]  [draw opacity=0] (8.93,-4.29) -- (0,0) -- (8.93,4.29) -- cycle    ;
\draw   (201.58,241.67) -- (264.53,311.67) -- (136.21,311.67) -- cycle ;
\draw   (281.83,240) -- (291,311.5) -- (272.67,311.5) -- cycle ;
\draw   (325.33,240) -- (333,311.5) -- (317.67,311.5) -- cycle ;
\draw    (316.85,131.23) -- (288.63,117.44) ;
\draw    (316.85,131.23) -- (293.6,149.44) ;
\draw    (284.41,145.86) -- (288.63,117.44) ;
\draw    (291.69,145.72) -- (287.17,153.78)(289.08,144.25) -- (284.55,152.32) ;
\draw [shift={(281.46,160.9)}, rotate = 299.27] [fill={rgb, 255:red, 0; green, 0; blue, 0 }  ][line width=0.08]  [draw opacity=0] (8.93,-4.29) -- (0,0) -- (8.93,4.29) -- cycle    ;

\draw    (166.56,115.86) -- (140.82,133.86) ;
\draw    (163.64,151.33) -- (140.82,133.86) ;
\draw    (166.98,144.79) -- (172.66,152.63)(164.55,146.55) -- (170.23,154.39) ;
\draw [shift={(176.72,160.8)}, rotate = 234.08] [fill={rgb, 255:red, 0; green, 0; blue, 0 }  ][line width=0.08]  [draw opacity=0] (8.93,-4.29) -- (0,0) -- (8.93,4.29) -- cycle    ;
\draw    (166.56,115.86) -- (171.22,145.02) ;

\draw    (205,238) -- (216.16,227.4) ;
\draw [shift={(218.33,225.33)}, rotate = 136.47] [fill={rgb, 255:red, 0; green, 0; blue, 0 }  ][line width=0.08]  [draw opacity=0] (8.93,-4.29) -- (0,0) -- (8.93,4.29) -- cycle    ;
\draw    (325.33,238) -- (251.91,222.05) ;
\draw [shift={(249,221.33)}, rotate = 13.74] [fill={rgb, 255:red, 0; green, 0; blue, 0 }  ][line width=0.08]  [draw opacity=0] (8.93,-4.29) -- (0,0) -- (8.93,4.29) -- cycle    ;
\draw    (282,238) -- (246.13,227.58) ;
\draw [shift={(243.33,226.5)}, rotate = 21.07] [fill={rgb, 255:red, 0; green, 0; blue, 0 }  ][line width=0.08]  [draw opacity=0] (8.93,-4.29) -- (0,0) -- (8.93,4.29) -- cycle    ;
\draw   (117.33,222.67) .. controls (112.66,222.67) and (110.33,225) .. (110.33,229.67) -- (110.33,260.67) .. controls (110.33,267.34) and (108,270.67) .. (103.33,270.67) .. controls (108,270.67) and (110.33,274) .. (110.33,280.67)(110.33,277.67) -- (110.33,311.67) .. controls (110.33,316.34) and (112.66,318.67) .. (117.33,318.67) ;
\draw   (168,239) .. controls (164.16,239) and (162.24,240.92) .. (162.24,244.76) -- (162.24,244.76) .. controls (162.24,250.25) and (160.32,253) .. (156.47,253) .. controls (160.32,253) and (162.24,255.75) .. (162.24,261.24)(162.24,258.76) -- (162.24,261.24) .. controls (162.24,265.08) and (164.16,267) .. (168,267) ;
\draw  [fill={rgb, 255:red, 0; green, 0; blue, 0 }  ,fill opacity=1 ] (197.92,242.67) .. controls (197.92,240.64) and (199.56,239) .. (201.58,239) .. controls (203.61,239) and (205.25,240.64) .. (205.25,242.67) .. controls (205.25,244.69) and (203.61,246.33) .. (201.58,246.33) .. controls (199.56,246.33) and (197.92,244.69) .. (197.92,242.67) -- cycle ;
\draw  [fill={rgb, 255:red, 0; green, 0; blue, 0 }  ,fill opacity=1 ] (180.58,260.33) .. controls (180.58,258.49) and (182.08,257) .. (183.92,257) .. controls (185.76,257) and (187.25,258.49) .. (187.25,260.33) .. controls (187.25,262.17) and (185.76,263.67) .. (183.92,263.67) .. controls (182.08,263.67) and (180.58,262.17) .. (180.58,260.33) -- cycle ;
\draw    (182.67,269.67) -- (219,270) ;
\draw    (183,263.5) -- (183,274.33) ;
\draw    (219,263.5) -- (219,274.33) ;
\draw  [fill={rgb, 255:red, 0; green, 0; blue, 0 }  ,fill opacity=1 ] (215.67,260.17) .. controls (215.67,258.33) and (217.16,256.83) .. (219,256.83) .. controls (220.84,256.83) and (222.33,258.33) .. (222.33,260.17) .. controls (222.33,262.01) and (220.84,263.5) .. (219,263.5) .. controls (217.16,263.5) and (215.67,262.01) .. (215.67,260.17) -- cycle ;
\draw    (189.25,255.33) -- (196.94,247.18) ;
\draw [shift={(199,245)}, rotate = 133.34] [fill={rgb, 255:red, 0; green, 0; blue, 0 }  ][line width=0.08]  [draw opacity=0] (8.93,-4.29) -- (0,0) -- (8.93,4.29) -- cycle    ;
\draw    (217,259.17) -- (205.72,247.18) ;
\draw [shift={(203.67,245)}, rotate = 46.74] [fill={rgb, 255:red, 0; green, 0; blue, 0 }  ][line width=0.08]  [draw opacity=0] (8.93,-4.29) -- (0,0) -- (8.93,4.29) -- cycle    ;
\draw   (109.67,334.67) .. controls (109.67,339.34) and (112,341.67) .. (116.67,341.67) -- (216.83,341.67) .. controls (223.5,341.67) and (226.83,344) .. (226.83,348.67) .. controls (226.83,344) and (230.16,341.67) .. (236.83,341.67)(233.83,341.67) -- (337,341.67) .. controls (341.67,341.67) and (344,339.34) .. (344,334.67) ;

\draw (223,208) node [anchor=north west][inner sep=0.75pt]   [align=left] {$\displaystyle v_{1}$};
\draw (268,170) node [anchor=north west][inner sep=0.75pt]   [align=left] {$\displaystyle v_{2}$};
\draw (177,170) node [anchor=north west][inner sep=0.75pt]   [align=left] {$\displaystyle v_{3}$};
\draw (295.67,266) node [anchor=north west][inner sep=0.75pt]   [align=left] {\textbf{{\large ...}}};
\draw (273.67,312) node [anchor=north west][inner sep=0.75pt]   [align=left] {$\displaystyle T_{2}$};
\draw (310.67,312) node [anchor=north west][inner sep=0.75pt]   [align=left] {$\displaystyle T_{d_{in} -1}$};
\draw (193.67,312) node [anchor=north west][inner sep=0.75pt]   [align=left] {$\displaystyle T_{1}$};
\draw (73.67,255) node [anchor=north west][inner sep=0.75pt]  [font=\footnotesize] [align=left] {depth \\$\displaystyle k-1$};
\draw (151,124) node [anchor=north west][inner sep=0.75pt]   [align=left] {$\displaystyle S_{3}$};
\draw (287.63,124) node [anchor=north west][inner sep=0.75pt]   [align=left] {$\displaystyle S_{2}$};
\draw (118,231) node [anchor=north west][inner sep=0.75pt]  [font=\footnotesize] [align=left] {miners\\up to\\depth \\$\displaystyle k'-1$};
\draw (193,257) node [anchor=north west][inner sep=0.75pt]  [font=\footnotesize] [align=left] {\textbf{{\large ...}}};
\draw (191,275.33) node [anchor=north west][inner sep=0.75pt]  [font=\footnotesize] [align=left] {$\displaystyle d_{in}$};
\draw (218.67,349.33) node [anchor=north west][inner sep=0.75pt]   [align=left] {$\displaystyle S_{1}$};

\end{tikzpicture}
                }
                \caption{Sketch from Lemma~\ref{lem:nash-capped}}
            \end{subfigure} ~~
            \begin{subfigure}{.37\textwidth}
               \centering
               \resizebox{\textwidth}{!}{
\includegraphics[width=\linewidth]{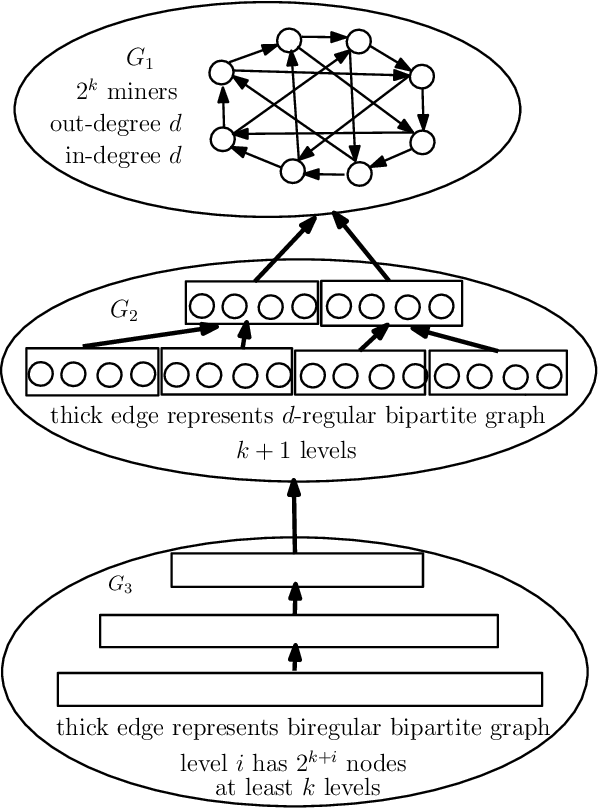}
               }
               \caption{Sketch from Lemma~\ref{lem:nash_capped_general}}
            \end{subfigure}
    \caption{{\small For the existence proofs of Lemmas~\ref{lem:nash-capped} and \ref{lem:nash_capped_general} we construct the above network families.  Both networks are essentially layered, and the crux of the proof is that the miners are positioned in the ``top layer(s)" with the non-miners occupying the remaining layers, such that all nodes except the non-miners in the last layer have full in-capacities. The nodes in the last layer have the highest score and no node higher up in the layers would give up their position to connect to a node in the bottom layer.}}\label{fig:proof_drawing}
\end{figure}

\begin{lemma}[Stable Networks in Capped Games with Unit Out-Degree]
\label{lem:nash-capped}
For any $n = 3\cdot d_{in}^k$ and $m = 3\cdot d_{in}^{k'}$ for any integers $k$, $k'$, $k \ge k' \ge 0$, $\game(n, m, 1, d_{in})$ has a pure Nash equilibrium.
\end{lemma}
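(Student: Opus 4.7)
The plan is to explicitly construct an equilibrium graph $G$ as a layered directed graph and then verify that no node has a strictly improving unilateral deviation on its single out-edge.

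\emph{Construction.} Partition the $n$ nodes into layers $L_0, L_1, \ldots, L_k$. Place $L_0 = \{v_1, v_2, v_3\}$ at the top with directed cycle edges $v_1 \to v_2 \to v_3 \to v_1$, so each $v_i$ has one in-edge from the cycle and $d_{in}-1$ remaining in-slots. For $i = 1$, attach $d_{in}-1$ fresh children to each $v_i$, and for $2 \le i \le k$, attach $d_{in}$ fresh children to each node of $L_{i-1}$; each child's unique out-edge points to its parent. A short induction gives $|L_i| = 3(d_{in}-1)d_{in}^{i-1}$ for $i \ge 1$, with total $3 + 3(d_{in}-1)\sum_{i=1}^{k} d_{in}^{i-1} = 3 d_{in}^k = n$; the analogous sum truncated at $k'$ shows $L_0 \cup \cdots \cup L_{k'}$ has exactly $3 d_{in}^{k'} = m$ nodes, which I designate as the miners.

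\emph{Why $G$ is Nash.} By construction, every node in $L_0 \cup \cdots \cup L_{k-1}$ has in-degree exactly $d_{in}$, while every $L_k$ leaf has in-degree $0$. Hence, for any unilateral redirection of a node $u$'s out-edge away from its current target, the bi-directional-edge constraint together with the in-capacity cap force the new target to be an $L_k$ leaf that is not already a child of $u$. I would then argue layer by layer that such a redirection is never profitable: (i) for $u \in L_k$, moving to any other leaf adds at least one hop to each miner's shortest path; (ii) for $u \in L_i$ with $1 \le i \le k-1$, the original out-edge is $u$'s unique route to the upper layers, so rerouting through a leaf $\ell$ adds at least $k+1-i$ extra hops to every miner outside $u$'s subtree while leaving distances to miners inside $u$'s subtree unchanged; (iii) for $u \in L_0$, dropping the cycle out-edge lengthens $u$'s distance to one other top miner from $1$ to $2$ and, when $k' \ge 1$, lengthens every distance into that miner's subtree. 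Dropping the out-edge entirely is strictly worse in every case, since for $i \ge 1$ it severs $u$'s subtree from the rest of $G$, and for $u \in L_0$ it still costs at least one hop.

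\emph{Main obstacle.} The most delicate case is (ii) when additionally $i \le k'$: then $u$ is itself a miner and $u$'s subtree contains other miners whose distances to $u$ are unaffected by the redirection, so I must tally the miners inside versus outside $u$'s subtree and show that the $k+1-i$ extra hops per outside miner dominate. Because $u$'s subtree is contained in a single one of the three symmetric branches hanging off $L_0$, at least roughly a $2/3$ fraction of the $m$ miners lies outside $u$'s subtree, and combining this with the assumption $k \ge k'$ (so $k+1-i \ge k'-i+1$) yields the required strict surplus. All remaining cases reduce immediately to the simple observation that, in this layered graph, each non-$L_0$ node's out-edge is its unique link toward the miners above it.
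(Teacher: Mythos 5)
Your construction is exactly the paper's equilibrium (the layered description with $d_{in}-1$ children of each $v_i$ and $d_{in}$ children thereafter is the same graph as the paper's $d_{in}-1$ complete $d_{in}$-ary trees per branch), and your observation that full in-degrees force any deviation to target an $L_k$ leaf is also the paper's first step. The gap is in the verification. Your cases (i)--(iii) rest on \emph{per-miner} distance claims that are false: after $u$ reroutes to a leaf $\ell$ in a different branch, miners lying near $\ell$ (the ancestors of $\ell$ up to layer $k'$, and miners in the subtrees along that ancestral path) can become strictly \emph{closer} to $u$ than before. For example, with $u\in L_k$ in $v_1$'s branch, $\ell\in L_k$ in $v_2$'s branch and $w$ the layer-$k'$ ancestor of $\ell$ (a miner when $k'\ge 1$), the distance drops from $k+1+k'$ to $1+(k-k')$. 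So ``adds at least one hop to each miner'' in (i) and ``adds at least $k+1-i$ extra hops to every miner outside $u$'s subtree'' in (ii) do not hold, and consequently the surplus computation in your ``main obstacle'' paragraph (a uniform loss of $k+1-i$ over a $2/3$ fraction of the miners) does not go through as stated. You have identified the wrong delicate case: miners \emph{inside} $u$'s subtree are harmless (their distances are unchanged); the real difficulty is the set of miners that get closer through the new attachment point, which your outline never accounts for.

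This accounting is precisely the technical heart of the paper's proof: for a deviating miner at depth $h$ it compares $u$ with its new ancestor $v'$ at the same height, uses the symmetry of the construction to cancel the contributions of miners outside both subtrees, lower-bounds the guaranteed loss (the $k+1-h$ increase to the other branches, counted over subtrees rooted at depth $h$), and subtracts the maximum possible gain $(k+h+1)$ per miner in $v'$'s subtree, reducing to an inequality that is unsatisfiable. For non-miner deviators there is a cleaner aggregate argument you should use instead of the per-miner claim: the new score is exactly $\mathrm{score}(\ell)+1$, and a depth-monotonicity lemma (deeper nodes have weakly larger scores, again an aggregate statement proved via the $2/3$-of-miners-outside observation) gives $\mathrm{score}(\ell)\ge \mathrm{score}(u)$, hence a strict increase. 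You would also need to dispose of the degenerate deviation where $\ell$ lies in $u$'s own subtree, which disconnects $u$ from all miners outside it and is therefore never profitable. Without the compensating-gain bookkeeping for miner deviators, the proof is incomplete at exactly the step the paper works hardest on.
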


 \begin{proof}
        Note that $n$ and $m$ are at least 3. Let $C$ be a directed triangle over three miners $v_1$, $v_2$, and $v_3$.  Partition the remaining nodes into three equal-sized sets $S_1$, $S_2$, and $S_3$.  By the assumption on $n$, each $S_i$ has $d_{in}^k - 1$ nodes and $d_{in}^{k'}-1$ miners.  Split each $S_i$ into $d_{in}-1$ balanced complete $d_{in}$-ary directed trees, with every edge directed to the parent.
        Note that a balanced complete $d_{in}$-ary tree of depth $h-1$ has $\frac{d_{in}^h-1}{d_{in}-1}$ nodes. Furthermore, assume that all nodes in each $d_{in}$-ary tree $T_j$ up to depth $k'-1$ are miners (there are exactly $\frac{d_{in}^{k'}-1}{d_{in}-1}$ such nodes in each $T_j$). Let $r_{i,j}$ for $1\le j \le d_{in}-1$ be the root of each tree $T_j$ in $S_i$. Consider the network $G$ consisting of the union of $C$, $S_1$, $S_2$, $S_3$, and the edges $\{(r_{i,j}, v_i): 1 \le i \le 3\}$.  We show that $G$ is a pure Nash equilibrium.
    
        For each $v_i$, consider the directed tree $Tv_i$ made up of the trees in $S_i$ and the edges from their roots to $v_i$. By symmetry, the scores of all nodes at a given depth in $Tv_i$ are the same. Let $\delta_1$ and $\delta_2$ be integers such that $k \ge \delta_1 > \delta_2$. Consider two nodes $n_1$ and $n_2$ in $S_i$ at depths $\delta_1$ and $\delta_2$, respectively.  The score of $n_1$ is strictly greater than that of $n_2$. This is because a node $v$ at some depth $\delta$ has a lower score than any of the nodes in the sub-tree below it since there are twice as many miners not in $Tv_i$ as are in $Tv_i$, and $v$ is in the only path of its sub-tree to those miners. This also implies that for any leaf $\ell$ and any non-leaf node $v$, the score of $\ell$ is at least the score of $v$. Given the symmetry of the leaves, this applies for all $n_1,n_2$ at depths $\delta_1$ and $\delta_2$ in different $Tv_i$ trees. 
    
        To complete the proof, we consider each node and verify that their out-edge in $G$ gives the node the best score, given the choices of other nodes.  By construction, every non-leaf node in $G$ has a full in-degree of $d$.  Hence, the action of any node $v$ in the game is to either have their current out-edge in $G$ or to direct the out-edge to one of the leaf nodes. If $v$ selects a leaf node $\ell$, note that by symmetry this node has score $s$ at least that of $v$. If $v$ is not a miner, then its distance to all miners will be one more than that of $\ell$, so the new score of $v$ is $s+1$. 
    
        If $v$ is a miner at depth $h$, we consider the change in its score as new\_score$(v)-$old\_score$(v)$ and prove this change is positive. Note first that $v$'s distance to itself and all miners in the sub-trees that point to $v$ remains the same, so a difference of 0. WLOG, assume $v$ was in $Tv_1$ and chose to connect to a leaf in $Tv_2$. Thus $v$'s distance to miners in $Tv_3$  has only increased by $k+1-h$.    
        Next, let $v'$ be the new ancestor of $v$ in $Tv_2$ at height $h$. Since $v$ and $v'$ were at the same height in $G$, by the symmetric property of $G$, they had the same cumulative distance to miners in $Tv_1$ and $Tv_2$ who are not in the sub-trees of $v$ or $v'$. Thus $v$'s score to these miners has increased by its distance to $v'$ which is again $k+1-h$.
        
        The only remaining miners to consider in the change in score is from $v$ to the miners in the sub-tree of $v'$ which may now be closer (i.e., take away from the change in score). In the original structure, $v$ was a maximum $2h+1$ from $v'$ ($2h$ if $v$ and $v'$ are in the same tree), thus these miners are now at most $k+h+1$ from $v$. If we consider only the sub-trees of $G$ whose roots are at depth $h$, let $X$ be the number of miners in each of these sub-trees. There are $(d-1)*d^{h-1}$ such sub-trees in each $S_i$. Then new\_score$(v)-$old\_score$(v)$ is at least the score difference for these sub-trees (since we are just under-counting miners who are now further away from $v$), we get that 
        $$ \text{new}\_\text{score}(v)-\text{old}\_\text{score}(v) \geq (k+1-h)\cdot \left[ 3 \cdot (d-1)\cdot d^{h-1} -2 \right]\cdot X  - (k+h+1)\cdot X.$$ 
        Let's assume this score difference is non-positive. Remembering that  $d\geq 2$, we can simplify the above inequality to be 
        $$2^{h-1}\leq (d-1)d^{h-1}\leq \frac{3k+h-3}{3k-3h+3}$$
        which is not satisfiable. Thus the score difference is strictly positive, meaning $v$ would not choose to change its out-edge.
    \end{proof}

We next extend Lemma~\ref{lem:nash-capped} to capped games with arbitrary out-degrees, establishing the existence of stable networks in much more general settings in Lemma~\ref{lem:nash_capped_general}.  At a very high level, the proof establishes stability of networks with a certain structure: a strongly connected core $G_1$ consisting of the miners, a hierarchical layered network $G_2$ with nodes in its top layer having directed edges to $G_1$, and a more loosely structured layered network $G_3$ with nodes in its top layer having directed edges to $G_2$ and nodes in its bottom layer having no incoming edges. 
 The nodes in the bottom layer of $G_3$ are farthest from the miners while the in-degree of every other node in the network equals the in-capacity, which ensures that the network is stable.  Figure~\ref{fig:proof_drawing}(b) illustrates this proof.
\begin{lemma}[Stable Networks in Capped Games with Arbitrary Out-Degree]
\label{lem:nash_capped_general}
For integers $k \ge 0$, $d \ge 2$, $m \le 2^k$, and $n \ge 2^{3k}$,  $\game(n, m, d, 2d)$ has a pure Nash equilibrium.
\end{lemma}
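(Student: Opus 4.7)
The plan is to exhibit a concrete graph $G = G_1 \cup G_2 \cup G_3$ realizing the three-tier architecture sketched before the lemma, then verify it is a Nash equilibrium by reducing every potential deviation to a single case: swapping a current out-edge for one targeting a node in the bottom layer of $G_3$. The core invariant is that every non-bottom node has its in-capacity saturated at $2d$, so any alternative out-neighbor for a deviator must sit in that bottom layer, which is by design the strictly worst place (farthest from all miners) to attach.

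First I would build $G_1$ on the $m \le 2^k$ miners as a strongly connected digraph of diameter $O(k)$ with each miner spending $d$ out-edges inside $G_1$ (e.g.\ a bounded-degree de Bruijn-style graph on at most $2^k$ nodes). Then I would construct $G_2$ as a layered graph of depth $O(k)$ whose top layer has out-edges into $G_1$; each node receives its full $2d$ in-edges from the layer immediately above and sends $d$ out-edges upward. The layer widths grow geometrically so that $G_2$ can absorb on the order of $2^{2k}$ non-miners while keeping every $G_2$ node within $O(k)$ hops of the miner core. Finally $G_3$ absorbs the remaining nodes: its top layer saturates the last in-capacity slots of $G_2$, each internal $G_3$ node has its $2d$ in-slots filled from above, and the bottom layer is left with in-degree $0$. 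The bound $n \ge 2^{3k}$ provides enough slack for the out-stub/in-slot matching to exist with every non-bottom in-capacity saturated exactly.

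For stability, let $v$ be any node and consider a deviation in which $v$ swaps an out-edge $(v,u)$ for $(v,u')$; dropping an edge without replacement only removes shortcuts and so is never strictly better, so a swap is the only interesting case. Since every non-bottom node is in-saturated, $u'$ must lie in the bottom layer of $G_3$ (or already be an out-neighbor of $v$, which is not a valid action). By construction the bottom layer sits strictly below every miner and every $G_2$ node in the distance hierarchy, so $dist_G(u',w) \ge dist_G(u,w)$ for every miner $w$, with strict inequality for the miners that $u$ previously mediated distance to. Since $v$'s other $d-1$ out-edges are unchanged, the alternative paths via them are unaffected, and a quantitative sum-of-distances comparison---parallel to the chain of inequalities at the end of Lemma~\ref{lem:nash-capped}, and again exploiting $d \ge 2$---shows the total score strictly worsens. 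This case analysis applies uniformly when $v$ is a miner in $G_1$, a non-miner in $G_2$, or a bottom node in $G_3$.

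The main obstacle I anticipate is the combinatorial bookkeeping: matching $nd$ out-stubs with the $2d$ in-slots of every non-bottom node \emph{exactly}, rather than only asymptotically, at the prescribed regime $n \ge 2^{3k}$, $m \le 2^k$, $d \ge 2$. A secondary subtlety is the quantitative deviation inequality for miners already adjacent to most of $G_1$, where the cost of swapping in a bottom-layer target is smallest; there I would pin distances against the fixed $O(k)$ diameter of $G_1$ and exploit that the bottom layer is at least $\Omega(k)$ below the miners to recover a strict score increase.
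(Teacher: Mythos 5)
Your construction and argument are essentially the paper's: a three-component layered network (miner core $G_1$, structured middle $G_2$, loose bottom $G_3$) in which every node outside the bottom layer has its $2d$ in-capacity saturated, so the only possible new out-neighbors are bottom-layer nodes, which are the farthest from the miners; this is exactly how the paper proves the lemma. One correction, though: your claim that any such swap \emph{strictly} worsens the deviator's score is false in general (e.g., a bottom-layer node redirecting one of its $d\ge 2$ out-edges to another bottom-layer node, or any node swapping a single edge while its remaining edges still realize all its shortest paths, leaves the score unchanged), and the Lemma~\ref{lem:nash-capped}-style quantitative computation you anticipate is neither needed nor available; a Nash equilibrium only requires that no deviation strictly improves the score, and that follows immediately from the distance bounds (every node is within $\ell+\delta_v$ of miner $v$ while any path through a bottom node has length at least $1+\ell+\delta_v$). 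Relatedly, the paper gets these clean uniform bounds by routing $G_2$ through a butterfly so that every bottom node of $G_2$ reaches every top node in exactly $k$ steps; with your ``arbitrary'' layer connections and only $O(k)$-hop guarantees you would still have to ensure the bottom layer's depth dominates the spread of distances to the core, which you only gesture at in your final remark.
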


\begin{proof}
  We consider a network $G$ that has three components. The first component $G_1$ is a graph with its vertex set being all the miners, each node with an out-degree of $d$ to other miners, leaving $d2^k$ incoming capacity. 
  
  The second component $G_2$ consists of $k+1$ levels of nodes, with level $i \ge 0$ consisting of set $L_i$ of $2^{k+i}$ nodes.  For each $i$, we partition the $2^{k+i}$ nodes into $2^{i}$ groups of $2^k$ nodes each; we label these groups $S_{i,0}, \ldots, S_{i,2^{i+1}-1}$.
  For any non-negative integer $j_{k-1} < 2^{k}$, the node groups $S_{i,j_i}$, $j_i = \lfloor j_{i+1}/2 \rfloor$, $-1 \le i < k$, include a butterfly network.  That is, 
  the edges between $S_{i,j_i}$ and $S_{i+1,j_{i+1}}$, with $j_i = \lfloor j_{i+1}/2 \rfloor$ include the $i$th level of an $2^k$-input butterfly network.  This can be set up since $d \ge 2$. 
  This ensures that every node in $L_{k}$ has a path of length $k$ to every node in $S_{1, 0}$.   (Instead of using a butterfly network, we could also have random biregular bipartite graphs between consecutive levels, achieving the same distance property between $L_{k}$ and $S_{1,0}$ with high probability.) We can then place all remaining edges from each $L_i$ arbitrarily to nodes in $L_{i-1}$. We then have all edges from $S_{1,0}$ go into $G_1$, thus all nodes at level $L_{k}$ have the same score as all their paths to the miners converge at $L_0$.


 The final component $G_3$ consists of at least $k$ levels of nodes, which we number from $k$.  For level $i \ge k$, the set $L_i$ consists of $2^{k+i}$ nodes, with the last level $\ell$ possibly having fewer than $m 2^\ell$ nodes.  There is an arbitrary bipartite graph connecting $L_i$ to $L_{i-1}$ with $d$ outgoing edges for each node in level $i$ and $2d$ incoming edges for each node $L_{i-1}$.  

 We now argue that $G$ is a stable network.  We make three observations about distances, which are critical in establishing that no node benefits by changing any of its outgoing edges.
  
    First, for any $0 \le i < k$, for any node $u$ in level $i$ and miner $v$, the distance  between  $u$ and $v$ is at most $2k - i + \delta_v$, where $\delta_v$ is the minimum, over all nodes in $L_0$, of the distance between the node and $v$.  This follows from the fact that any node in $L_k$ has a shortest path to every node in $L_0$; consequently, any node in $G_2$ has a path to $v$ by concatenating a subpath of length $k-i$ to the closest node in $L_k$, followed by a subpath of length $k$ to the node in $L_0$ that is closest to $v$, followed by a subpath of length $\delta_v$ to $v$.
  
  Second, for any $i \ge k$, for any node $u$ in level $i$ and miner $v$, the distance between $u$ and $v$ is exactly equal to $i + \delta_v$.  This holds since any node in $L_i$ has a path to $v$ by concatenating a subpath of length $i-k$ to an arbitrary node in $L_k$, 
  followed by a subpath of length $k$ to the node in $L_0$ that is closest to $v$
  followed by a subpath of length $\delta_v$ to $v$.
  
Finally, we note that for any leaf node $u$ and any miner $v$, the distance between $u$ and $v$ is at least $\ell + \delta_v$.  To complete the proof, we observe that for any $u$ in $G$ and any miner $v$, the distance between $u$ and $v$ is at most $\max\{i, 2k - i\} + \delta_v \le \ell + \delta_v$; consequently, replacing any out-edge with an out-edge to a leaf will not decrease its cost.  Therefore, $G$ is a stable network.
\end{proof}

Lemma~\ref{lem:nash_capped_general} can be further extended to allow for more general relationships between the in-degree and the out-degree (e.g., when $d_{in} = cd$ for any integer constant $c \ge 3$).  We do not know, however, if stable networks exist for all choices of $n$, $m$, $d$, and $d_{in}$.  We also note that the proofs of Lemmas~\ref{lem:stable-nash} through~\ref{lem:nash_capped_general} are all existence proofs in the sense that they present \emph{specific} families of stable networks.  There are, however, many other families of stable networks, as suggested by the proofs.  In particular, the proof of Lemma~\ref{lem:nash_capped_general} presents a layered structure for stable networks that accommodates a variety of different topologies.

\subsection{\greedy}
\label{sec:greedy}
    Our results in Section~\ref{sec:ideal} indicate that there exist Nash equilibria in the full-information setting, where every node knows the current network topology at all times.
    In practice, however, nodes operate with information only about their own peers. We consider now the \greedy protocol defined in Section~\ref{sec:model} and analyze networks that evolve in this process.  Instead of equilibria, we consider miner and network stability as defined in Section~\ref{sec:model}. We focus on two questions: does there always exist a stable topology? what properties do stable topologies have? Table~\ref{tab:stab-results} summarizes our results and the impact of the \tiebreak on the ability of the network to stabilize.  We begin by defining conditions for miner stability.


    \begin{table}
        \centering
        \begin{tabular}{c|c|c|c|c}
            \hline 
             & \multicolumn{2}{c|}{\bfseries \underline{~$m<2d$~}}
            &\multicolumn{2}{c}{\bfseries \underline{~$m\geq2d$~}}\\  & miner-stable & network-stable & miner-stable & network-stable \\ \hline
            Global-Ordering/LIFO & \cmark ~Prop.~\ref{prop:miner-clique} & \cmark ~Lem.~\ref{lem:stable-top} & \cmark ~Lem.~\ref{lemma:global-stable} & \xmark ~Lem.~\ref{lem:no-stable-top}\\ \hline 
            Random/FIFO & \cmark ~Prop.~\ref{prop:miner-clique} & \xmark ~Lem.~\ref{lem:ran-no-stable}& \xmark ~Prop.~\ref{prop:imposs-ran-stable}& \xmark ~Lem.~\ref{lem:ran-no-stable} \\ \hline 
        \end{tabular}
        \caption{Summary of our results of Section~\ref{sec:greedy} for the existence and impossibility of miner-stable and network-stable topologies in the unbounded \greedy game. Note for network-stability, we assume $n\gg d$, as discussed in their proofs.}
        \label{tab:stab-results}
    \end{table}

    \begin{proposition}[Clique Miner Stability for Small $m$] 
            If $m < 2d$ there exists a miner-stable topology for any \tiebreak.
        \label{prop:miner-clique}
    \end{proposition}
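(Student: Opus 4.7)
The plan is to exhibit an explicit miner-stable topology: one in which the miners induce an undirected clique, with each miner having at most $d-1$ out-edges to other miners. Since $m<2d$ implies $\lceil(m-1)/2\rceil\le d-1$, the complete graph $K_m$ on the miners admits an orientation (\eg a balanced cyclic tournament) in which every vertex has out-degree $a_i\le d-1$. Using the unbounded in-capacity assumption, I would direct each miner's remaining $d-a_i\ge 1$ out-edges to arbitrary non-miners, and fix the non-miner sub-graph arbitrarily so that every non-miner has some path to the clique; the degenerate case $n=m$ is simpler still, since each miner has out-degree $a_i<d$ and step~(3) never fires.

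The crux is a strict score separation. In this topology every miner is at undirected hop-distance $1$ from every other miner, attaining the minimum possible miner score $(m-1)/m$, while every non-miner has score at least $1>(m-1)/m$. I would then verify, by induction on the round, that each miner $i$ preserves its $a_i$ miner out-edges. In step~(1), $i$ cannot draw an exploratory edge to any other miner because $e_{i,j}=1$ already holds for every miner $j$ in the clique; the exploratory edge (if any) must therefore land on a non-miner. After step~(1), $i$'s out-neighbors consist of $a_i$ miners scoring $(m-1)/m$ together with at least one non-miner scoring $\ge 1$, so the maximum-score class consists entirely of non-miners. In step~(3), $i$ drops one of those non-miners, and every miner-miner edge survives the round; applying the same argument at both endpoints of each miner-miner edge completes the induction.

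The \tiebreak is irrelevant because ties among maximum-score out-neighbors only ever occur among non-miners, so whichever non-miner is chosen by the rule leaves the miner clique untouched. The only non-routine verification is the orientation lemma, namely that $K_m$ admits an orientation with maximum out-degree at most $d-1$ when $m<2d$; this is the one step of the proof that requires an actual combinatorial argument (via a balanced tournament on $m$ vertices), and everything else reduces to a direct comparison between the fixed score $(m-1)/m$ and the lower bound $1$ for non-miner scores, a comparison that is uniform across all four \tiebreaks.
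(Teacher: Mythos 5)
Your proposal is correct and takes essentially the same route as the paper: orient a miner clique so that each miner has at most $d-1$ out-edges to other miners (the paper asserts this feasibility where you spell out the balanced-tournament orientation), then use the strict score gap $\frac{m-1}{m} < 1$ to conclude that miners only ever drop non-miner peers and can never form new miner--miner edges, so the clique persists under every \tiebreak.
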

    \begin{proof}
            Since $m< 2d$, there are enough edges between miners such that they can connect in a clique with each miner having at most $d-1$ outgoing edges to any other miner. Each miner $i$ has a distance $0$ to itself and a distance $1$ to each of the other miners, thus their weighted distance to miners in the clique is $w_i = \frac{m-1}{m}$.  Any node $i$ drops a connection to the peer with the largest score among the peers of $i$. Any non-miner node $h$ must have a distance of at least 1 to each miner, thus $score(h) \geq 1 > \frac{m-1}{m}$, thus a non-miner score will not tie with a miner score, therefore miners will only ever drop a peer who is not a miner. Since all miners are already connected, a miner will never establish a new connection to another miner. 
    \end{proof} 

    \begin{lemma}[Global-Ordering and LIFO Miner Stability]
            For all $m$ and $d$, there exists a miner-stable topology for the cases where the \tiebreak is Global-Ordering or LIFO. 
        \label{lemma:global-stable}
    \end{lemma}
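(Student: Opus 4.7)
The plan is to exhibit an explicit topology that is miner-stable under both Global-Ordering and LIFO. Given Proposition~\ref{prop:miner-clique} handles $m < 2d$, I focus on $m \geq 2d$; the construction below applies uniformly whenever $m \geq d$. I would partition the miners into a core $C = \{1, \ldots, \min(d, m)\}$ and non-core $\{d+1, \ldots, m\}$. Within $C$ I place a forward tournament (directed edge $(i, j)$ for every $i < j$ in $C$). Each non-core miner $j$ has all $d$ of its outgoing edges directed to the core, namely $(j, i)$ for all $i \in C$. Each core miner $i$ fills its remaining $i$ out-edges with directed edges to $i$ distinct ``designated'' non-miners, chosen to avoid creating bidirectional edges. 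Non-miners' remaining out-edges are distributed so that each non-miner is at undirected distance $1$ from some core miner.

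Next I would establish the key distance and score properties. Every pair of miners lies at undirected distance at most $2$: core-core and non-core-core pairs at distance $1$ (by the tournament and the non-core $\to$ core edges), and non-core-non-core pairs at distance $2$ via the core. Consequently, every core miner has the minimum miner score $(m-1)/m$, every non-core miner has score $(2m - d - 2)/m$, and every non-miner has score at least $1$ (since $\textrm{dist}(h, v) \geq 1$ for non-miner $h$ and miner $v$), which strictly exceeds $(m-1)/m$.

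For stability I would argue by a single-round case analysis on each miner $u$. If $u$ is a non-core miner, its out-peers are the $d$ core miners; the peer $w$ added in step~1 is either another non-core miner (score $(2m-d-2)/m > (m-1)/m$) or a non-miner (score $\geq 1 > (m-1)/m$). In either case $w$ is the unique maximum-score peer and is dropped in step~3, preserving the core out-peers. If $u$ is a core miner, every other miner is already connected to $u$ (via within-core tournament or incoming edges from non-core miners), so $w$ must be a non-miner; the within-core out-peers have the minimum score while all non-miner out-peers of $u$ have strictly greater scores, so the dropped peer is always some non-miner (chosen by LIFO or Global-Ordering among tied non-miner peers), leaving the within-core edges intact. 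Induction on rounds then yields miner-stability. The main obstacle is verifying the construction's feasibility (namely, $n - m \geq d(d+1)/2$ for distinct designated non-miners and the avoidance of bidirectional edge conflicts), which is routine for sufficiently large $n$; the tie-breaker in this construction is invoked only among non-miner peers, and both Global-Ordering and LIFO suffice to preserve stability of the miner-miner subgraph.
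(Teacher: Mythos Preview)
Your construction is essentially the paper's: a core of miners forms an undirected clique (you orient it as a tournament), all non-core miners direct their stable outgoing edges to the core, and the core miners---being adjacent to every miner---attain the minimum score $(m-1)/m$.  The paper takes the core to have size $d-1$ rather than $d$; since each node maintains exactly $d-1$ out-edges at the end of a round (see the definition of network-stability and the remark following it), your non-core miner with $d$ fixed out-edges to the core is off by one.  This is easy to repair.

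The more substantive gap is in your dropping argument for a non-core miner $u$.  You assert that the newly added peer $w$ is the \emph{unique} maximum-score out-peer of $u$ and hence is dropped without reference to the tie-breaker, and you conclude that ``the tie-breaker in this construction is invoked only among non-miner peers.''  This fails in exactly the scenario the paper isolates as \mineventi (with $i = w$): if $w$ is another non-core miner and in step~(1) every miner not yet adjacent to $w$ connects to it, then at step~(2) $w$ has score $(m-1)/m$, tying with all of $u$'s core out-peers.  What actually saves the construction is the tie-breaker itself: under Global-Ordering the core consists of the lowest-indexed miners so $w$ loses the tie; under LIFO $w$ is the newest edge so $w$ is dropped.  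The paper makes this step explicit (``either miner could at best have the same score, but worst ranking as the first $d-1$ miners''), and it is precisely why the lemma holds for Global-Ordering/LIFO but not for FIFO/Random (cf.\ Proposition~\ref{prop:imposs-ran-stable}).  As written, your argument would go through for any \tiebreak, which cannot be correct.
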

    \begin{proof}
            For $m< 2d$, by Proposition~\ref{prop:miner-clique} a clique is miner-stable for any tie-breaking rule.         
            For all other values of $m$ and $d$, we first examine the Global-Ordering rule. Suppose the first $d-1$ miners (by global ordering) connect in a clique and all other miners have their $d-1$ edges to the first $d-1$ miners. The first $d-1$ miners are therefore connected to all miners and have a minimum score $\frac{m-1}{m}$. In a round, these $d-1$ miners will not drop one another since they have the lowest score and priority ranking, and all other miners will not drop these first $d-1$ miners for the same reason. Since the first $d-1$ miners are connected to all miners, they will never develop a new connection to a miner. In a round, two miners $i,j$ could connect to each other via their $d$-th outgoing edge. Since either miner could at best have the same score, but worst ranking as the first $d-1$ miners, this new edge will not replace the existing $d-1$ out-going edges to the first $d-1$ miners. 

            Next, consider the LIFO \tiebreak. Consider the graph where some set of $d-1$ miners connect in a clique, and all other miners have outgoing edges to these miners. Let the non-clique miners have their $d$-th edge to some non-miner. Note that the clique does not have any outgoing edges to the non-clique miners and the non-clique miners have no edges among themselves. Since all edges between miners are going into the clique, for any new edge to replace one of these edges, it would need to have an equal score but would be the newest edge, by LIFO it would be dropped. The clique is connected to all miners so it can't make a connection to any other miner. Miners not in the clique thus have $d-1$ stable edges. 
    \end{proof}

    We note that the networks presented in the proof of Lemma~\ref{lemma:global-stable}  are not stable under FIFO (or Random) as there is always a chance all miners connect to some non-clique miner and this miner would always (or with some probability) replace an existing miner in the clique. We extend this logic to the following claim. 

    \begin{proposition} [Impossibility of Miner-Stability under FIFO or Random for Large $m$] 
            If $2d\leq m$, and ties are broken by FIFO or randomly, there is no miner-stable topology. 
        \label{prop:imposs-ran-stable}
    \end{proposition}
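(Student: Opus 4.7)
The plan is to derive a contradiction from assuming a miner-stable topology $G$ exists under FIFO or Random with $m \ge 2d$. I establish two structural conditions on $G$ and then combine them with a counting argument.

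First, I argue the miner subgraph of $G$ must be a clique. Suppose not, and let $i, j$ be miners with $e_{i,j}=0$. Invoke \mineventi (positive probability in the unbounded setting) so that $i$'s score becomes $(m-1)/m$, the minimum possible. For miner-stability the freshly added edge $(j,i)$ must be dropped at step~3 for $j$, but $i$'s score is the minimum, so $i$ can only be the max-score out-neighbor of $j$ if tied with every other. Case split: (a) if $j$'s original out-set $T_j$ contains a non-miner $v$, then $v$'s score is $\ge 1 > (m-1)/m$ and so a non-miner is dropped instead, leaving the new miner-miner edge $(j,i)$ intact; (b) if $T_j$ is all miners but some $t \in T_j$ is not full, then $t$'s score exceeds $(m-1)/m$ and $t$ is dropped, destroying the stable miner-miner edge $(j,t)$; (c) if all of $T_j$ are full miners, a $(d+1)$-way tie arises --- under FIFO the newest edge ($i$) is kept so an older miner-miner edge is dropped, and under Random an existing miner-miner edge is dropped with probability $d/(d+1)>0$. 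Each case violates miner-stability, so every miner pair must be connected in $G$.

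Second, I show no miner can have all $d$ out-edges going to other miners. Suppose $S$ is such a ``saturated'' miner. I construct a positive-probability two-round event breaking stability: in round one, every non-miner $v$ not already out-connected to $S$ picks $S$ as its exploratory (positive probability, as $S$ is among $v$'s non-connected nodes), and each retains $S$ in step~3 --- under FIFO the newest addition is kept; under Random $S$ is retained with probability at least $d/(d+1)$. Non-miners already out-connected to $S$ can be required to pick a non-miner as their exploratory; the resulting step~3 drops a non-miner (not $S$, since $S$ has minimum score), preserving $S$ as an out-edge. After round one, $S$ is undirectedly adjacent to every node, so step~1 for $S$ in round two finds no valid target; $S$ is left with exactly $d$ out-edges to full miners, all tied at score $(m-1)/m$, and step~3 is forced to drop one --- changing a miner-miner edge and contradicting miner-stability.

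Combining the two conditions, the miner subgraph is a clique on $m$ vertices with $\binom{m}{2}$ directed edges (bi-directional edges disallowed), yet each miner contributes at most $d-1$ such edges. Hence $m(m-1)/2 \le m(d-1)$, giving $m \le 2d-1$ and contradicting $m \ge 2d$. The main obstacle is the second condition: one must carefully argue that the multi-round random sequence driving $S$ to full adjacency has positive probability. The enabling observation is that miners have strictly lower scores than non-miners in the hypothetical stable $G$, so a non-miner has no score-based incentive to drop $S$ once acquired, and the tie-breaking rules either keep newly-added $S$ (FIFO) or give $S$ a survival probability at least $d/(d+1)$ (Random).
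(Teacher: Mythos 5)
Your core mechanism is exactly the paper's: with positive probability all miners connect to a miner that is missing some miner adjacency, that miner attains the minimum score $\frac{m-1}{m}$, and under FIFO or Random the freshly formed miner--miner edge survives the tie-break (or forces an existing miner--miner edge to be dropped), contradicting miner-stability. The paper's proof is just this one event: since $m\ge 2d$ and each miner can hold at most $d-1$ stable outgoing edges (the $d$-th slot is the exploratory edge replaced every round, as the stability definition and the paper's counting, e.g.\ the $m(d-1)$ budget in Lemma~\ref{lem:no-stable-top}, make explicit), the miners cannot be pairwise connected in a stable topology, so such a miner exists and the event finishes the argument. You instead decompose into (i) the miner subgraph must be a clique --- your cases (a)--(c) are the same event argument, essentially Proposition~\ref{prop:only-stable-clique}; (ii) no miner has all $d$ out-edges to miners, via a two-round saturation construction; (iii) a counting contradiction. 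Step (ii) buys robustness against a reading of step~3 in which a miner holding exactly $d$ out-edges to minimum-score miners is never forced to shed one (its exploratory edge to a non-miner always loses the comparison); your saturation of $S$'s adjacencies closes that configuration. But under the paper's own accounting of at most $d-1$ stable outgoing edges per node, (ii) is redundant and (iii) is immediate; moreover, your round-two endgame invokes precisely the drop rule firing for a node with $d$ out-edges that cannot add a new one, i.e.\ the same protocol fact that makes (ii) unnecessary. Two minor caveats in your added steps: in case (b) the claim that $t$'s score stays above $\frac{m-1}{m}$ needs the (positive-probability, or merge-into-(c)) proviso that $t$ does not itself become adjacent to all miners in that round; and in (ii) the non-miners already connected to $S$ must actually have a non-miner exploratory target available, which the proposition's hypotheses do not guarantee --- the paper's single-round argument needs no such condition. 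None of this is fatal; your proof reaches the right conclusion by a longer route whose extra machinery the paper's one-event argument avoids.
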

    \begin{proof}
            Assume there is some miner-stable topology. This means all edges between miners are either stable or will always be dropped at the end of the round (\ie are the maximum-score edge after the \tiebreak). Since $2d\leq m$, miners cannot form a \textit{stable} clique, thus there is some miner $m_i$ that is not connected to all miners. There is some probability that in the next round, all miners connect to this miner with at least one miner $m_j$ having a new out-going edge to $m_i$. Miner $m_i$ now has score $\frac{m-1}{m}$; so, $m_j$ will not drop the edge to $m_i$ since it is either smaller than some existing edge or ties with all edges and by FIFO (or random) \tiebreak replaces some edge (has a probability of replacing some edge). 
    \end{proof}

    We now derive conditions under which stable networks exist and conditions where stability is impossible. For the following cases, we generally assume $n\gg d$, and $n>2d^2$ for proofs of instability\footnote{Note that this is true of Bitcoin (where $d=8$ and $n>10000$) and Ethereum ($d\approx 16$ and $n>2000$).}. Recall that since all nodes have at least $d-1$ outgoing edges at the start of the protocol, nodes always have at least $d-1$ outgoing edges during any portion of the protocol.

    \begin{lemma}[Stable Topology for Small $m$]
            For all $m< 2d$ with \tiebreaks LIFO or global ordering, there exists a stable topology.
        \label{lem:stable-top}
    \end{lemma}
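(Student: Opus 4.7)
My plan is to exhibit an explicit stable topology $G$ that extends the miner clique of Proposition~\ref{prop:miner-clique}, and to verify stability by a single-round case analysis over the \greedy protocol.

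For the construction, I would orient the miner clique so that each miner $i$ has $x_i \leq d-1$ in-clique out-edges (possible since $m < 2d$ implies $\binom{m}{2}\leq m(d-1)$), pick a set $H$ of ``hub'' non-miners with $|H| \geq d$, and assign $d-1$ designated out-edges per node as follows: miner $i$ keeps its $x_i$ in-clique edges and adds $d-1-x_i$ out-edges into the top $d-1-x_i$ hubs of $H$; when $m\geq d-1$ each non-miner $v$ takes $d-1$ out-edges to the top $d-1$ miners; when $m<d-1$ each non-miner $v$ takes $m$ out-edges to all miners plus $d-1-m$ out-edges into the top hubs in $H\setminus\{v\}$. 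Under LIFO these designated edges are the initial (oldest) edges; under Global-Ordering I rank miners first, then hubs, then the remaining non-miners. A critical structural feature of the construction is that funneling every miner's non-clique out-edge into the same top-ranked hubs endows those hubs with many miner in-edges, giving them a strictly lower score than any other non-miner.

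For the verification, I would show that in each round the exploratory edge added in step (1) is precisely the edge dropped in step (3). Edges into miners score $(m-1)/m$ and are never the maximum unless the entire out-neighborhood consists of miners, in which case both LIFO (drop newest) and Global-Ordering (drop the lowest-ranked, which must target a non-top miner) discard the exploratory. For a miner $i$, the exploratory always goes to a non-miner, and either that non-miner has strictly higher score than the designated hubs (unique max, dropped) or it ties with them and loses the tie-break (being newer under LIFO, or lower-ranked within $H$ under Global-Ordering). For a non-miner, the exploratory either hits an unconnected miner (only possible when $m\geq d$, tying with the designated top miners) or another non-miner (either a unique max, or, when $m<d-1$, tied at score $1$ with the designated hubs), and in every subcase the tie-break discards the exploratory.

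\textbf{Main obstacle.} The delicate step is engineering the miner-to-hub flows so that every miner's designated hub edges are strictly preferable in score to the exploratory, or at least consistently win ties. The funneling trick, where all miners target the same top hubs, is what produces the strict score gap required for the unique-max argument. A secondary issue is the $m<d-1$ subcase, where non-miners themselves need stable out-edges into $H$; the condition $|H|\geq d$ guarantees each hub still has $d-1-m$ other hubs it can point to, and the tie-break argument inside $H$ mirrors the one used for miner-to-hub edges. Once these structural features are set up, the round-by-round verification reduces to the few cases above.
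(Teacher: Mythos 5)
Your construction in the regime where non-miners can reach every miner with their own out-edges (your $m<d-1$ subcase, the paper's $m<d$ case) is essentially the paper's topology — miner clique, every node adjacent to every miner, leftover out-edges funneled to a fixed set of top-ranked non-miners — and that part is sound. The genuine gap is in the range $d\le m\le 2d-2$, where miners have leftover out-edges that must land on non-miners. Your verification tacitly assumes the designated hubs are permanently (weakly) the lowest-scored non-miners, so a miner's exploratory target either strictly exceeds them or ties and loses the \tiebreak. But scores are recomputed in step (2) \emph{after} the random edges of step (1) are added, and by \mineventi any non-miner can attain the minimal non-miner score of exactly $1$ in a given round. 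So a designated miner-to-hub edge is safe only if that hub has score exactly $1$, i.e., is adjacent to \emph{all} $m$ miners; a hub score of, say, $1+\varepsilon$ gets dropped with positive probability, and "strictly lower than any other non-miner" is neither an achievable invariant nor sufficient.

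Your prescription does not deliver score-$1$ hubs. A hub has only $d-1$ out-edges, all to the top $d-1$ miners, so each of the remaining $m-d+1$ miners must point to it; this forces every non-top miner to point to \emph{every} hub designated by \emph{any} miner, an exact balance condition on the clique orientation (all non-top miners having identical leftover counts) that you never impose and that fails for many $(m,d)$ under your "top $d-1-x_i$ hubs" rule. Worse, top-ranked miners with $x_i<d-1$ are told to point to hubs that already point to them, which the model forbids (no bidirectional pairs), so those leftover edges have no valid hub target at all and would end up on non-miners of score $>1$ — exactly the unstable situation above. The paper's proof resolves precisely this point: it chooses a specific number $n'$ of score-$1$ non-miners whose out-edges go \emph{into} the clique so that, combined with the miners' leftover out-edges, each such non-miner is adjacent to all miners, with $n'$ pinned down by an edge-counting identity (and $n'=0$ at $m=2d-1$). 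Without an analogous argument that your hubs attain score exactly $1$ and that the edge counts and orientations are consistent, the stability of the miners' non-clique designated edges does not follow.
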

    \begin{proof}
        For $m<d$, the following topology is stable: Miners connect in a clique thus all have score $\frac{m-1}{m}$. All non-miners connect to all miners, either via incoming or outgoing edges. With global ordering, we have that in the miner clique, a miner with $k$ outgoing edges to other miners has $d-1-k$ outgoing edges to the first $d-1-k$ non-miners by the global order.  Every non-miner connects to all $k$ miners it is not connected to and similarly has its remaining $d-1-k$ outgoing edges to the first $d-1-k$ (or if no global ordering, to any) of the remaining non-miners. Thus all non-miners have the same score, and no non-miner will drop an edge to a miner. We then set everyone's $d$-th edge, note miners never have a $d$-th edge since they are already connected to everyone. For non-miners, any new edge will be to a node not in the first $d-1-k$ non-miners defined above (or any non-miner for LIFO), and therefore will never win the tie-breaker.

        For $d\leq m < 2d$, we present a stable topology with all miners connecting in a clique using $(m-1)/2$ out-going edges and all remaining outgoing edges from the miners going to some (top-ranked) $n'$ non-miners who have a score of 1 and all outgoing edges into the clique. Thus the miners have a miner-stable topology and the $n'$ non-miners will keep their edges to the miners as long as the miners keep their edges to these non-miners and vice versa. All non-miners that are not part of the $n'$ non-miners connect to the top $d-1$ miners, thus the
        top $d-1$ miners have no $d$-th edge and the rest have an edge to some non-miner that will at best have a score of 1 but be higher-ranked than the $n'$ non-miners. The $m$ miners use $(m-1)/2$ outgoing edges each to connect to the clique, leaving $d-1-(m-1)/2$ to connect to the $n'$ non-miners. The $n'$ non-miners have a cumulative $n'(d-1)$ edges, that need to be enough to connect them to all $m$ miners, thus $n'm=n'(d-1)+ d-1-(m-1)/2$, yielding
        $n' = \frac{(2d-1)-m}{2}$.  Note that when $m=2d-1$ the miners use all their outgoing edges to connect in the clique so $n'=0$.
    \end{proof}

    Next, we show how all remaining cases cannot reach stability. 

    \begin{lemma}[LIFO and Global Ordering Tie-Breaking Instability for Large $m$]
            For all $m\geq 2d$, with \tiebreaks LIFO or global ordering, there exists no stable topology for $n\gg d$.
        \label{lem:no-stable-top}
    \end{lemma}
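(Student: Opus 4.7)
The plan is to argue by contradiction. I would assume $G$ is a network-stable topology for some $m \ge 2d$ under LIFO or Global-Ordering, and define $S$ to be the set of miners that are connected to every other miner using only the $d-1$ stable outgoing edges of each node; equivalently, $S$ consists of the miners whose score remains $(m-1)/m$ in every round regardless of how exploratory edges are chosen. The goal is then to derive two incompatible bounds on $|S|$.

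The heart of the argument is to show that every stable out-edge $(v,u)$ in $G$ must satisfy $u\in S$. I would assume otherwise, in which case $u$'s stable-structure score strictly exceeds $(m-1)/m$: either $u$ is a non-miner (score $\ge 1$) or $u$ is a miner with at least one other miner at stable distance $\ge 2$, giving score $\ge ((m-2)\cdot 1 + 1\cdot 2)/m = 1$. Since $m\ge 2d$, there is a miner $m^*$ that is not currently a peer of $v$. I would then invoke \mineventi applied to $m^*$ (which has strictly positive probability under unbounded in-degree), while simultaneously conditioning on (i) $v$'s own step-1 exploratory choice landing on $m^*$ and (ii) no exploratory edge added that round shortening the stable distance witnessing $u\notin S$; each of these has positive probability and so does their conjunction. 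In the resulting round, $m^*$ attains score $(m-1)/m$ while $u$ strictly exceeds it, so the argmax of $v$'s out-neighbors consists entirely of peers scoring strictly above $(m-1)/m$ and does not include $m^*$. Under either LIFO (newest among tied maxima is dropped) or Global-Ordering (lowest-ranked among tied maxima is dropped), the dropped edge is then a stable out-edge of $v$, contradicting stability.

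The remaining step is combinatorial. Since each node has $d-1$ stable out-edges and by Step~1 all of them land in $S$, in particular each $u\in S$ has to place its $d-1$ stable out-edges into $S\setminus\{u\}$, which forces $|S|\ge d$. Next I would suppose $|S| < m$ and fix any miner $m_0\notin S$: for every $u\in S$, the stable connection between $u$ and $m_0$ cannot be supplied by an out-edge from $u$ (those land in $S$, but $m_0\notin S$), so it must come from a stable out-edge $m_0\to u$. Hence $S$ lies inside $m_0$'s set of $d-1$ stable out-targets, giving $|S|\le d-1$ and contradicting $|S|\ge d$. The only remaining case is $|S|=m$, where every pair of miners is stably connected; since the model forbids bidirectional edges, the $m(d-1)$ miner-to-miner stable out-edges would need to cover all $\binom{m}{2}$ undirected pairs, forcing $m\le 2d-1$, which again contradicts $m\ge 2d$.

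The main obstacle is the first step: I have to argue that a single round can simultaneously trigger the Min-Score-Event for $m^*$, route $v$'s own exploratory to $m^*$, and preserve the witness that keeps $u$ far from a particular miner --- all with strictly positive probability. This is where the assumption $n\gg d$ (together with the unbounded in-degree) matters, as it guarantees abundantly many non-peer targets at every round and makes the compound adversarial event nondegenerate. Once Step~1 is locked in, the bounds on $|S|$ reduce to straightforward counting.
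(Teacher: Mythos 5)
Your overall strategy (force every stable out-edge into a core $S$ of minimum-score miners, then count) is in the right spirit and your final counting is clean, but the crucial Step~1 has a genuine gap. The claim ``since $m\ge 2d$, there is a miner $m^*$ that is not currently a peer of $v$'' only accounts for $v$'s $d$ outgoing edges; peers also include in-neighbors, and in the unbounded in-degree setting a node can be adjacent to \emph{all} miners through incoming edges. Such nodes necessarily exist in any candidate stable topology (e.g.\ the $d-1$ miners that Proposition~\ref{prop:first-d-1-stable} forces to be connected to every miner), and for them your threat construction collapses: their exploratory edge can only land on a node whose score is at least $1$, so the best achievable ``threat score'' is $1$, not $\frac{m-1}{m}$. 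Consequently, for such a node $v$ the most the replacement argument yields is that its stable out-edges go to peers whose score can never exceed $1$ --- which includes score-$1$ non-miners and, more importantly, miners with score exactly $1$ (adjacent to all miners but one) that are \emph{not} in $S$. Under LIFO or Global-Ordering a tie at score $1$ is resolved in favour of the existing (or higher-ranked) edge, so these edges are not refuted by your event; this is precisely why the paper, after making the analogous replacement arguments, still has to confront a residual family of configurations (a clique of minimum-score miners plus at most one score-$1$ non-miner and exactly two score-$1$ miners) and eliminate it with an explicit edge-counting computation showing $m(d-1)$ outgoing edges cannot realize it when $m\ge 2d$.

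Since both of your bounds on $|S|$ rest on Step~1, the contradiction does not yet go through. To repair the argument you would need to (i) split according to whether the source node is adjacent to all miners, using a non-miner version of \mineventi (threat score $1$) in the adjacent case, and (ii) add a counting step that rules out stable edges into score-$1$ targets --- essentially the paper's final case analysis ($n'\le 1$ score-$1$ non-miners, $m''=2$ score-$1$ miners, and the unsatisfiable equation $(2d-2)=\frac{m'(m'+3)}{m'+2}$). Your closing combinatorics ($|S|\ge d$ versus $|S|\le d-1$, and $\binom{m}{2}\le m(d-1)$ forcing $m\le 2d-1$) is valid conditional reasoning, but it only applies once that strengthened version of Step~1 is in place.
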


    \begin{proof}
        In this proof we essentially show that in a stable topology, 1.) there is a subset of miners who have the minimum score and all nodes must have all of their outgoing edges to this subset for the edge to be stable 2.) there are not enough miners in this subset so that all their outgoing edges are to each other, so some of their edges would be unstable.

        We know from Proposition~\ref{prop:first-d-1-stable}, any miner-stable topology must have that at least $d-1$ miners are connected to all miners (i.e., have the minimum score of $\frac{m-1}{m}$). We first argue that in a stable topology, all non-miners are also connected to some $d-1$ of these minimum-scored miners. If some non-miner is not, then in the next round it could connect to one of these miners it is not connected to and this connection will replace an existing connection (since less than $d-1$ of them have the minimum score). 
            
        To prove stability, we need to show that there does not exist an edge to a non-miner from some node which could be replaced by a new lower-scored edge to a non-miner in some round (i.e., there does not exist an unstable edge \textit{to} a non-miner). Since $n \gg d$, all non-miners cannot connect to all miners, so it must be that any stable edge from a miner to a non-miner must be to a non-miner with a score of 1 (for global ordering, these non-miners are the top-ranked ones), say there are $n'$ such non-miners. Equally, there are not enough edges for all non-miners to connect to each other. Consider a stable edge from non-miner $n_i$ to non-miner $n_j$. It must be that $n_i$ has a score of 1 (i.e., there does not exist a miner that can replace $n_j$ or another peer in the next round), and either $n_j$ has a score of 1 (otherwise a non-miner could get the lowest score in the next round and replace it) or all non-miners connect to $n_i$ (so there is no other out-going edge possible to replace $n_j$). This implies all non-miners with score $>1$ have all stable edges to miners, and thus at most 1 non-miner connects to all non-miners (with the $d$-th edge of these nodes) and this can change in any round, so there is no \textit{stable} edge to a non-miner with score $>1$. 

        Say a miner $m_i$ with score $>\frac{m-1}{m}$ has a stable edge to a miner $m_j$ with score $>\frac{m-1}{m}$, in the next round $m_i$ can connect to a miner it was not connected to before who in this round has score $\frac{m-1}{m}$ and replaces $m_j$ or other edge, thus the topology was not stable. We get that all miners with score $>\frac{m-1}{m}$ have all $d-1$ edges to miners with score $=\frac{m-1}{m}$ in a stable topology, so $n'$ is at most 1. So the single non-miner with a score $1$ would have only stable edges to miners, thus all stable edges from non-miners are to miners and all stable edges from miners are to miners. Since in some round a non-miner could have all miners connect to it, any stable edge to a miner must be to a miner with a score at most 1. 

        If there are no miners with a score of 1, then all miners with a score of $\frac{m-1}{m}$ must connect to each other with all their edges (needs to be $2d-1$ such miners), and all other miners must connect to these miners with their outgoing edges (max $d-1$ such miners). Thus, this is impossible.
            
        Say there are $m''$ miners with a score of 1, the rest must have score $\frac{m-1}{m}$ otherwise they would not connect to the score 1 miners. For a miner to have a score of 1, they must connect directly to all miners but 1, who they are 2 hops away from. So there are $m''= 2$ miners with score 1, so $m'=m-2$. For the $m'$ miners to form a clique, it must be that $m'\leq 2d-2$, thus given the bounds on $m$ we started it, we have that $2d-1< m \leq 2d+1$. For the $m'$ miners to connect to each other, we need $(m'-1)m'/2$ edges, and for the 2 miners to connect to them we need an additional $2m'$ edges, this must be equal to the number of edges of the $m$ miners thus $m(d-1)=\frac{m'(m'-1)}{2}+2m'$, substituting in $m'=m-2$ and simplifying, we get that we need
            $$(2d-2)=\frac{m'(m'+3)}{m'+2}$$
        to be satisfied. If $m=2d$, this gives us that $m'+2=m'+3$. If $m=2d+1$, we get $(m'-1)(m'+2)=m'(m'+3)$ which is only true for $m'=-1$. Thus, there are no conditions for stability.    

    \end{proof}

    \begin{lemma}[FIFO and Random Tie-Breaking Instability for Large $n$]
            For all $m,d$, and $n>2d^2$ with \tiebreaks FIFO or random, there exists no stable topology.
        \label{lem:ran-no-stable}
    \end{lemma}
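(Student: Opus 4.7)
The plan is to argue by contradiction, exploiting that the FIFO and Random tie-breaking rules drop a stable edge whenever it is merely tied for the maximum score among $i$'s peers (rather than only when it is uniquely max). The core observation is that for any stable outgoing edge $(i,j)$, if in some round's post-exploration graph $G'$ another peer of $i$ has score $\leq s'_j$, then $(i,j)$ is dropped---either as strict max, or as the loser of the tie since $(i,j)$ is older than any exploration under FIFO and is dropped with positive probability under Random. Combining this with the Minimum-Score-Event (any miner $k$ has $s'_k \leq (m-1)/m$ unconditionally; any non-miner $k$ can be driven to $s'_k = 1$ in a round where Minimum-Score-Event$_k$ occurs), I would deduce the following structural constraint: any node $i$ with a stable out-edge to a miner must be connected (via an in- or out-edge) to every miner, and any node $i$ with a stable out-edge to a non-miner must be connected to every other node.

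I then classify each node as \emph{Type-A} (all stable out-edges go to miners) or \emph{Type-B} (at least one stable out-edge goes to a non-miner). Since Type-B nodes have degree $n-1$, summing against the total degree $2n(d-1)$ gives $T(n-d) \leq n(d-1)$, and $n > 2d^2$ then implies the integer bound $T \leq d-1$. In addition, a Type-A miner requires $d-1$ outgoing edges to distinct other miners; summing across all miners under the no-bidirectional constraint gives $m(d-1) \leq \binom{m}{2}$, forcing $m \geq 2d-1$ for every miner to be Type-A; and a Type-A non-miner requires $d-1 \leq m$ outgoing edges to cover all $m$ miners.

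The case analysis splits on $m$. (1) $m \geq 2d$: Proposition~\ref{prop:imposs-ran-stable} rules out miner-stability directly. (2) $d \leq m \leq 2d-2$: no miner can be Type-A (clique obstruction), so all $m$ miners are Type-B, giving $m \leq T \leq d-1$, contradicting $m \geq d$. (3) $m \leq d-2$: neither miners (need $m \geq 2d-1$) nor non-miners (need $m \geq d-1$) can be Type-A, so every node is Type-B, giving $n \leq T \leq d-1$, contradicting $n > 2d^2$. (4) $m = d-1$: miners are all Type-B, so $m \leq T \leq d-1 = m$ forces $T = m$, hence every non-miner is Type-A with $d-1 = m$ outgoing edges to all $m$ miners; total required non-miner-to-miner edges $(n-m)m$ exceeds the miner in-capacity from non-miners $m(n-d) - \binom{m}{2} = m(n - d - (m-1)/2)$ by $m \cdot d/2 > 0$, a contradiction. (5) $m = 2d-1$: a miner clique using $d-1$ outgoing per miner is feasible (miners all Type-A), but then no miner has outgoing to non-miners, so each non-miner covers at most $d-1 < m$ miners through its own outgoing, precluding Type-A; hence all $n-m$ non-miners are Type-B, giving $n - m \leq T \leq d-1$, i.e., $n \leq 3d-2 < 2d^2$.

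The main obstacle is the first paragraph: rigorously establishing, via the Minimum-Score-Event and the exploration step, that a node with a stable non-miner peer must be connected to every other node---the key subtlety being that $i$'s own exploration of any non-connected miner $k'$ gives $s'_{k'} \leq (m-1)/m < s'_j$, which strictly drops $j$ regardless of tie-breaking, and similarly for non-connected non-miners via the Minimum-Score-Event. Once this is in place, the tight bound $T \leq d-1$ and the clean case-by-case analysis make the remainder relatively mechanical.
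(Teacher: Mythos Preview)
Your overall strategy---deriving the structural constraints from the FIFO/Random tie-break behavior, classifying nodes as Type-A/Type-B, and bounding the number $T$ of Type-B nodes via a degree-sum argument---is sound and is a cleaner, more unified packaging than the paper's case-by-case treatment (which invokes Proposition~\ref{prop:only-stable-clique} separately in each regime and then does ad-hoc edge counting). Both approaches rest on the same core idea: under FIFO or Random, for the stable edges of $i$ to survive a round, the exploration target $k$ must have \emph{strictly} larger score than every stable out-neighbor in every possible post-exploration graph; combined with Minimum-Score-Events this forces $i$ to already be connected to every miner (and, if some stable out-neighbor is a non-miner, to every node).

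There is, however, a genuine gap in your Case~(2) ($d \le m \le 2d-2$). Your ``clique obstruction'' establishes only that \emph{not every} miner can be Type-A (since $m(d-1) \le \binom{m}{2}$ would force $m \ge 2d-1$); it does not establish that \emph{no} miner is Type-A. For example, with $m = d = 3$ one miner can send both out-edges to the other two (Type-A) while the remaining two are Type-B. So ``all $m$ miners are Type-B, giving $m \le T$'' does not follow. The fix is immediate from a consequence of your own structural constraint that you leave unused: since every node with at least one stable out-edge is connected to every miner, \emph{every miner has degree $n-1$} regardless of its Type. Feeding all $m$ miners into the same degree-sum inequality you used for $T$ yields $m(n-d) \le n(d-1)$, hence $m \le d-1$ whenever $n > 2d^2$; this dispatches Cases~(1), (2), and (5) in one stroke. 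Your Cases~(3) and~(4) then finish the proof, though note that the obstruction you cite for miners in Case~(3) should be ``a single Type-A miner needs $m-1 \ge d-1$ other miners to point to,'' not the aggregate bound $m \ge 2d-1$; and in Case~(5) the contradiction is actually direct (non-miners cannot be connected to all $m = 2d-1$ miners at all, since miners have zero out-edges to non-miners and a non-miner's $d-1$ out-edges cover only $d-1 < m$ of them), so the detour through $T$ is unnecessary there.
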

    \begin{proof}
            \noindent(Case 1: $m<d$) Assume there is a stable topology, this means all edges are stable and by Proposition~\ref{prop:only-stable-clique} the miners connect in a clique. The first observation is that, in a stable topology, all non-miners would connect to all miners. To see this, first, assume there is some non-miner $n_i$ who is not connected to all miners, there is a chance that in the next round, all miners connect to this $n_i$. Either this happens with $n_i$ making a connection to at least one new miner $m$, since miners have the minimum score, this connection will replace an existing connection to a non-miner. Or, this happens with some miner $m$ making a connection to $n_i$, in this round $n_i$ has the minimum non-miner score so it will replace an existing outgoing edge of $m$ to another non-miner if FIFO (or with some probability if random), since $m<d$. Thus, in a stable topology, all non-miners have an equal score of 1. 
            
            Non-miners who have all $d-1$ out-going edges to all miners have that those edges are stable since the miners are stable and have the lowest score. Since miners began the protocol with $d-1$ out-going edges, each miner must have maintained at least $d-1$ edges since you only drop an edge if you have $d$ edges. There are at least $d-1-m/2$ non-miners with incoming edges from a miner. If $m<d-1$, then non-miners who connect to at least 2 other non-miners (unless they are already connected to all non-miners which is not many since $n \gg d$). Thus take one of these non-miners, in the next round they connect to a non-miner that they were not already connected to, and since all non-miners have the same score, they replace one of their current non-miner connections with this new connection if FIFO (or with some probability, if random). Now consider $m=d-1$, some non-miners have all $d-1$ stable outgoing connections to all miners, thus at any point only one non-miner can connect to all non-miners (i.e. using the $d$-th edge of most non-miners). There is at least one miner with at most $\frac{m-1}{2}$ out-going connections, thus at least $d-1-\frac{m-1}{2}=d/2$ non-miners have at least one incoming connection from a miner. One of these non-miners is not connected to some other miner $m_i$, and in the next round could replace one of their existing connections to a non-miner with a connection to $m_i$ (with some probability). Thus there is always an unstable edge from the first $d-1$ out-going edges for some node.  

            \noindent(Case 2: $d\leq m < 2d$, FIFO or Random:) By Proposition~\ref{prop:only-stable-clique}, we know that for a topology to be stable, miners must connect in a clique which uses $\frac{m(m-1)}{2}$ edges. This leaves $m(d-1)-\frac{m(m-1)}{2}$ edges from the miners going out to non-miners. Assume that there exists at least one non-miner $n_i$ who is not connected to all miners. Node $n_i$ must have all outgoing edges to miners, otherwise, there is a miner it could connect to which would replace the edge since miners have the minimum score. Since $n_i$ is not connected to all miners, in the next round it could connect to some miner and this miner would (or could with some probability) replace an existing edge since they all have the same score. Now imagine all non-miners connect to all miners, it must be that all non-miners have all outgoing edges to all miners, otherwise, an outgoing edge to a non-miner can always be replaced (since there are not enough edges for non-miners to form a clique). Thus the only stable topology would be the unique case when the number of non-miners $n'$ is exactly such so that all non-miners connect to $d-1$ miners and have $m-(d-1)$ incoming connections from miners, thus
                $$n'(m-d+1)=m(d-1)-\frac{m(m-1)}{2}=\frac{m}{2}(2d-m-1)$$
            we get
                $$n = m+ n'=\frac{m}{2}\frac{(2m-2d+2)+(2d-m-1)}{(m-d+1)}=\frac{m}{2}\frac{m+1}{m-d+1}< \frac{dm}{m-d+1}<md<2d^2$$
            
            \noindent(Case 3: $m > 2d-1$, FIFO or Random:) By Proposition~\ref{prop:imposs-ran-stable}, there is no miner stable topology, so there is no stable topology.
            
    \end{proof}

    All examples given in the affirmative proofs of Table~\ref{tab:stab-results} rely on specific topologies involving a highly connected sub-network of miners, and generally low-diameter networks. Next, we show that such properties are in fact necessary for any stable network, and briefly explore the impact of inbound connection caps.

     \subsubsection{Properties of stable topologies}
    \label{app:stable-properties}
        We can extend the above analysis to derive the following central properties that \textit{any} miner-stable or network-stable topologies must have. Principally, these state that stable networks must necessarily have a highly-connected miner component.
        
        \begin{proposition}
                If $m < 2d$, and ties are broken by FIFO or randomly, the only miner-stable topology is a clique.
            \label{prop:only-stable-clique}
        \end{proposition}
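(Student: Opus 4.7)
The plan is to show the contrapositive of what Proposition~\ref{prop:miner-clique} already gives: since the clique is miner-stable, it suffices to prove that any topology in which the miners do not form a clique fails to be miner-stable. This is a direct adaptation of the argument in Proposition~\ref{prop:imposs-ran-stable}, specialized to the $m<2d$ regime (where a clique \emph{is} feasible).

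I would assume for contradiction that $G$ is miner-stable starting from some round $r$, yet two miners $m_i$ and $m_j$ satisfy $e_{i,j}=0$ in $G$. In the very next round I would condition on a carefully chosen positive-probability event: the conjunction of \mineventi with the extra stipulation that in step 1 it is $m_j$ (rather than $m_i$) that picks the other as its random addition. Because in-degree is unbounded and the step-1 choices of distinct nodes are independent random samples over available peers, this combined event has strictly positive probability. Under this event, after step 1 the temporary undirected graph contains $e_{i,j}=1$, and after step 2 we have $score(m_i)=\tfrac{m-1}{m}$, the smallest score attainable by any node.

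Next I would analyze step 3 from $m_j$'s perspective. If $m_j$ had fewer than $d$ out-edges entering this round, no edge is dropped and $e_{i,j}=1$ survives, an immediate contradiction. Otherwise $m_j$ drops exactly one out-edge, and this edge is the one to $m_i$ only if $m_i$ maximizes $m_j$'s out-neighbor scores and wins the \tiebreak. Since $score(m_i)=\tfrac{m-1}{m}$ is the global minimum, $m_i$ can only tie for maximum; if even one other out-neighbor of $m_j$ has strictly higher score, the edge to $m_i$ is not a candidate and is retained. In the remaining subcase where all of $m_j$'s out-neighbors tie at score $\tfrac{m-1}{m}$, under FIFO the edge to $m_i$ is the newest and is therefore not first-out, while under Random it is retained with probability $(d-1)/d>0$. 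In every situation, there is positive probability that $e_{i,j}=1$ at the end of the round, contradicting miner-stability of $G$.

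The main subtlety I anticipate is bookkeeping the probability event so that it is $m_j$ (not $m_i$) whose step-3 drop decision we analyze; otherwise one would have to reason about $m_j$'s score from $m_i$'s point of view, which can be as large as $2$ after the event and would require a separate case analysis. Conditioning on $m_j$ making the step-1 addition avoids this entirely, after which the rest of the argument is a clean reuse of the score-minimality reasoning already deployed in the proof of Proposition~\ref{prop:imposs-ran-stable}.
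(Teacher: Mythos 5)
Your proposal is correct and follows essentially the same route as the paper: it invokes Proposition~\ref{prop:miner-clique} for the clique direction and then reuses the \mineventi argument from Proposition~\ref{prop:imposs-ran-stable} to show that any non-clique topology admits a positive-probability round in which some miner $m_j$ gains and retains a new edge to an under-connected miner $m_i$, contradicting miner-stability. Your extra care in conditioning on $m_j$ (rather than $m_i$) making the step-1 addition, and in handling the sub-$d$ out-degree case, only makes explicit what the paper's shorter proof leaves implicit.
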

        \begin{proof}
                The proof follows similarly to that of Proposition~\ref{prop:imposs-ran-stable}. Assume there is some non-clique miner-stable topology. Thus there is some miner $m_i$ who is not connected to all miners. There is some probability that in the next round, all miners connect to this miner with at least one miner, $m_j$ having a new out-going edge to $m_i$. Miner $m_i$ now has score $\frac{m-1}{m}$, thus $m_j$ will not drop the edge to $m_i$ since it is either smaller than some existing edge or ties with all edges and by FIFO (or random) \tiebreak replaces some edge (has a probability of replacing some edge). 
        \end{proof}
    
        \begin{proposition}
                For all $m,d$ and a global order (or LIFO) \tiebreak, in any miner-stable topology, all miners must connect to the first (or some) $d-1$ miners. 
            \label{prop:first-d-1-stable}
        \end{proposition}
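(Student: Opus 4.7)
The plan is a proof by contradiction. Assume the topology is miner-stable but some miner $m_a$ has no edge (in either direction) to a target miner $m_c$, where $m_c$ is one of the first $d-1$ miners under Global-Ordering or an arbitrary miner with minimum score $\frac{m-1}{m}$ under LIFO. I invoke \mineventi with $i = m_c$, a positive-probability event that forces $m_a$ to create a new outgoing edge $(m_a,m_c)$ in step 1 of some round and guarantees $m_c$ attains the minimum miner score at step 2.

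The crux is that miner-stability demands $(m_a,m_c)$ be dropped in step 3 of this round. Since $m_c$ already has the minimum possible score $\frac{m-1}{m}$, this can happen only if $m_c$ is the (tie-broken) maximum-scored outgoing neighbor of $m_a$, which in turn forces every other outgoing neighbor of $m_a$ to also tie at $\frac{m-1}{m}$. Because non-miners have score at least $1 > \frac{m-1}{m}$, all $d$ outgoing neighbors of $m_a$ at step 2 must be miners adjacent to every other miner at that step.

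For Global-Ordering, a pigeonhole closes the proof. In the resulting $d$-way tie the rule drops the lowest-ranked outgoing neighbor; since $m_c$ lies in the first $d-1$ miners but only $d-2$ other first-$d-1$ miners exist besides $m_c$, at least one of $m_a$'s remaining $d-1$ outgoing neighbors must sit outside the first $d-1$ and hence rank strictly below $m_c$. That other outgoing neighbor is dropped; being a miner by the tie condition, this removes a stable miner--miner edge, contradicting miner-stability. So $m_a$ must be connected to every first-$d-1$ miner.

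For LIFO, the tie-breaker would instead drop the newest edge $(m_a,m_c)$ itself, so I chain the argument. Let $M$ denote the set of miners with minimum score. Applying the same setup to any non-$M$ miner $m_b$ and any miner $m_y$ that $m_b$ is not connected to shows every stable outgoing neighbor of $m_b$ must lie in the post-event analog of $M$ --- i.e., either sits in $M$ pre-event or is missing only $m_y$ pre-event. If $m_b$ lacks connections to two distinct miners $m_{y_1},m_{y_2}$, invoking the event separately for each and intersecting the two constraints pins every one of $m_b$'s $d-1$ stable outgoing neighbors into $M$ itself, yielding $|M|\ge d-1$. The residual case where every non-$M$ miner misses exactly one other miner is handled by the symmetry of the missing relation (mutual pairs) together with a short recursion that either pushes the partner miner into the ``$\ge 2$-missing'' regime or else rules out the topology. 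Picking any $d-1$ members of $M$ then gives the required set, since $M$-miners are adjacent to every miner by definition. The main obstacle is precisely this LIFO chaining: a single \mineventi invocation does not suffice, and the mutual-pair residual case requires careful treatment.
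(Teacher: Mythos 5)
Your Global-Ordering case is essentially the paper's argument: invoke \mineventi with $i=m_c$, note the new edge $(m_a,m_c)$ can only be dropped if all of $m_a$'s out-neighbors tie at the minimum score $\frac{m-1}{m}$ (hence are all miners adjacent to every miner), and then observe by pigeonhole that some tied neighbor lies outside the first $d-1$ miners and thus ranks below $m_c$. The paper states this more tersely (``it either has a lower score than some outgoing edge of $m_i$ or is ranked higher''), but it is the same proof, and either of your two contradictions (the new miner--miner edge surviving, or an old miner--miner edge being dropped) violates miner-stability. Your LIFO case takes a genuinely different route, and it is in fact more careful than the paper's: the paper targets the event at a non-minimum-score miner $m_i$ and immediately asserts that, since fewer than $d-1$ miners had score $\frac{m-1}{m}$, the new edge must land among $m_j$'s best $d-1$; this glosses over the fact that the event round itself enlarges the minimum-score set (every miner that was missing only $m_i$ is forced to connect to $m_i$ and becomes minimum-score), which is exactly the ``post-event analog of $M$'' you track. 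Intersecting the constraints obtained from two distinct targets $m_{y_1},m_{y_2}$ neutralizes this enlargement and gives $|M|\ge d-1$ whenever some non-$M$ miner misses at least two miners; this is a legitimate strengthening of the paper's sketch.

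The one soft spot is your residual case, which is actually easier than you describe: no recursion is needed, and nothing can be ``pushed into the $\ge 2$-missing regime,'' since in that case no such miner exists by assumption. If every non-$M$ miner misses exactly one miner, your own mutual-pair observation shows that the only miner missing only $m_y$ is $m_b$ itself, so the post-event minimum-score set is just $M\cup\{m_y,m_b\}$; a single invocation of the event for $m_y$ then forces all $d-1$ existing out-neighbors of $m_b$ into $M$, yielding $|M|\ge d-1$ directly. With that replacement your LIFO argument is complete, modulo the (positive-probability) instantiation of the event in which no other node's random edge creates additional minimum-score miners --- a point the paper also leaves implicit.
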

        \begin{proof}
                (Case 1: global ordering) Assume there is some miner-stable topology where there is a miner $m_i$ who does not connect to some miner $m_j$ where $m_j$ is in the top $d-1$ miners. In any round, there is a chance that all miners connect to miner $m_j$ with miner $m_i$ forming an outgoing edge to $m_j$. Thus in this round $m_j$ has the minimum score and won't be dropped by $m_i$ since it either has a lower score than some outgoing edge of $m_i$ or is ranked higher.
    
                (Case 2: LIFO) Assume that there are less than $d-1$ miners who all miners connect to in a miner stable topology. There is some miner $m_i$ with score $>
                \frac{m-1}{m}$ which some miner, $m_j$, is not connected to. There is a chance that in the next round all miners who are not yet connected to $m_i$, connected to it, including $m_j$ with an outgoing edge to $m_i$. In this round, $m_i$ has the minimum score of $\frac{m-1}{m}$, and since less than $d-1$ miners have such a score, $m_i$ will now be in the top $d-1$ edges of $m_j$.
        \end{proof}
    
        \begin{proposition}
                Any miner-stable topology (regardless of $m,d$) must have miner diameter at most $2$ and will lead to a network diameter of at most $3$.
            \label{prop:net-degree}
        \end{proposition}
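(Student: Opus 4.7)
My plan is to split the proposition into its two parts, both of which leverage the ``central miner'' structure established earlier in the section.

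For the miner diameter bound, I would directly invoke Propositions~\ref{prop:first-d-1-stable} and~\ref{prop:only-stable-clique}. Together they imply that in any miner-stable topology there exists a set $C$ of at least $d-1$ miners each connected to every other miner (the precise structure depends on the \tiebreak, but the conclusion is identical). Hence every pair of miners $m_i, m_j$ shares a common neighbor $c \in C$, giving an undirected path $m_i$--$c$--$m_j$ of length at most~$2$, which is the miner diameter bound.

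For the network diameter bound, my plan is to show that every non-miner $u$ is within undirected distance~$1$ of some central miner $c \in C$; combined with miner diameter at most~$2$ and the fact that $C$ is itself a clique, any two nodes have distance at most~$3$. I would argue this via the \greedy dynamics within the miner-stable regime. A central miner $c$ already reaches every miner via $d-1$ outgoing edges (or incoming from them), so in step~$1$ of each round $c$ must sample a fresh outgoing edge to a non-miner. Meanwhile, $u$ itself samples a random edge each round, with positive probability directed to some $c \in C$. Once $u$ acquires an out-edge to $c$, the score of $c$ is exactly $\frac{m-1}{m}$, the minimum possible, so $u$ never selects $(u,c)$ for dropping in step~$3$; the edge therefore becomes stable, and the symmetric flow of in-edges from central miners to non-miners fills in the remaining non-miners over subsequent rounds.

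The main obstacle is verifying that the newly acquired edge $(u,c)$ truly survives the \tiebreak. When additional out-neighbors of $u$ transiently attain the same minimum score via a \mineventi, ties arise that, under FIFO or Random, could a priori cause $u$ to drop $(u,c)$. I would handle this by first ruling out parameter regimes where miner-stability is itself impossible (Proposition~\ref{prop:imposs-ran-stable}), and then showing that within the surviving configurations the central miners in $C$ are precisely the earliest-added (FIFO) or highest-ranked (Global-Ordering/LIFO) out-neighbors of $u$, so ties resolve in favor of $(u,c)$. A secondary subtlety is handling non-miners whose only adjacency to $C$ is via an incoming edge from a central miner---such in-edges are typically ephemeral, so the cleaner route is to argue via outgoing edges as above and treat the incoming case only as a transient shortcut that strengthens the bound.
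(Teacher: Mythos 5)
Your first half is fine and essentially the paper's route: the paper derives directly that any miner-stable topology must contain a miner adjacent to all miners (otherwise a \mineventi at a not-fully-connected miner produces a surviving new miner--miner edge), whereas you reach the same core by citing Propositions~\ref{prop:first-d-1-stable}, \ref{prop:only-stable-clique} and (for vacuity when $m\ge 2d$ under FIFO/Random) Proposition~\ref{prop:imposs-ran-stable}; either way the common-neighbor argument gives miner diameter at most $2$, and your $1+1+1$ count is the right way to get network diameter $3$ once every non-miner is adjacent to a miner that is adjacent to all miners.

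The gap is in how you establish that adjacency. You insist on making the specific edge $(u,c)$ \emph{stable} and propose to resolve ties by asserting that the central miners are ``precisely the earliest-added (FIFO) or highest-ranked (Global-Ordering/LIFO) out-neighbors of $u$.'' Nothing in the dynamics forces that acquisition order, and under LIFO the claim is exactly backwards: LIFO drops the \emph{newest} edge on a tie, so a freshly acquired $(u,c)$ loses any tie among minimum-score out-neighbors; under Random it is dropped with positive probability. So that step fails as stated. It is also unnecessary: the paper's proof explicitly notes that these non-miner edges need not be stable. What is actually needed is only persistence of the property ``$u$ has an out-edge to some miner of score $\frac{m-1}{m}$.'' Such a miner's score is strictly below every non-miner's score (which is at least $1$), so $u$ can drop an edge to a minimum-score miner only in a round in which \emph{all} of its out-neighbors are minimum-score miners, and after the drop it still retains $d-1$ of them; moreover a central miner stays at score $\frac{m-1}{m}$ forever, since miner-stability freezes its edges to miners. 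Combined with the fact that a non-miner not yet adjacent to the core samples it with positive probability each round, this yields that eventually, and at all later rounds, every non-miner is adjacent to a miner adjacent to all miners, and your distance count then closes the proof. Replace your tie-break machinery with this persistence observation and the argument goes through; as written, the tie-break step is both unprovable and, for LIFO, false.
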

        \begin{proof}
                This follows directly from requirements of stability. Assume there is some miner-stable topology where no miner connects to all miners. There is always some possibility that in a round all miners connect to some miner $m_i$ with some miner $m_j$ having a new outgoing edge to $m_i$. As has been argued, this new edge contradicts that the initial topology was miner-stable. Thus any stable topology has at least one miner who connects to all other miners. Since the lowest-scoring node is a miner, eventually all non-miners will connect to at least one miner and any future state has all non-miners connecting to at least one miner. Note there is no guarantee of stability of these edges, but we get that the distance between any two non-miners is at most 3.
        \end{proof}

        Recall from Lemma~\ref{lem:stable-nash} that the Nash Equilibrium we present for the idealized game also has a miner diameter of 2 and network diameter of 3.

    \subsubsection{Inbound connection cap} We briefly explore the impact of inbound connection caps on the stability properties we explored above. Without loss of generality, we assume that the inbound cap $d_{in}>d$, otherwise nodes would not have the inbound capacity to form new connections. 

    \begin{lemma}
        For  $ m< 2d$ and any \tiebreak, there exists a miner-stable topology. For tie-breaking rules LIFO and global ordering, miner-stable topologies exist for $m\leq max(2d-1,d_{in}+d/2)$. 
        \label{prop:capped_same_stab}
    \end{lemma}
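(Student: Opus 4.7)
The plan is to adapt the two miner-stable constructions from the unbounded-in-degree setting---the all-miner clique of Proposition~\ref{prop:miner-clique} and the hub-and-spoke of Lemma~\ref{lemma:global-stable}---and verify that each can be realized under the inbound cap $d_{in}>d$.

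For the first claim ($m<2d$, any \tiebreak), I would reuse the all-miner clique. Each miner has $m-1$ clique-incident edges to distribute across at most $d$ outgoing and at most $d_{in}$ incoming slots. Orienting the clique as a round-robin tournament gives every miner at most $\lceil(m-1)/2\rceil<d$ in-clique out-edges and at most $\lceil(m-1)/2\rceil<d\le d_{in}$ in-clique in-edges, so both caps are respected (the remaining out-slots, if any, can be pointed at non-miners arbitrarily). Once the clique is realized, the score argument from Proposition~\ref{prop:miner-clique} carries over verbatim: every miner attains the minimum possible score $(m-1)/m$, which is strictly less than any non-miner's score, so no miner ever drops another miner regardless of the tie-break rule.

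For the second claim under LIFO or global ordering in the regime $2d-1<m\le d_{in}+d/2$, I would reuse the hub-and-spoke construction: designate $d-1$ hub miners (the top-ranked ones under global ordering; any fixed $d-1$ under LIFO), orient the internal $(d-1)$-clique as balanced as possible so each hub miner has $\lfloor(d-2)/2\rfloor$ or $\lceil(d-2)/2\rceil$ edges in each direction within the clique, and have every non-hub miner send all $d-1$ of its stable out-edges to the hub. Each hub miner then accumulates at most $\lceil(d-2)/2\rceil+(m-d+1)\le m-d/2$ incoming miner edges, which is at most $d_{in}$ precisely when $m\le d_{in}+d/2$ (reading $d/2$ as $\lfloor d/2\rfloor$ when $d$ is odd). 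Each hub miner's remaining stable out-slots can be routed to non-miners without affecting miner-stability. Stability of all miner-to-miner edges then follows exactly as in Lemma~\ref{lemma:global-stable}: every hub miner is adjacent to every miner and so attains the minimum score $(m-1)/m$, and under LIFO or global ordering any exploratory edge that ties in score with an existing hub-incident edge loses the tie-break.

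The main obstacle is the joint capacity bookkeeping in the hub construction---checking that a balanced orientation of the internal $(d-1)$-clique together with the $(d-1)(m-d+1)$ non-hub-to-hub edges jointly respect the $d_{in}$ cap at every hub miner, while carefully handling the parity of $d$ so that the advertised bound $d_{in}+d/2$ is tight. Beyond this combinatorial bookkeeping, the score and tie-break arguments reduce to those already established in Proposition~\ref{prop:miner-clique} and Lemma~\ref{lemma:global-stable}.
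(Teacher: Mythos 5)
Your proposal is correct and follows essentially the same route as the paper: the miner clique (via Proposition~\ref{prop:miner-clique}) for $m<2d$, and for larger $m$ a hub of $d-1$ top/fixed miners receiving all stable out-edges of the remaining miners, with the in-capacity count $(d-2)/2 + (m-d+1)\le d_{in}$ giving exactly the bound $m\le d_{in}+d/2$. Your extra bookkeeping (round-robin orientation of the cliques and the parity caveat on $d/2$) only makes explicit what the paper leaves implicit.
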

    \begin{proof}
        For $m < 2d$, there are enough edges between miners for cliques to form, thus they are still stable under Proposition~\ref{prop:miner-clique}.
        For $m\geq 2d$, consider a topology where the first $d-1$ miners form a clique, call them set $M$, then the rest of the miners have their outgoing edges to all miners in $M$. Thus all miners in $M$ have the minimum score, and won't be dropped by any miner; all miners not in $M$ have their outbound capacity filled by these stable edges so they will not connect to each other. We now calculate the maximum miners not in $M$. The maximum in-bound capacity for all miners in $M$ is for each miner to have equal incoming/outgoing edges $=\frac{d-2}{2}$. Thus they each have $d_{in}-\frac{d-2}{2}$ capacity left for miners not in $M$, this happens for $m\leq d-1+d_{in}-\frac{d-2}{2}=d_{in}+d/2$.        
    \end{proof}

    It remains an open problem if there exists a miner-stable topology for other regimes.

\subsection{Reachable Stability}

    In the previous section, we characterized the exact conditions for when miner-stable and network-stable topologies exist for the unbounded \greedy protocol under different \tiebreaks. We next consider \textit{if} these stable topologies are necessarily reachable. We show that \textit{for} $m\leq d$, \textit{all \tiebreaks will lead to miner-stable topologies}, in particular a miner-clique. Additionally, for $m<d$, we show that\textit{ if there exists a stable network, then the network will stabilize}. For larger values of $m$, our proofs of stability in the previous subsection lead us to believe that the stable networks that do exist are hard to reach, we leave this for future work. 
    
    We consider the state transition model of the \greedy game. We define a state of the protocol as the graph $G$ of the network at some round $r$, and the set of reachable states as the states possible after the execution of one round of the protocol. 

    \begin{proposition}[Eventual Miner-Stable Clique]
        If the number of miners $m\leq d$, the network will reach a miner-stable topology.
        \label{prop:eventual_clique}
    \end{proposition}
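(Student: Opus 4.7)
The plan is to show that the miner-clique is an absorbing state under \greedy and that it is reachable with positive probability within a bounded number of rounds from every state; an application of Borel--Cantelli then yields almost-sure convergence to a miner-stable topology (the miner-clique).

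Absorption follows from Proposition~\ref{prop:miner-clique} together with the constraint $m\le d$: in the miner-clique each miner has score $(m-1)/m$ while each non-miner has score at least $1$, and since every miner has at most $m-1\le d-1$ outgoing miner edges it must also carry at least one outgoing non-miner edge. The max-score out-neighbor that step 3 drops is therefore always a non-miner, so miner-miner edges are never removed, and all miner pairs are already connected so no new miner-miner edges can appear. This is independent of the \tiebreak.

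To establish reachability I would process the miners one at a time. For a designated miner $m_*$, I would focus on the Minimum-Score-Event applied to $m_*$: in step 1, every miner $m_j$ not yet connected to $m_*$ selects $m_*$ as its exploratory edge (so $(m_j,m_*)$ is added), while $m_*$'s own step-1 choice goes to a non-miner. This event has probability at least $(1/n)^{m}$ in any given round. After step 1, $m_*$ is connected to every other miner and has score $(m-1)/m$, and since every $m_j\ne m_*$ has at least one non-miner out-neighbor whose score exceeds $(m-1)/m$, the new edge $(m_j,m_*)$ cannot be $m_j$'s max-score out-neighbor and thus survives step 3. The only way a miner connection to $m_*$ can still be disrupted in the round is via $m_*$'s own step 3 dropping an outgoing edge $(m_*,m_l)$, which strictly reduces $m_*$'s count of outgoing miner edges; the very next round's event then restores the connection in the form of the incoming edge $(m_l,m_*)$. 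After at most $m-1$ consecutive occurrences of the event, $m_*$ has no outgoing miner edges, so its step 3 can only drop non-miners, and $m_*$ is stably connected to every other miner via incoming edges alone.

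Iterating this argument miner by miner, I obtain a bounded sequence of $O(m^2)$ rounds at whose end every pair of miners is connected, i.e., a miner-clique has been reached. The key invariant to maintain is that a miner that has already been ``finalized'' remains so while subsequent miners are processed: its score stays at $(m-1)/m$ so no incoming edge into it is ever the max out-neighbor of its other endpoint, and it has no outgoing miner edges left to drop. The whole bounded sequence has probability at least $(1/n)^{O(m^2)}>0$ from any state, so by Borel--Cantelli it occurs almost surely. The main technical subtlety, and what forces the iterative rather than a one-shot formulation, is carefully tracking the exchange between $m_*$'s outgoing miner edges being dropped and their corresponding incoming edges being re-added on the following round, while verifying that this bookkeeping is consistent with every admissible \tiebreak.
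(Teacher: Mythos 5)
Your high-level strategy is the same as the paper's: repeatedly force all miners to connect to a designated ``hub'' miner via \mineventi, observe that because $m\le d$ every miner with $d$ out-edges has a non-miner out-neighbor of score at least $1>\frac{m-1}{m}$ (so incoming edges to a minimum-score hub are never the dropped edge, for any \tiebreak), and note that the hub's own drops of outgoing miner edges are self-correcting since the dropped partner reconnects with an incoming edge in the next event round. The absorption argument for the clique is also correct. However, there is a genuine gap in your finalization step: the claim that ``after at most $m-1$ consecutive occurrences of the event, $m_*$ has no outgoing miner edges'' does not follow. What is true is that each \emph{disruption} strictly decreases $m_*$'s count of outgoing miner edges, but nothing forces a disruption to occur in any given round: $m_*$ drops its maximum-score out-neighbor, and that can perfectly well be a non-miner (non-miners have score $\ge 1$, while after the first event round all miner scores are below $2$, so either can be the maximum). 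Hence $m_*$ can exit its bounded window still carrying outgoing miner edges, and these may be dropped in later rounds --- while you are processing subsequent miners --- temporarily raising $m_*$'s score and breaking the invariant that ``finalized'' miners stay connected to all miners with score $\frac{m-1}{m}$. For the same reason your fixed $O(m^2)$-round event sequence is not guaranteed to end in a simultaneous miner clique (even the last-processed miner can drop a miner out-edge in the final round of its window), so the Borel--Cantelli step is applied to an event that has not been shown to produce the absorbing state.

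The paper repairs exactly this point by making the targeting \emph{adaptive} rather than window-based: in its clique-process, every round targets the minimum-index miner whose score exceeds $\frac{m-1}{m}$, so whenever a previously stabilized miner loses a connection by its own drop, it immediately becomes the target again and the lost connection is restored as an incoming edge that is never dropped; an induction over miners then shows each miner's miner-edges stabilize permanently along this path, with no need to certify finalization within a fixed number of rounds. You could salvage your write-up either by adopting such a state-dependent (adaptive) event sequence, or by arguing per hub that the outgoing-miner count is non-increasing under consecutive events and continuing the event until the state-dependent stopping condition ``$m_*$ has no outgoing miner edges or its remaining outgoing miner edges are never again the maximum'' is met --- but as written, the bounded-window claim and the resulting invariant are not justified.
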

    \begin{proof}
        To prove this we show that some miner-stable topology is reachable from any state (network topology and round). To do this we present the following \textit{clique process} which defines an exact path from \textit{any} state to some miner clique which we've shown in the previous section is the only miner-stable topology for $m \leq d$. For convenience, we define some arbitrary order over all nodes such that the miners are the first $m$ nodes.

        \noindent \textbf{Def. \textit{clique-process}:} \\
        \indent For each round $r$, take $\min_i$ s.t. score$(m_i) > \frac{m-1}{m}$: \\
        \indent\indent 1. $\forall j\leq m$, if $e_{i,j}=0$, then $m_j$ chooses $m_i$ as their new random outgoing edge. \\
        \indent\indent 2. $\forall j\leq i$ score$(m_j) = \frac{m-1}{m}$ \\
        \indent\indent 3. Note: $\forall j\leq i$ no edges \textit{to} $m_j$ are dropped
        \\
        For the remainder of the proof, we re-define an edge as stable in some state, if the value of the edge does not change for all future states on the path defined by the clique-process. Equally, a miner is miner-stable if all of its edges to other miners are stable in the set of future states defined by the clique-process path. Note that once all miners are miner-stable on the clique-process path, then the network is miner-stable for any path as this only happens when a clique is formed by the miners.

        We now prove the following statement: Starting from any state, the path defined by the clique-process will result in the first $j$ miners being miner-stable, for $j\leq m$.
        
        \textbf{Base case: (Miner $m_1$ will miner-stabilize)} Assume that $m_1$ never stabilizes. This means $\exists j\leq m$ s.t. $e_{1,j}$ does not stabilize. By definition of the clique-process, $i\geq 1$ for all rounds, thus at step (2.) of each round score($m_1$)=$\frac{m-1}{m}$, thus $e_{1,j}=1$. For this edge to not be stable, $\exists r' \geq r$ where $e_{1,j}=0$ at step (3.) which only happens if this edge was an outgoing edge from $m_1$, and $m_1$ drops the edge at round $r'$. Then, at round $r'+1$ we have $i=1$, so miner $m_j$ will form a \textit{stable} outgoing edge to $m_1$ (note step 3 of the clique-process). 

        \textbf{Inductive step:} Assume that $\forall j'< j\leq m$ miner $m_{j'}$ is miner stable at some round $r$. It must be that $\forall j'< j$ score$(m_{j'})=\frac{m-1}{m}$. Note that this means for each round $r'\geq r$, score$(m_{j})=\frac{m-1}{m}$ at step (2.), even if $m_j$ is not stable, as it must be that $i\geq j$. The proof that $m_j$ stabilizes follows similarly to the base case.
        
        Assume $m_j$ never stabilizes, this means $\exists h\leq m$ s.t. $e_{j,h}$ does not stabilize. By definition of the clique-process, $i\geq j$ for all rounds, thus at step (2.) of each round score($m_j$)=$\frac{m-1}{m}$, thus $e_{j,h}=1$. For this edge to not be stable, $\exists r' \geq r$ where $e_{j,h}=0$ at step (3.) which only happens if this edge was an outgoing edge from $m_j$, and $m_j$ drops the edge at round $r'$. If this happens, at round $r'+1$ we have that $i=j$, so miner $m_h$ will form a \textit{stable} outgoing edge to $m_j$ (note step 3 of the clique-process). 
    \end{proof}

    We now extend our results to include network stability guarantees. Recall that only \tiebreak rules LIFO and Global-ordering have stable networks and only for $m<2d$. Note that the following proposition is for $m$ strictly less than $d$ as the stable networks we used in our proof of Lemma~\ref{lem:stable-top} for $d\leq m < 2d$ were quite contrived and may not be reachable.

    \begin{proposition}[Eventual Network Stability]
            If the number of miners $m < d$, given \tiebreak of Global-ordering or LIFO, the network will arrive at a stable topology.
            \label{prop:stable-inevitable}
        \end{proposition}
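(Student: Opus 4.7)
The plan is to bootstrap from Proposition~\ref{prop:eventual_clique}, which already supplies an explicit positive-probability execution path along which the miners settle into a stable clique, and then append further rounds that drive the non-miners into the stable configuration described in Lemma~\ref{lem:stable-top}. After the clique phase, every miner has score $\frac{m-1}{m}$ and $m-1$ stable outgoing clique edges, leaving $d-m$ outgoing edges that still point to non-miners; these can be freely rewired without threatening miner stability, since the clique edges always beat any miner-to-non-miner edge on score alone.

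Next I would extend the clique-process with a non-miner-sweeping phase: iterate over the non-miners in a fixed order and, for the current target $v$, force \mineventi{} in step~(1) so that every miner not yet adjacent to $v$ either receives an outgoing edge from $v$ or sends one to $v$. Under LIFO or Global-ordering, the clique edges are untouched during this phase because they carry the minimum score and win every tie, and once $v$ becomes adjacent to every miner, $v$'s outgoing edges to miners are themselves stable, since all their endpoints sit at the global minimum score $\frac{m-1}{m}$ and no future exploratory edge can outrank them. After all non-miners are swept, every non-miner has score exactly $1$. In this state the remaining outgoing edges (miner-to-non-miner and non-miner-to-non-miner) all tie at score $1$ against any candidate replacement, and the \tiebreak is rigged in favor of the incumbent: under Global-ordering the stable peers are the top-ranked non-miners and any exploratory edge lands on a lower-ranked one; under LIFO every exploratory edge is strictly newer than the incumbent and is dropped at the end of the very round it is added. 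Concatenating all phases gives a finite execution path ending in a network-stable topology, and since every step has positive probability the network reaches such a state with probability $1$.

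The main obstacle is verifying that the non-miner sweeps do not dislodge anything previously stabilized. Two bookkeeping checks are required: when $v$ adds a new outgoing miner-edge in step~(1), the edge dropped in step~(3) must be an as-yet unstable non-miner edge, not a clique edge or a stabilized miner-edge; and when a miner adds an edge to $v$, the miner must drop a strictly higher-scoring non-miner peer rather than anything already stable. Both reductions are immediate once one writes out the scores after step~(1), and both depend essentially on the \tiebreak favoring older or higher-ranked peers --- precisely the property that fails for FIFO and Random, consistent with the instability result of Lemma~\ref{lem:ran-no-stable}.
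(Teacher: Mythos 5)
Your high-level plan matches the paper's: first reach the miner clique via Proposition~\ref{prop:eventual_clique}, then stabilize the miner-incident edges, then the non-miner edges. But two steps that you declare ``immediate'' actually fail as stated. First, the \mineventi-based sweep relies on miners forming outgoing edges to the current target $v$ and keeping them. Consider a miner $M$ whose non-clique outgoing slots already point to non-miners that have been swept and therefore have score $1$. When $M$ adds an edge to $v$, after step~(2) the new peer $v$ also has score $1$, so \emph{all} of $M$'s non-miner out-neighbors tie at the maximum score; under LIFO the newest edge --- precisely the one to $v$ --- is the one dropped at step~(3), and $v$ loses its adjacency to $M$. So your bookkeeping check ``the miner must drop a strictly higher-scoring non-miner peer'' is not automatic; there may be no such peer. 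The paper avoids this by having the \emph{non-miner} initiate the connection with its own outgoing edge to the miner: since every miner's score $\frac{m-1}{m}$ is strictly below every non-miner's score and $m<d$ guarantees the non-miner has at least one non-miner out-neighbor to sacrifice, that edge survives without ever invoking the \tiebreak. (Relatedly, a miner in the clique has $m-1$ stable \emph{adjacencies} to miners, not $m-1$ stable \emph{outgoing} edges; the directed bookkeeping matters for which node must initiate each connection.)

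Second, for Global-ordering your terminal state is not yet stable. After the sweep all non-miners tie at score $1$, but an exploratory edge to a \emph{higher-ranked} score-$1$ non-miner defeats a lower-ranked incumbent under Global-ordering, so the remaining miner-to-non-miner and non-miner-to-non-miner edges keep churning. You assert that ``the stable peers are the top-ranked non-miners,'' but your construction never routes the edges there. An additional convergence argument is needed --- the paper supplies one by defining a process in which each unstabilized node repeatedly targets the smallest-indexed non-miner it is missing, and proves by induction on rank that incoming edges to the top-ranked non-miners can never be displaced, so the out-edges migrate to them and freeze. Both gaps are repairable and the repaired argument coincides with the paper's, but as written the proof path you describe does not terminate in a network-stable topology.
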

     \begin{proof}
        From Proposition~\ref{prop:eventual_clique} we get that the miners will eventually form a miner-stable topology which is a miner clique. Given this, in the remainder of the proof, we first argue that all edges involving miners will stabilize, and lastly that all edges between non-miners will stabilize.

        \textbf{1. $\forall i\leq m, e_{i,j}$ stabilizes for all $j$.} To ease our analysis, we present a process which defines a path from any miner-stable topology to all edges involving miners being stable. For LIFO, we define some arbitrary order over all nodes, note it is important non-miners choose their edges before the miners do. In both cases, if there are no such nodes that satisfy the process, the node chooses a random new edge.

        \noindent \textbf{Def. \textit{miner-stability-process}:} \\
        \indent For each round $r$:\\
        \indent\indent - Each non-miner chooses as their out-edge, the first miner it is not yet connected to.
        \indent\indent - Each miner chooses an their out-edge the first non-miner it is not connected to.

        We prove that $\forall i\leq m, \exists r$ s.t. $e_{i,j}$ is stable $\forall j$, via a strong induction proof.

        \textbf{Base case: (All of miner $m_1$'s edges will stabilize)} In the first round of the process, all non-miners $n_i$ not connected to  $m_1$, connect to it and thus $e_{1,i}=1$ with $n_i$ never dropping $m_1$ (miners have the smallest score and $d>m$). Assume there is some edge of $e_{1,j}$ which never stabilizes. This edge must be from $m_1$ to $n_j$ (otherwise $n_j$ would have connected to $m_1$ in some round), and so has value 1. For it not to be stable, at some point it must be dropped by $m_1$ at step 3. of some round $r$. If so, in round $r+1$ node $n_j$ would then make a stable edge to $m_1$ $\bot$.

        \textbf{Inductive step:} Assume that $\forall j' < j\leq m$ miner $m_{j'}$ is  stable at some round $r$. It must be that miner $m_j$ is connected to all nodes for $r'\geq r$, and note that if a non-miner connects to $m_j$, they never drop the connection as, again, miners have the smallest score and $d>m$. The proof follows similarly to the base case:

        Assume there is some edge of $e_{j,i}$ which never stabilizes. This edge must be from $m_j$ to $n_i$ (otherwise $n_i$ would have connected to $m_j$ in some round), and so has value 1. For it not to be stable, at some point it must be dropped by $m_j$ at step 3. of some round $r$. If so, in round $r+1$ node $n_i$ would then make a stable edge to $m_j$ $\bot$.

        We now have that all miner edges stabilize, what's left to show is that the non-miner edges will stabilize. 
        
        \textbf{2. $\forall i,j e_{i,j}$ stabilizes.} We've shown that all miners connect in a stable clique (have score of $\frac{m-1}{m}$, and all non-miners have stable connections to all miners (have score of $1$). Note for LIFO \tiebreak, all edges between non-miners are stable as they all have equal lowest-score which will never change. We just need to show that the non-miner edges for Global-Ordering \tiebreak stabilize. Again we define a stabilizing process:

        \noindent \textbf{Def. \textit{non-miner-stability-process}:} \\
        \indent For each round $r$:\\
        \indent\indent - Each non-miner who does not have $d-1$ stable edges, picks the smallest-indexed \\ \indent\indent ~~non-miner they are not connected to \\
        \indent\indent - All other non-miners then pick a random new node

        Again use a strong induction argument to prove all non-miners stabilize.

        \textbf{Base case: the smallest indexed non-miner $n_{m+1}$ stabilizes} Notice that in the first round of the process, all non-miners connect to $n_{m+1}$. Since it is smallest indexed non-miner, all incoming edges to it will not be replaced. Since it is connected to all nodes, and all incoming edges are stable, there are no other nodes, and will never be any other nodes, it could connect to as their $d$-th edge, so no outgoing edges will be dropped.

        \textbf{Inductive step:} Assume that $\forall  j' < j$ miner $n_{j'}$ is stable at some round $r$. Assume $n_j$ will never stabilize, this means there is some round $r'\geq r$ where $n_j$ will not gain any more stable edges. Assume in this round, some non-stable node $n_i$ connects to $n_j$. This edge will be stable as this node must already be connected to all $n_{j'}$ for $j'<j$, otherwise it wouldn't have chosen $j$, and all those connections are stable by the assumption, so there is no future edge that could win the tiebreak. If no new nodes pick $n_j$ as their outgoing edge, this means $n_j$ picked some node $n_i$ it was not connected to (otherwise it would have been stable) as its new edge. If $n_i$ has a smaller index than an existing edge, then it replaces that edge and stabilizes (as all incoming connections are stable). If $n_i$ does not replace an existing edge, this means all outgoing edges were stable. In every case we get a contradiction $\bot$.

     \end{proof}

    \begin{corollary}
        \label{corr:eventual-stable}
        For $m<d$, and any \tiebreak, if there exists a miner-stable or network-stable topology, then the network will arrive at some such topology.
    \end{corollary}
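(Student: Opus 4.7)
The plan is to reduce the corollary to the already-established propositions by splitting on (i) the type of stability asserted and (ii) the Tie Breaker Rule in use. Since a network-stable topology is in particular miner-stable (every edge between miners is an outgoing edge of some miner, and if all $d-1$ outgoing edges of every miner are stable then all edges between miners are stable), the real content of the corollary is a conditional for each notion of stability separately.

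First I would handle the miner-stable claim uniformly. Since $m<d$ implies $m \le d$, Proposition~\ref{prop:eventual_clique} applies regardless of the Tie Breaker Rule, so starting from any state the network reaches a miner-stable topology (the miner clique). In particular a miner-stable topology always exists in this regime, so the antecedent of the corollary is witnessed and the consequent is delivered simultaneously. This single invocation already covers all four rules for the ``miner-stable'' half of the disjunction.

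Next I would handle network stability by case analysis on the rule. For LIFO and Global-Ordering, Proposition~\ref{prop:stable-inevitable} gives exactly the desired conclusion: whenever $m<d$ and the rule is one of these two, the network arrives at a network-stable topology, so the conditional is realized non-vacuously. For FIFO and Random, I would appeal to Lemma~\ref{lem:ran-no-stable} (applied under the standing $n \gg d$ assumption of Section~\ref{sec:greedy}, in particular $n>2d^2$) to conclude that no network-stable topology exists; the antecedent of the corollary's network-stable clause is therefore false, and the conditional holds vacuously. Putting the two halves together yields the corollary.

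There is no real technical obstacle; the only care required is to be explicit that the $n>2d^2$ hypothesis of Lemma~\ref{lem:ran-no-stable} is the same standing assumption already in force in this section, so that the ``vacuous'' case for FIFO and Random is genuinely applicable. The remainder is pure bookkeeping across the four Tie Breaker Rules and the two notions of stability.
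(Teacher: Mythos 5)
Your proposal is correct and matches the paper's (implicit) argument: the corollary is stated without proof precisely because it follows by combining Proposition~\ref{prop:eventual_clique} for the miner-stable half under any \tiebreak, Proposition~\ref{prop:stable-inevitable} for network stability under LIFO and Global-Ordering, and Lemma~\ref{lem:ran-no-stable} to render the network-stable clause vacuous for FIFO and Random under the section's standing $n>2d^2$ assumption. Your explicit flagging of that standing assumption is the right care to take, and nothing further is needed.
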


     \junk{
        \subsection{Markov Model}
        To ease our analysis, we define the following Markov Model to capture all possible states of the protocol.
    
        \begin{definition}[States of the Markov model for \greedy]
        We define a state $S_\alpha$ for a graph $G_\alpha$ with edges $e_{i,j}\in E_\alpha$ which encode the information needed for the \tiebreak. Each round of \greedy is a state transition. 
    
        We define the transition $E(G^\beta_{S_\alpha}):S_\alpha \Longrightarrow \{S_\beta\}$  such that:
        \begin{enumerate}
            \item After step one of \greedy, we end up with the graph $G^\beta_{S_\alpha}$
            \item $\{S_\beta\}$ if the set all of states represented by each graph $G_{S_\beta}$ that results from step three of \greedy. If the \tiebreak is deterministic, this set contains only one state. 
        \end{enumerate}
        \end{definition}
    
        We can define the proposition $P(G)$ as true if the graph $G$ satisfies the property $P$, and $P(\{G\})$ is true if all graphs in $\{G\}$ satisfy the property.

        \begin{definition} [Terminal States]
            We define a state $S_\alpha$ as \textit{terminal} if, for all transitions $E(G^\beta_{S_\alpha})$, $~G_{S_\alpha} = G_{S_\beta}$. We define \minerterminal states as the set of states where the property of miner stability holds (all edges between miners are stable), and \networkterminal as the set of states where all edges between any two nodes are stable.
        \end{definition}   
     }

\section{Simulations}
\label{sec:simulations}
    In the previous section, we prove the conditions for stability in a network implementing \greedy and some conditions for when a network running \greedy will stabilize. In this section, we simulate \greedy to explore how the networks evolve, possible properties that stabilize for different network sizes and in-bound connection capacities, and the impact of having more exploratory edges in the protocol. 

    As in the theory, we assume each node can uniformly pick a random new peer from the set of all nodes in the network, and knows the score of their peers at step (3) of each round. In simulations, at step (1) of the protocol, each node adds a new edge to a peer they are not yet connected to. The order of the peer selection is randomized in each round, and with capped $d_{in}$, if the new peer they try to connect to has full in-edge capacity, then the node tries another random node until they succeed, or run out of peers (\eg if all nodes that are not currently peers have full in-connections)\footnote{We emphasize that all other steps of each round are done in parallel for all nodes. We explored simulations where nodes completed each full round one at a time and found no differences in the simulation results.}. In the previous section, we prove that of the \tiebreaks we consider, only LIFO and a global ordering have the property that for all $m,n$ there exists a miner-stable topology. Since a global ordering over all nodes is not intuitive in a decentralized network, we  use LIFO as the \tiebreak in our simulations.  We run each of the simulations below 20 times and plot the averages of the runs. 
    
    \noindent \textbf{Main results of the simulations:}
    \begin{NoIndentEnumerate} 
        \item 
        Our simulations suggest that the effect of the $d_{in}$ cap is smooth (as the cap is raised, the behavior shifts smoothly to the no cap model).
        
        \item Though theoretical results show the existence of stable topologies, in particular for $m \leq d$ that cliques between miners are stable and inevitable, even for low $n$ (e.g., 400 and 900), arriving at such a topology is hard/improbable and our simulations do not reach a miner clique in the first 256 rounds.

        \item An extension of \greedy 
        where nodes drop the $k$-worst outgoing edges each round suggests the additional randomness can lead to the network converging quicker to the same values, but that this advantage is maximized at low values of $k$ (e.g., 3).
    \end{NoIndentEnumerate}
    \textit{Our analysis indicates that, consistent with the theoretical results, a hierarchical structure emerges within these networks with respect to node scores, degree, and centrality, which are highly correlated. In general, these networks converge to overall lower properties, but often with some nodes (primarily the miners) being at a clear advantage and skewing the average.}\\
    
    \noindent \textit{Starting topology.} For most of our simulations, we begin with a random graph where each node chooses $d$ initial random peers it is not yet connected to
    (see Figure~\ref{fig:graphs}(a).). 
    As we discuss below, the \greedy protocol tends to converge to a structure like Figure~\ref{fig:graphs}(b) where few nodes in the middle connect to most others on the periphery. 
    We also run simulations starting with a small-world graph (using the Watts-Strogratz \cite{watts1998collective} model with probability 0.5) and with a scale-free graph (using the Barabasi-Albert \cite{albert2002statistical} model with an initial connected component of size 20). For full analysis, see Appendix~\ref{app:alt-graph}. We observe that the initial graph does have a small impact on the properties the \textit{capped} simulations are converging to (\eg scale-free network converges to higher average distance to miners).

    \begin{figure}
            \centering
            \begin{subfigure}{.32\textwidth}
                \centering
                \includegraphics[width=\linewidth]{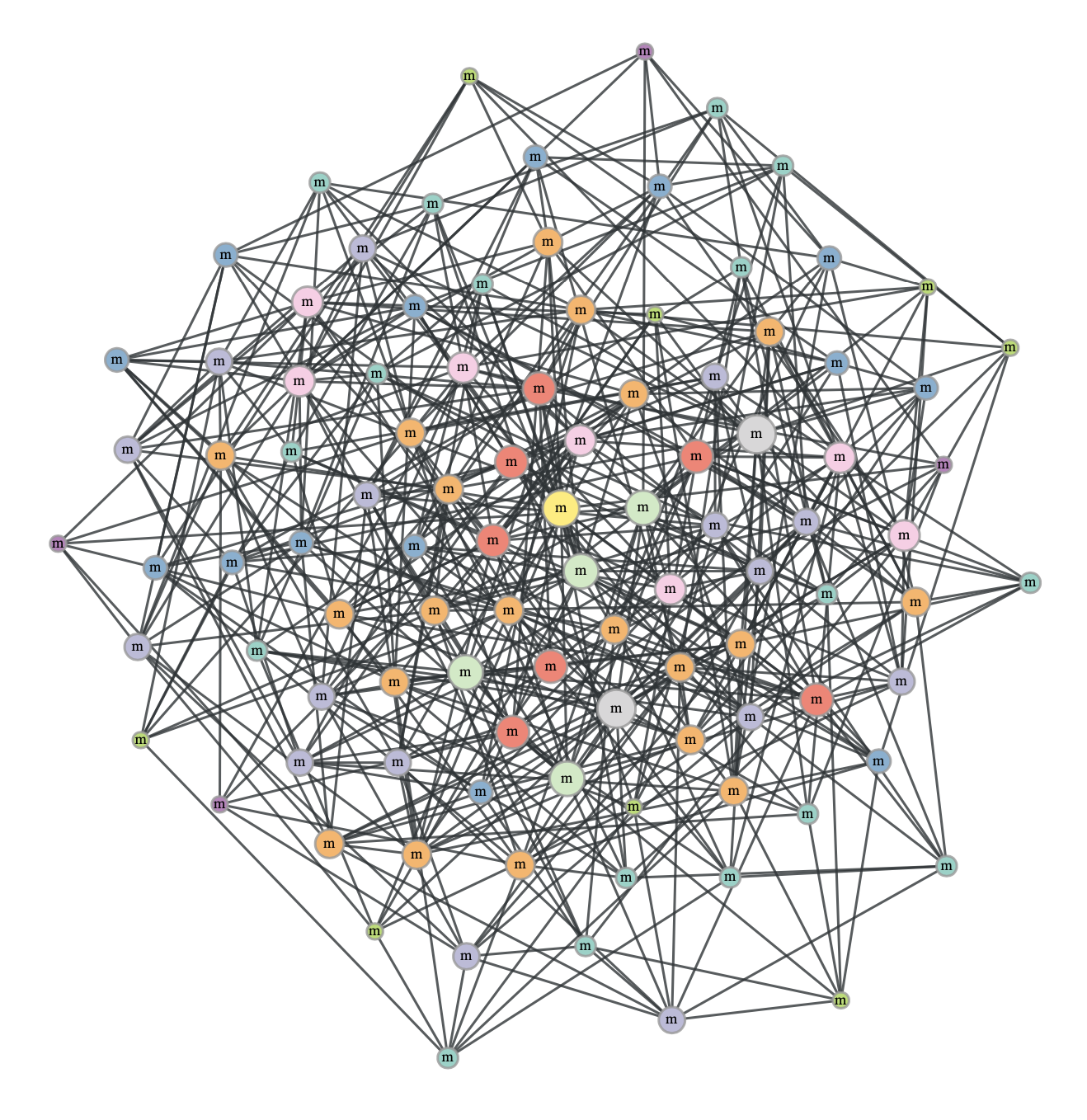}
                \captionsetup{justification=centering}
                \caption{}
            \end{subfigure}~~~~~~~~~~~~~~~~~
            \begin{subfigure}{.32\textwidth}
                \centering
                \includegraphics[width=\linewidth]{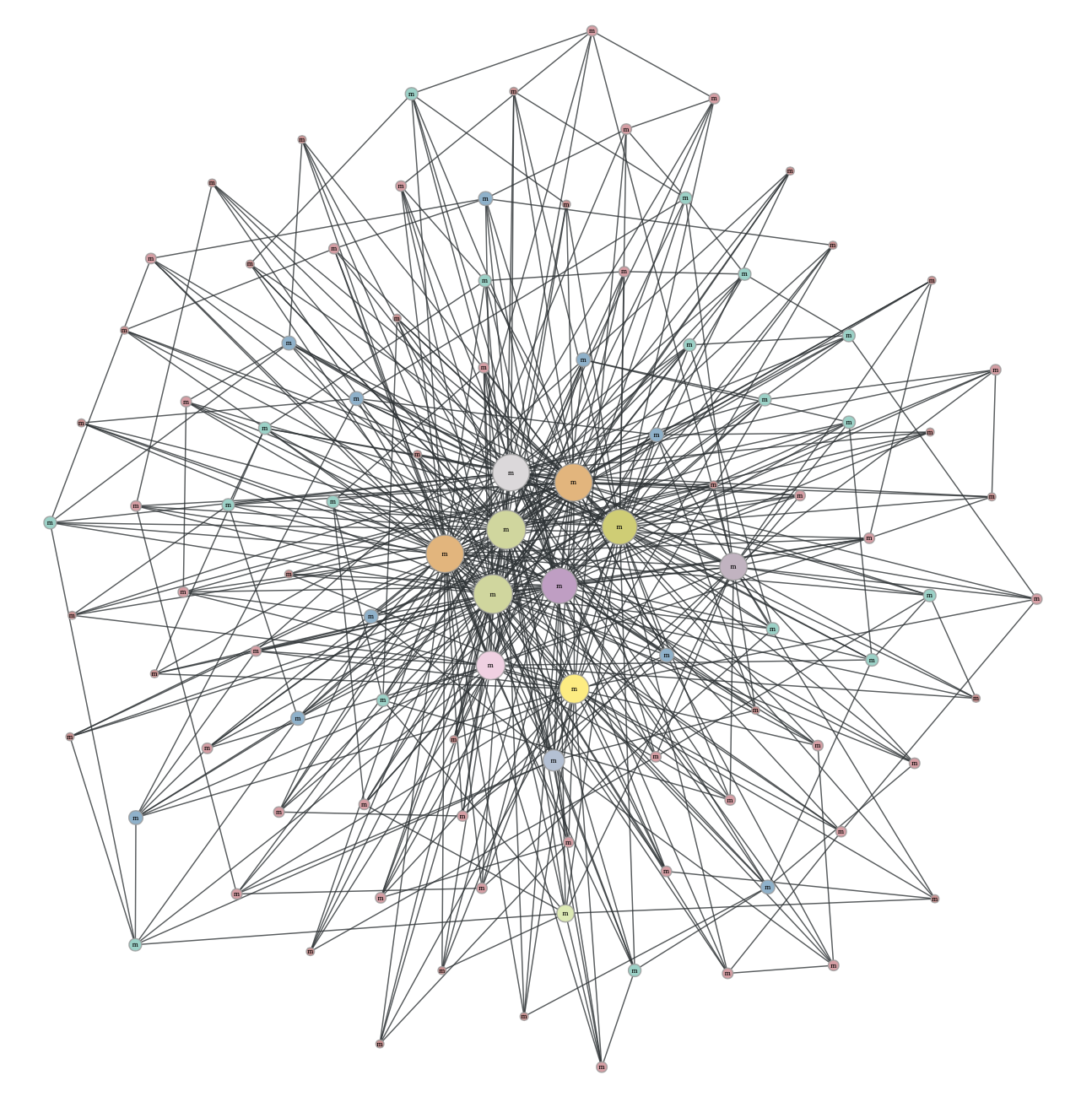}
                \captionsetup{justification=centering}
                \caption{}
            \end{subfigure}
            \caption{
            (a) the random graph created from all nodes choosing $d$ random nodes to connect to. (b) the graph we seem to converge to from running \greedy with all nodes being homogeneous miners, size scaled by degree. This structure appears after $\sim$20 rounds. 
            } 
            \label{fig:graphs}
    \end{figure}

    \subsection{Effect of In-Degree Caps}
        To complement our limited theoretical analysis with bounded $d_{in}$, we explore the impact the inbound cap has on networks of all miners, and networks with a subset of 10 nodes being miners. Figure~\ref{fig:mult-avg-d} shows the average distance over time for different inbound caps. We see that if the network is all miners, we need a substantial inbound cap for the average distance between all nodes to be better than the random graph within the span of our simulations. We see, however, in Figure~\ref{fig:mult-avg-d-pdf} that this lowered average distance is skewed by a few nodes being better off than the majority of the network. When only some of the nodes are miners,  the network has the potential to converge to one with a better average distance to the miners; higher the inbound cap,  lower the average distance to miners. 
        Figure~\ref{fig:mult-diam} supports this, showing the network diameter approaching 2. Additionally, the diameter between just the miners is approaching 1 meaning that even with bounded capacity, the miners are forming a clique. \textit{Our main observation here is that the value of the bound has a smooth effect on network properties.}


        \begin{figure}
            \centering
            \begin{subfigure}{.4\textwidth}
                \centering
                \includegraphics[width=\linewidth]{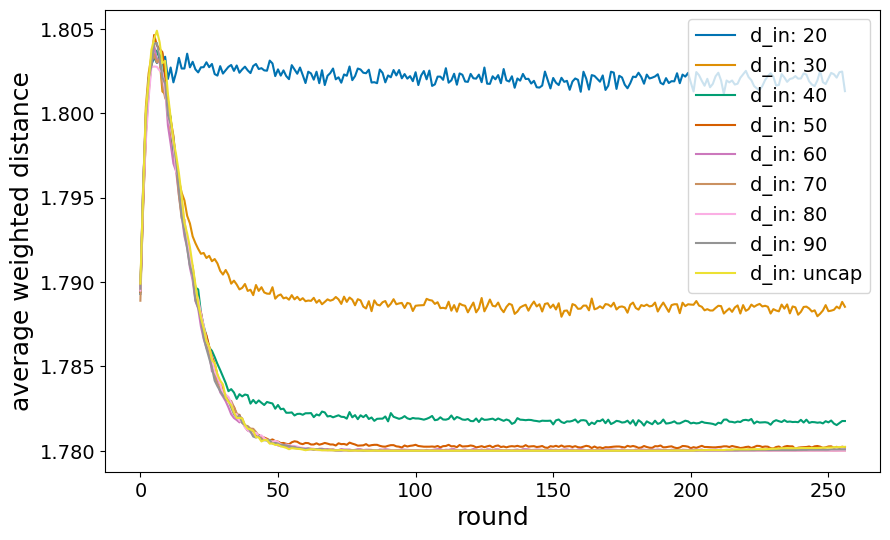}
                \caption{All Miners}
            \end{subfigure} ~~~
            \begin{subfigure}{.4\textwidth}
                \centering
                \includegraphics[width=\linewidth]{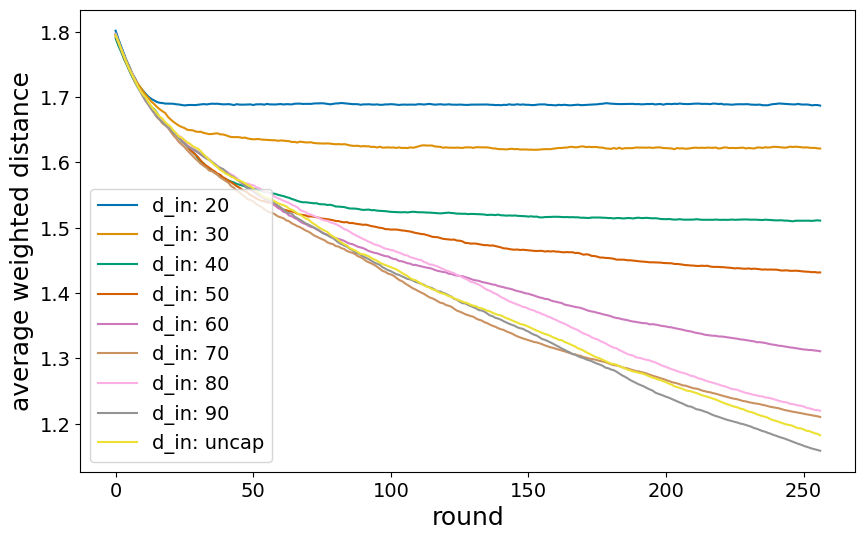}
                \caption{Subset Miners}
            \end{subfigure}
            \caption{Average distance to miners per round for an $n=100$ node network with all miners and $m=10$ miners, for different in-degree caps. }
            \label{fig:mult-avg-d}
        \end{figure} 

        \begin{figure}
            \centering
            \begin{subfigure}{.4\textwidth}
                \centering
                \includegraphics[width=\linewidth]{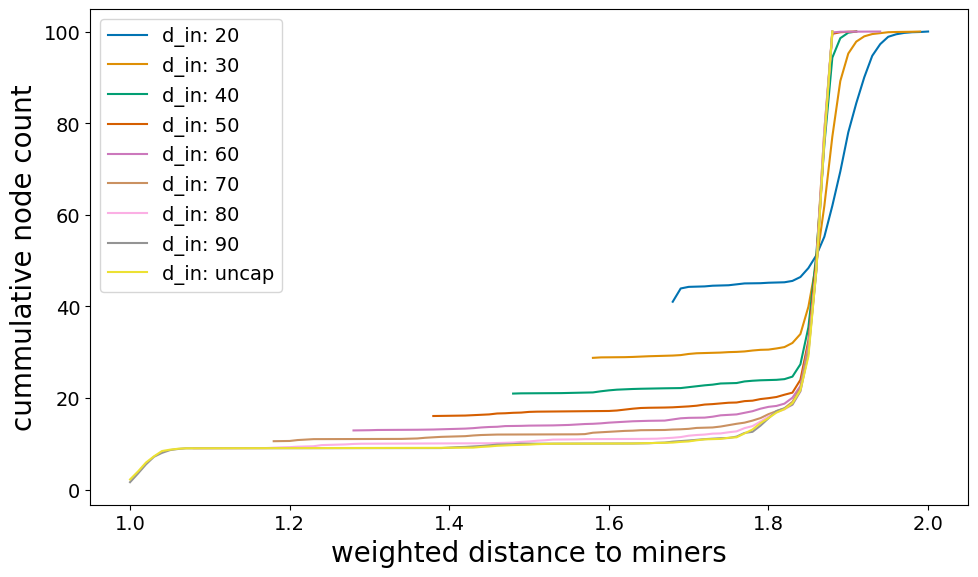}
                \caption{All Miners}
            \end{subfigure}~~~
            \begin{subfigure}{.4\textwidth}
                \centering
                \includegraphics[width=\linewidth]{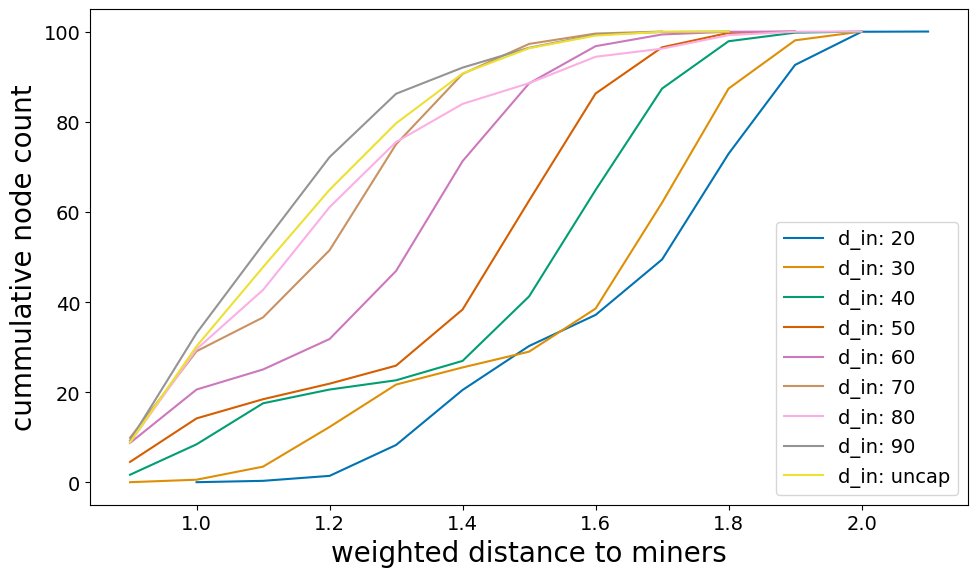}
                \caption{Subset Miners}
            \end{subfigure}
            \caption{Distribution of distance to miners(node scores) at round 256 for an $n=100$ node network with all miners and $m=10$ miners, for different in-degree caps.}
            \label{fig:mult-avg-d-pdf}
        \end{figure} 

        \begin{figure}[H]
            \centering
            \begin{subfigure}{.75\textwidth}
                \centering
                \includegraphics[width=\linewidth]{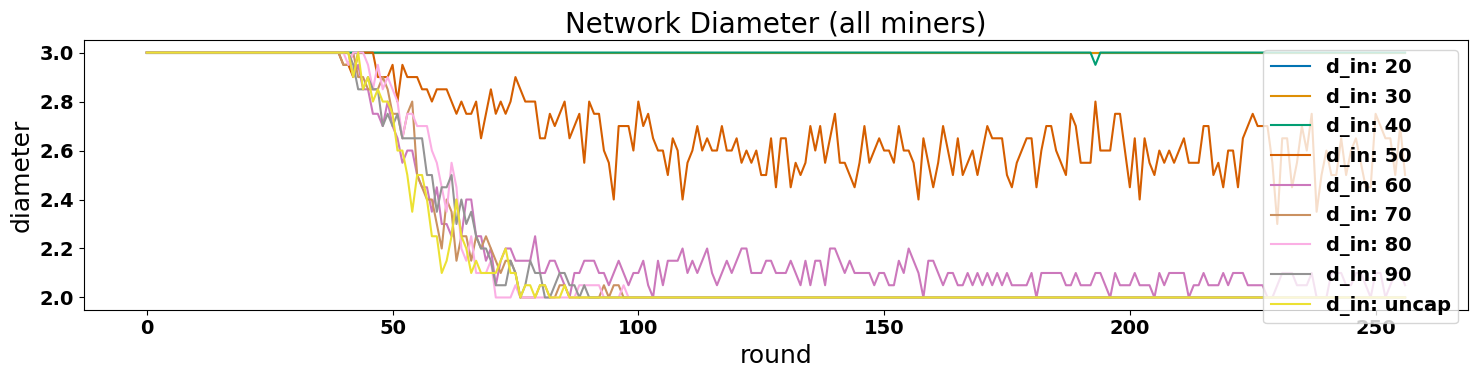}
            \end{subfigure}\\
            \begin{subfigure}{.75\textwidth}
                \centering
                \includegraphics[width=\linewidth]{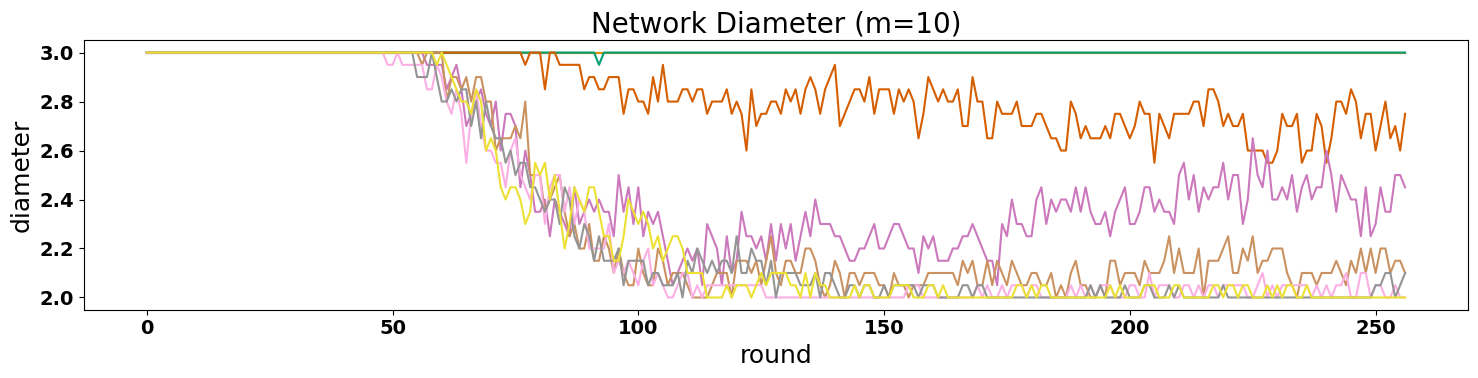}
            \end{subfigure}\\
            \begin{subfigure}{.75\textwidth}
                \centering
                \includegraphics[width=\linewidth]{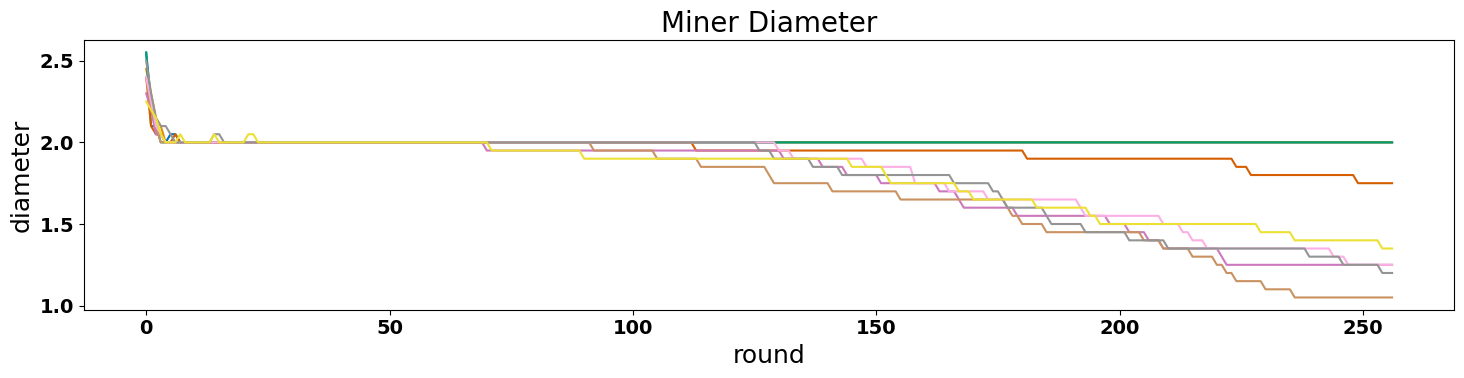}
            \end{subfigure}
            \caption{Diameter of a 100 node network with all miners and a subset of 10 miners, and the diameter between the 10 miners. }
            \label{fig:mult-diam}
    \end{figure}

    \subsection{Network Size}
    We also explore the impact of network size on our simulations when a subset of the network is miners. Again, we consider $m=10$ and $d=10$, with $d_{in}=20$ or uncapped. Our theoretical results show that for $m\leq d$, the miners connecting in a clique is a miner-stable topology, regardless of what the rest of the network is doing (for uncapped $d_{in}$, this is the only miner-stable topology). We run simulations for $n=100, 400$ and $900$ and plot the  network/miner diameter and eccentricities in Figure~\ref{fig:10-diam-eccen}.
    %
     %
     \textit{Our main take away is that, though miners are becoming better connected over time, with more non-miners, it is harder for the miners to form a clique.} This is likely due a decreasing probability of another miner being randomly selected by a miner as a neighbor as the number of nodes increases. The non-miners also maintain connections to the miners (seen in the eccentricity of the miners in Figure~\ref{fig:10-diam-eccen}), it is thus possible the non-miners will occupy all the incoming edges of the miners before they form a clique. This seems to be the case for even $n=100$ as the capped miner diameter is stabilizing at 2.

    \begin{figure}[H]
            \centering
            \begin{subfigure}{.45\textwidth}
                \centering
                \includegraphics[width=\linewidth]{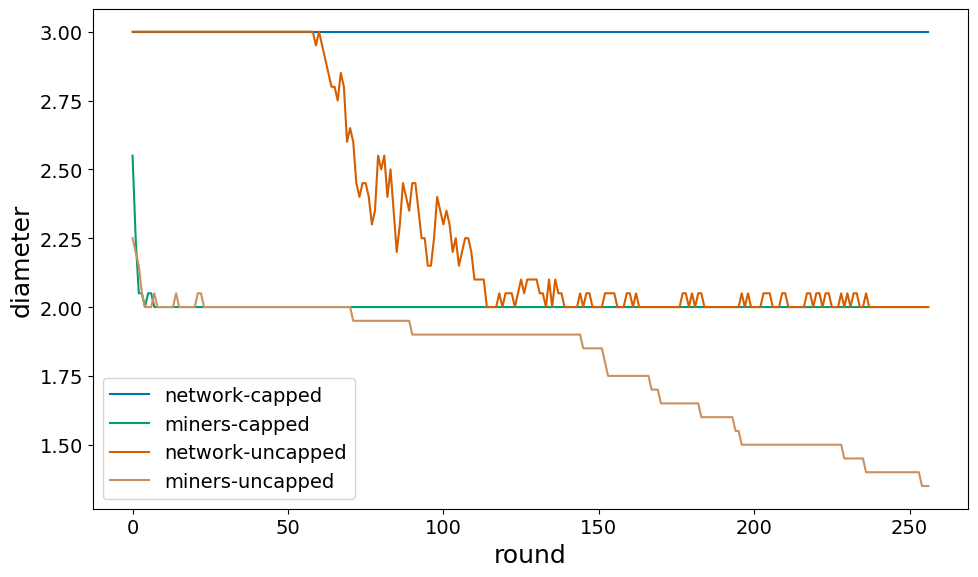}
                \caption{100 nodes diameter}
            \end{subfigure}
            \begin{subfigure}{.45\textwidth}
                \centering
                \includegraphics[width=\linewidth]{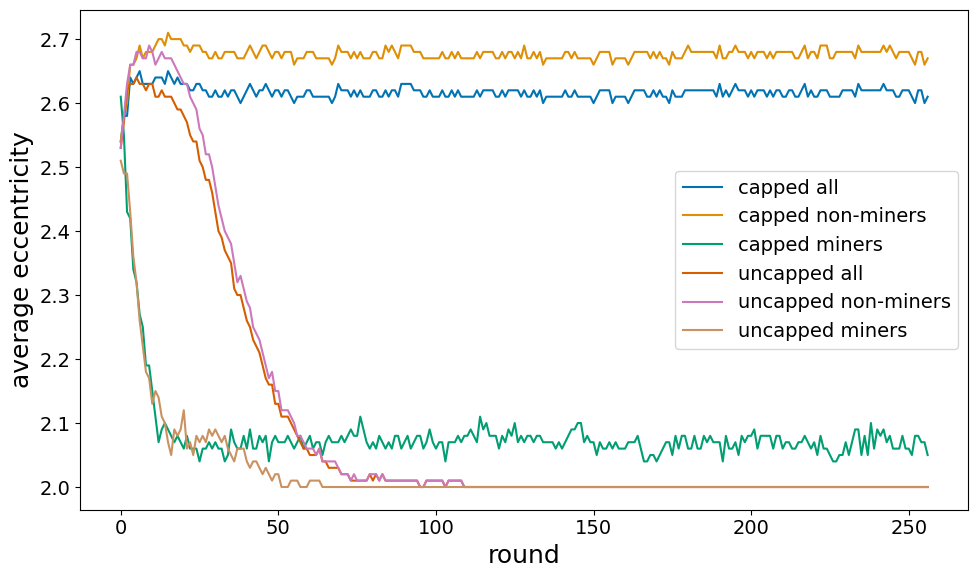}
                \caption{100 nodes eccentricity}
            \end{subfigure}\\
            \begin{subfigure}{.45\textwidth}
                \centering
                \includegraphics[width=\linewidth]{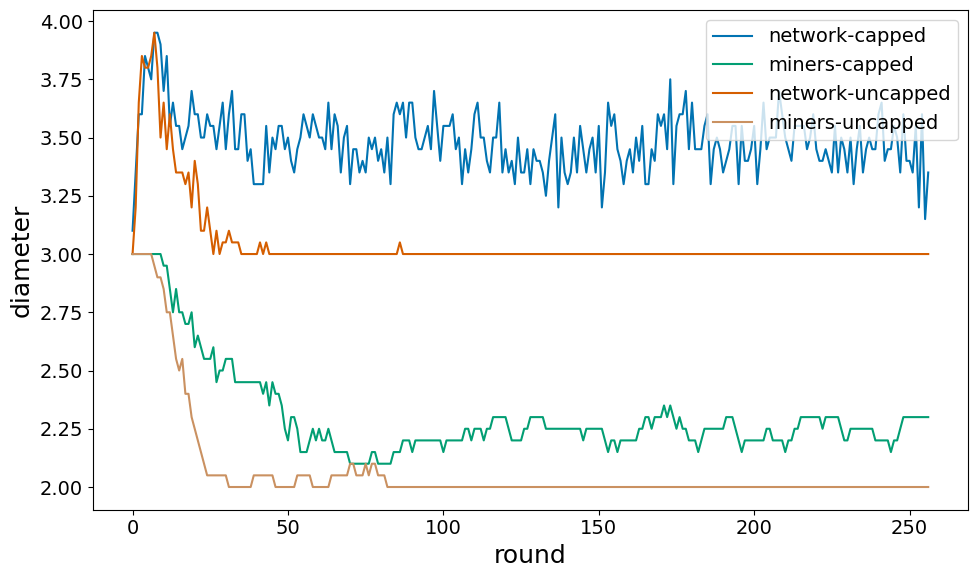}
                \caption{400 nodes diameter}
            \end{subfigure}
            \begin{subfigure}{.45\textwidth}
                \centering
                \includegraphics[width=\linewidth]{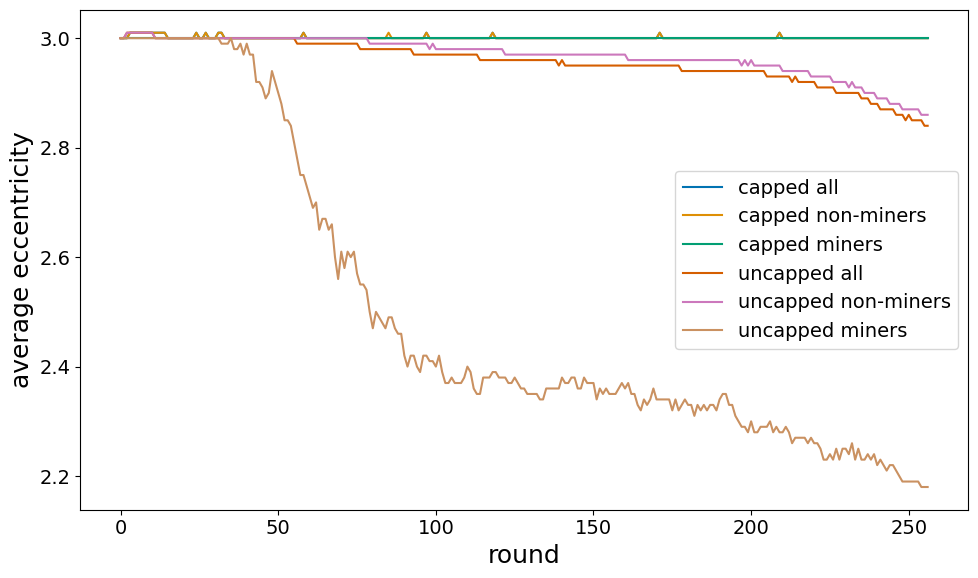}
                \caption{400 nodes eccentricity}
            \end{subfigure}\\
            \begin{subfigure}{.45\textwidth}
                \centering
                \includegraphics[width=\linewidth]{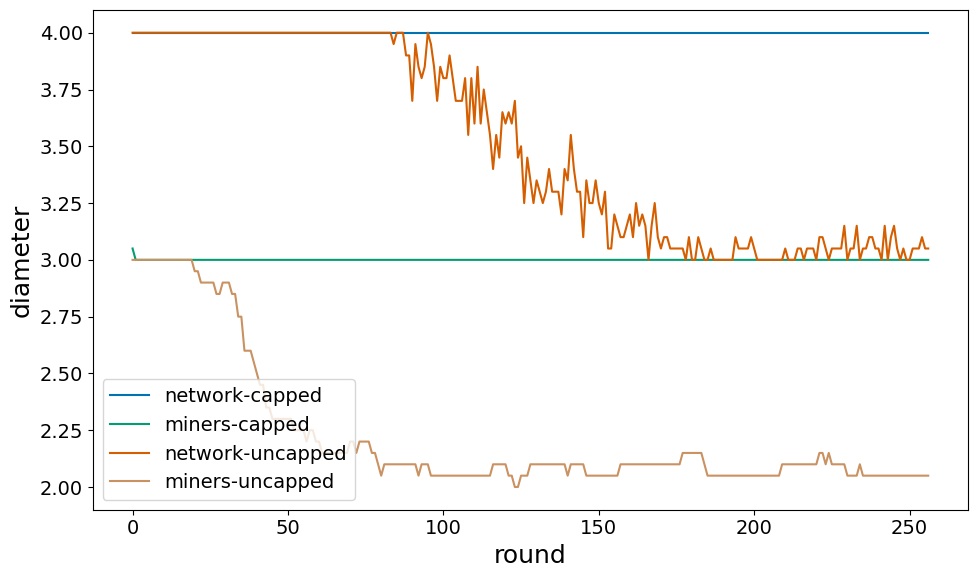}
                \caption{900 nodes diameter}
            \end{subfigure}
            \begin{subfigure}{.45\textwidth}
                \centering
                \includegraphics[width=\linewidth]{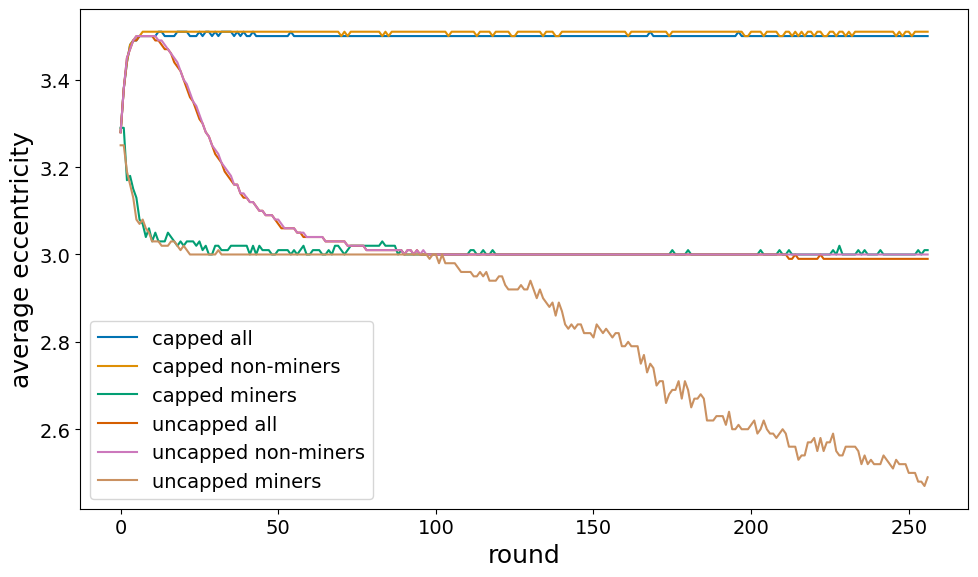}
                \caption{900 nodes eccentricity}
            \end{subfigure}
            \caption{Diameter and eccentricity over time for networks of 100, 400, and 900 nodes with 10 miners and out-degree 10. Evidence miners are mostly finding each other, it just takes some time for a clique to form.}
            \label{fig:10-diam-eccen}
    \end{figure}

    \subsection{k-worst edges} 
        So far, in both the theory and our simulations, we've considered dropping only the single worst outgoing edge. We also explore what happens when nodes drop the $k < d$ worst outgoing edges. Note that $k=d$ is essentially a random graph. Conceptually, more dropped edges leads to more randomness in the network, as more nodes have more incoming connections (see Figure~\ref{fig:din-ks}). When all nodes are miners, there is not a big difference in network behavior (Figure~\ref{fig:avg-dist-ks} a. and b. show some difference in behavior but the scale is quite small). For most cases (low enough $k$) in the subset miner simulations, miners are finding each other over time, but the additional randomness in the network appears to lead to worse average distances (Figure~\ref{fig:avg-dist-ks} d.). The main advantage that higher values of $k$ sometimes give is a faster convergence to the same values, as seen in network and miner diameters (Figure~\ref{fig:diam-ks}). \textit{Generally, dropping more outgoing edges does not change the behavior of the network much, and where it does, the optimal $k$ seems to be up to 3.}

        \begin{figure}
            \centering
            \begin{subfigure}{.4\textwidth}
                \centering
                \includegraphics[width=\linewidth]{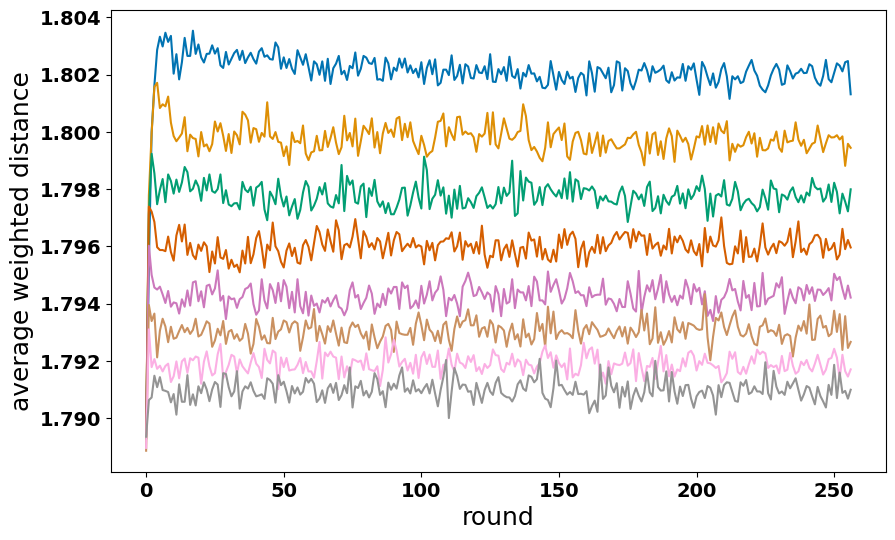}
                \caption{all miners capped}
            \end{subfigure}~~~
            \begin{subfigure}{.4\textwidth}
                \centering
                \includegraphics[width=\linewidth]{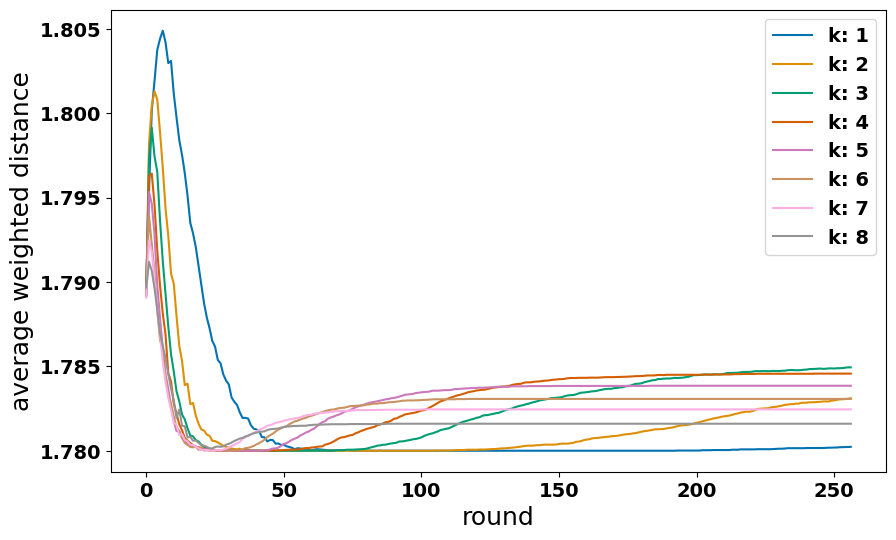}
                \caption{all miners uncapped}
            \end{subfigure}\\
            \begin{subfigure}{.4\textwidth}
                \centering
                \includegraphics[width=\linewidth]{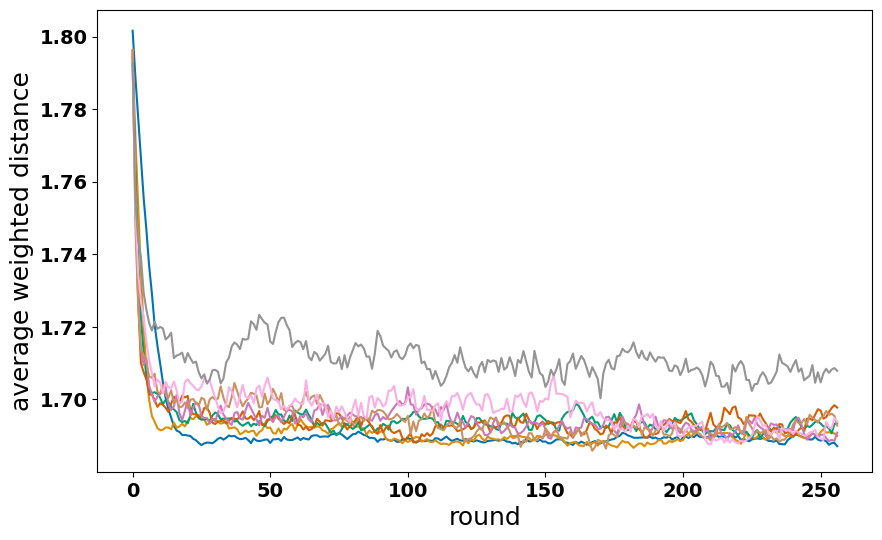}
                \caption{10 miners capped}
            \end{subfigure} ~~~
            \begin{subfigure}{.4\textwidth}
                \centering
                \includegraphics[width=\linewidth]{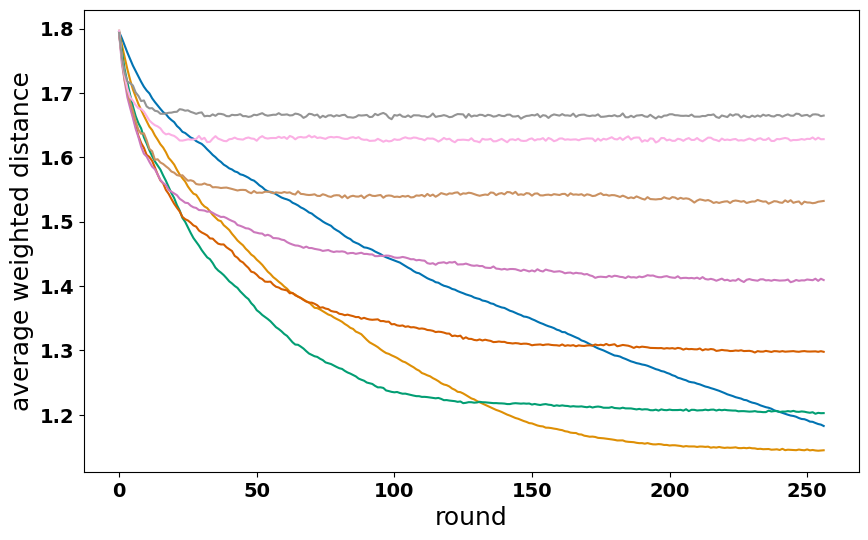}
                \caption{10 miners uncapped}
            \end{subfigure}
            \caption{Average distance to miners for different $k$-worst values, for networks of $n=100$, incoming cap of 20 or unlimited, and all miners or subset miners.}
            \label{fig:avg-dist-ks}
        \end{figure}

    \begin{figure}[H]
            \centering
            \begin{subfigure}{.75\textwidth}
                \centering
                \includegraphics[width=\linewidth]{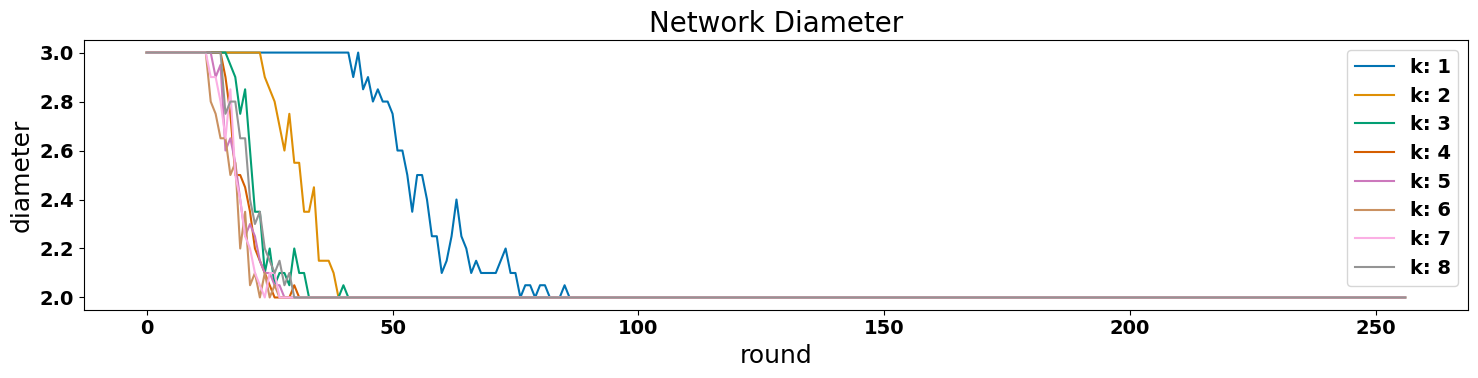}
            \end{subfigure}
            \begin{subfigure}{.75\textwidth}
                \centering
                \includegraphics[width=\linewidth]{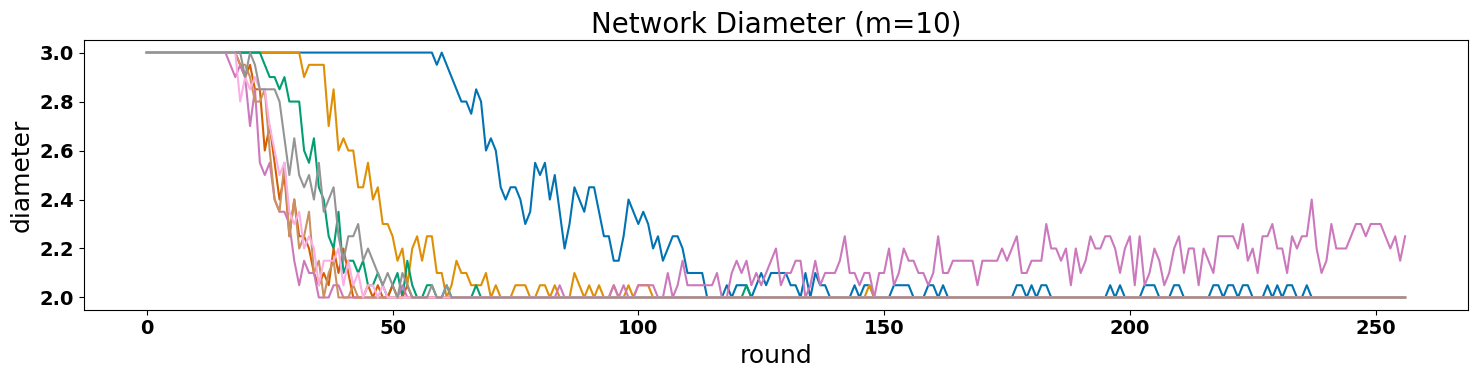}
            \end{subfigure}
            \begin{subfigure}{.75\textwidth}
                \centering
                \includegraphics[width=\linewidth]{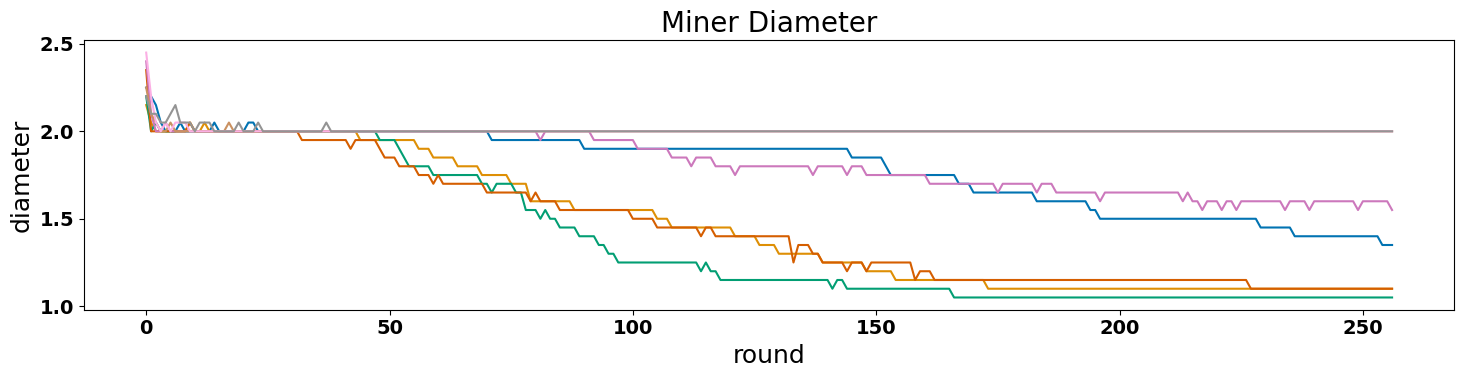}
            \end{subfigure}
            \caption{Network diameter for all miners, subset miners and miner diameter for subset miners for different $k$-worst values. All three plots are for uncapped $d_{in}$, when $d_{in}$ is capped the network diameter is constant. We see the network diameter is generally approaching 2, and that higher $k$ leads to quicker convergence. Interestingly, the miner diameter is minimal at $k=3$.}
            \label{fig:diam-ks}
    \end{figure}

     \begin{figure}[H]
            \centering
            \begin{subfigure}{.45\textwidth}
                \centering
                \includegraphics[width=\linewidth]{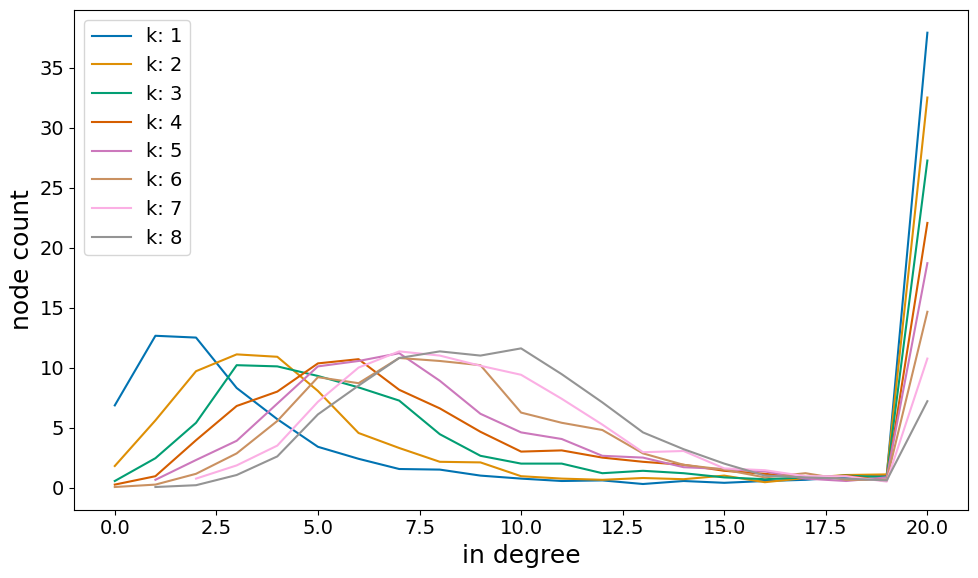}
                \caption{capped}
            \end{subfigure}
            \begin{subfigure}{.45\textwidth}
                \centering
                \includegraphics[width=\linewidth]{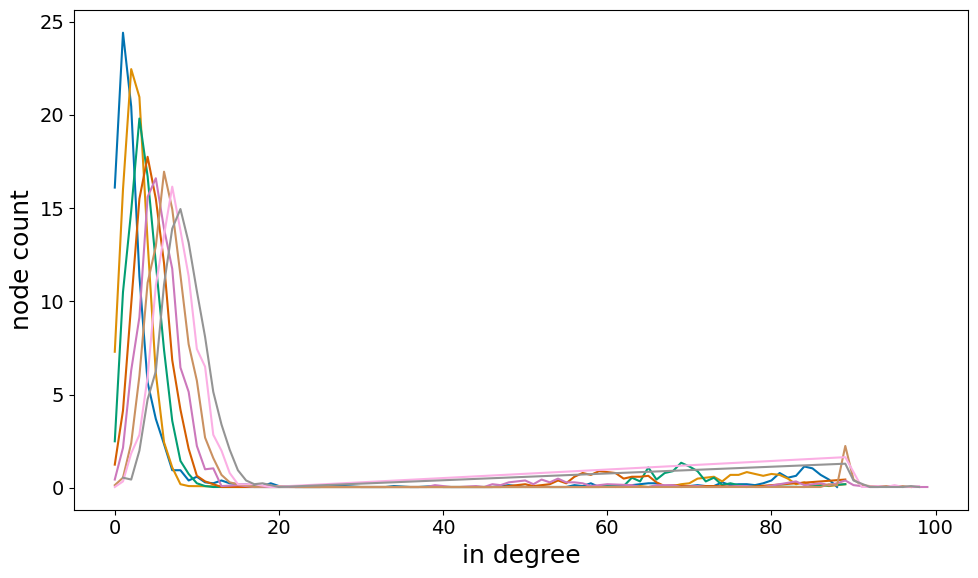}
                \caption{uncapped}
            \end{subfigure}
            \caption{In-degree distribution for different $k$-worst values for sub-network of miners. The case when all nodes are miners looks the same. For the capped case, the larger the $k$, the closer the network behaves to the initial random graph.}
            \label{fig:din-ks}
    \end{figure}

\section{Discussion}
In this work, we study the impact of greedy peering choices in a P2P network. In our models, we consider some nodes (either all or a subset of nodes) as ``miners'' as the source of messages that all nodes are trying to optimize their distance to. We mostly consider the case of homogeneous miners\footnote{We explore miners having different weights (representing different mining powers) in Appendix~\ref{app:miner-heterogeneity} and find that the more disparate the weights, the more the network behaves like the subset miner network. }, so it is often the case that a nodes' peers will tie in their scores. In fact, in our theoretical results, the choice of tie-breaking rule is a determinant in whether the network will stabilize. Our choice of edges having a weight of 1 symbolizes the delay in propagation each hop adds, normalized over some period of time. In reality, this weight would be an average of delays, and therefore ties between peers would be quite rare. We consider our model of lowering hop distance (as opposed to link-latency distance) for a network where link latency has less of an impact than the processing latency added by each node. It seems that the global-ordering rule fits best as a stand-in for this, assuming the processing latency of any two nodes don't differ too much. Node heterogeneity of this kind and in default parameters is left for future work.


%
Both our theoretical results and simulations look at the conditions for stability and how the network evolves into a possibly stable state. In both cases, we see a two-tier system for the miners in a stable topology and an additional third tier of all other nodes. An interesting direction for future work is what happens when new nodes (with or without the preferential property) join these stable networks: Does the network (re)stabilize and how long does it take?  

%
In both our simulations and theoretical analysis, stability properties are observed in the regime of a small $m$ relative to $n$, we note that cryptocurrency networks historically fall under this regime with a minority of network participants being miners \cite{kiffer2021under}. 
In our simulations, we explore how networks stabilize, particularly for network conditions that are difficult to theorize about.  While we explore several parameter choices, one caveat of our simulations is that we stay within relatively small network sizes (compared to the major existing cryptocurrencies, e.g., Bitcoin and Ethereum) so that the computation is tractable and we are still able to observe network dynamics\footnote{We note, however, that our simulations are within the realm of many smaller cryptocurrency networks such as the Ethereum Classic Network with around 500 observed nodes \cite{etcnodes}.}. In the same spirit, we consider a single $d$ value while changing the network size and $d_{in}$ caps to explore the proportional effect of the out-degree. Our theoretical results, however, deal with more general parameters. 


    \bibliographystyle{plain}
    \bibliography{main.bib}

    \appendix
    \section{Additional Simulations}
        We also briefly consider other experimental setups, namely the starting topology of our simulations and heterogeneity in the weight of individual miners in the scoring function.

        \subsection{Alternative initial graph topology} 
        \label{app:alt-graph}

        \begin{figure} [H]
                \centering
                \begin{subfigure}{.32\textwidth}
                    \centering
                    \includegraphics[width=\linewidth]{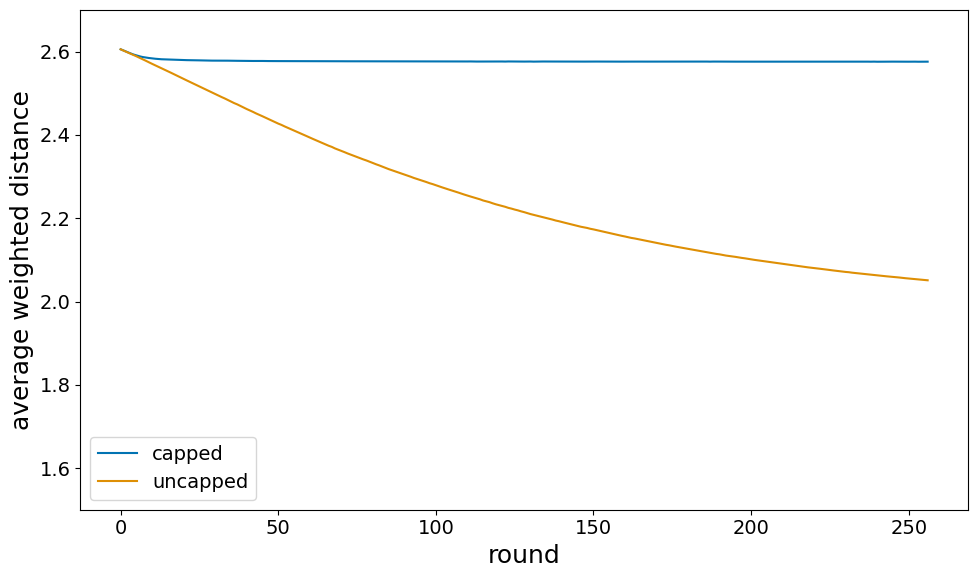}
                    \caption{Random (homogeneous)}
                \end{subfigure}
                \begin{subfigure}{.32\textwidth}
                    \centering
                    \includegraphics[width=\linewidth]{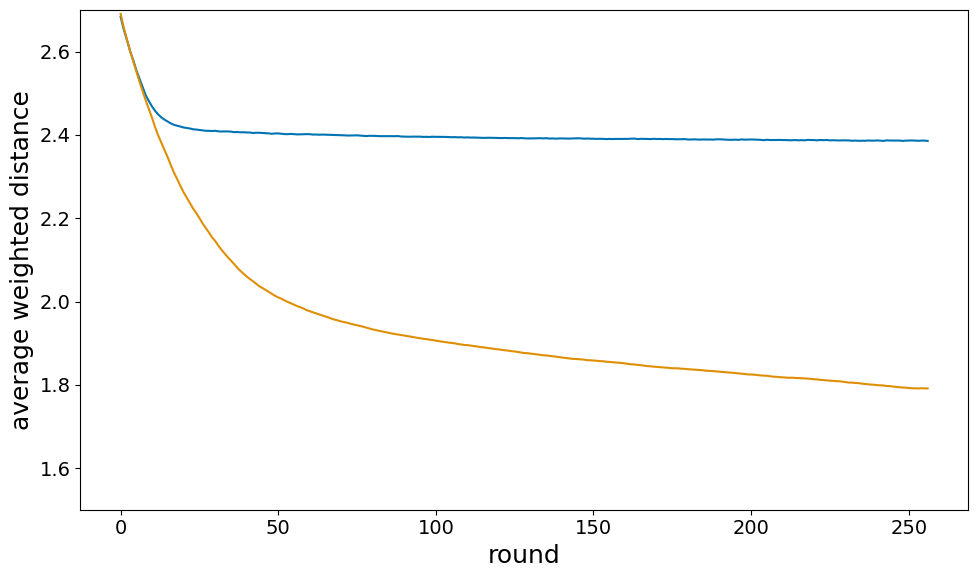}
                    \caption{Small world (homogeneous)}
                \end{subfigure}
                \begin{subfigure}{.32\textwidth}
                    \centering
                    \includegraphics[width=\linewidth]{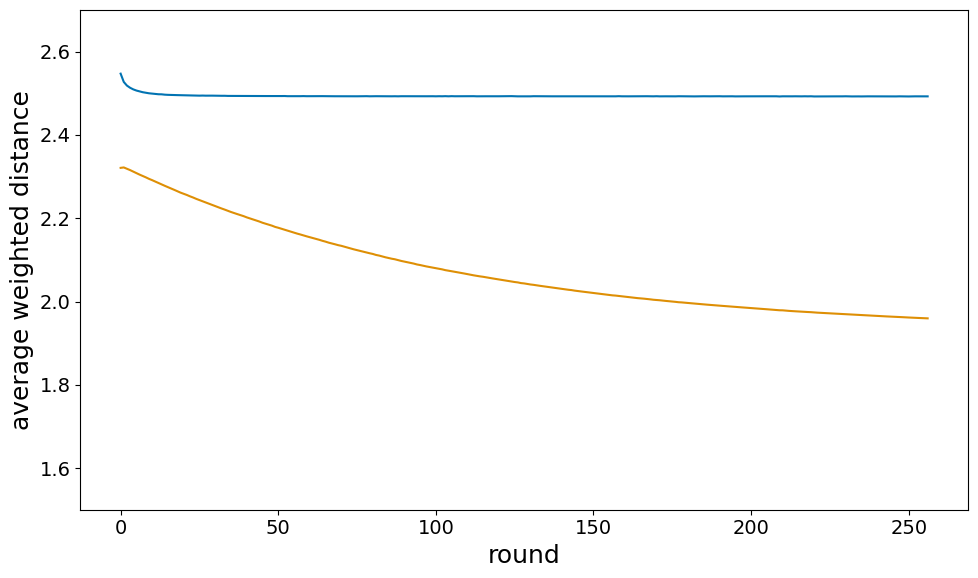}
                    \caption{Scale-free (homogeneous)}
                \end{subfigure}\\
                \begin{subfigure}{.32\textwidth}
                    \centering
                    \includegraphics[width=\linewidth]{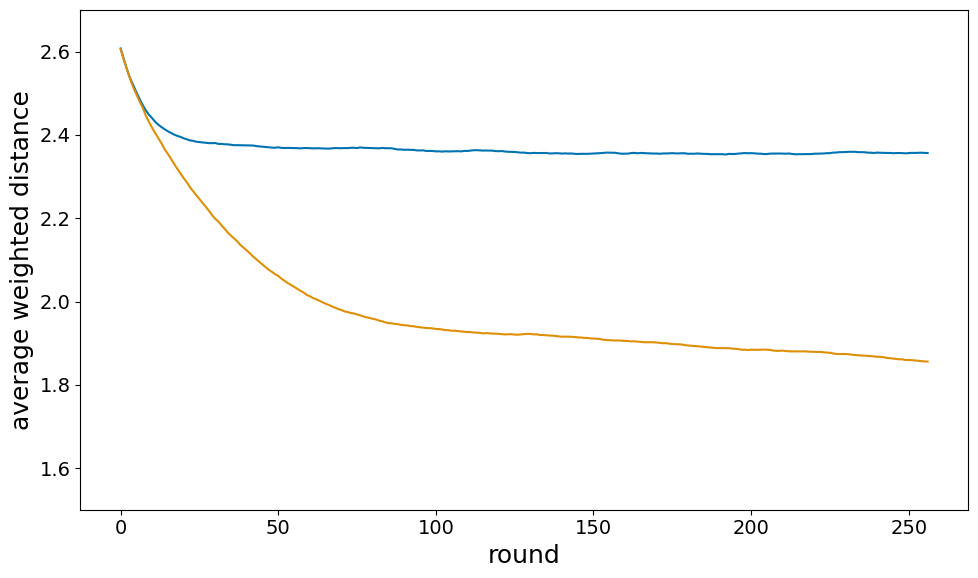}
                    \caption{Random (10  miners)}
                \end{subfigure}
                \begin{subfigure}{.32\textwidth}
                    \centering
                    \includegraphics[width=\linewidth]{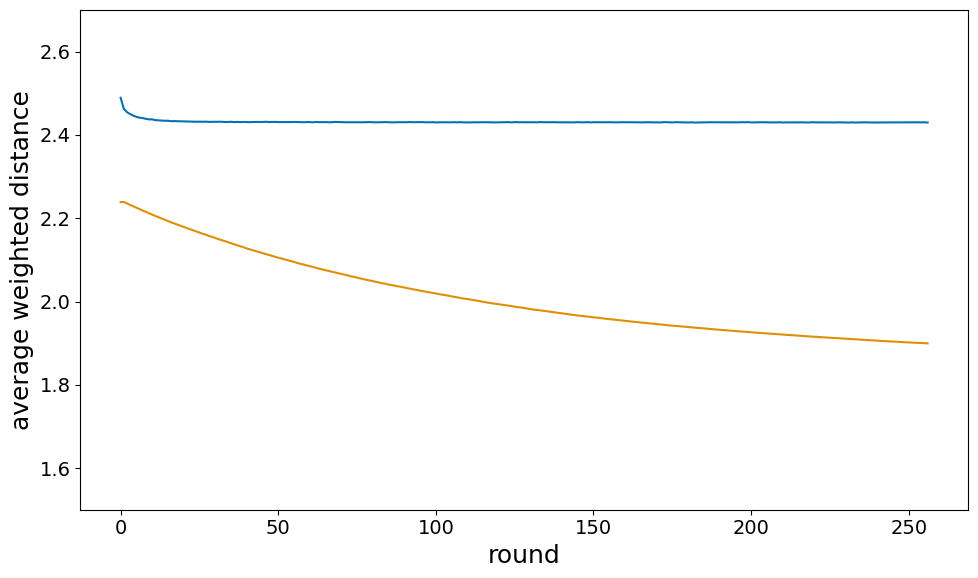}
                    \caption{Scale-free (10 miners)}
                \end{subfigure}
                \caption{Average distance to miners per round for 900 node networks starting from a random, small-world, or scale-free topology for networks of all homogeneous miners or a subset of 10 miners.}
                \label{fig:other-net-avg-dist}
        \end{figure} 
        
        We look at the impact starting from a random topology has on our results by comparing two other models for initial graphs topologies. We consider a small-world initial topology based on the Watts-Strogratz model with probability 0.5 and a scale-free network based on the Barabasi-Albert model with an initial connected component of size 20. For both graphs, we look at a network of all homogeneous miners. For the scale-free network we also consider the case with only 10 miners who all start in the initial connected component\footnote{We don't consider the subset miner case for the small-world graph as it is not clear where to put the miners in the initial ring lattice, which could lead to large biases in our results.}. 
        
        We see the average distance to miners for these three simulations with $n=900$ in Figure~\ref{fig:other-net-avg-dist} may start at different values, but converge around the same point. \textbf{Generally, the small-world simulations behave very similarly to the initial random network simulations overall.} The only exception is that though the average distance to miners has similar behavior, how individual nodes experience this distance is very different. This is seen in the PDF of the distance to miners in Figure~\ref{fig:small-world-dist} in with the small-world network converging to more nodes having smaller distance. Note the step-wise behavior in the small world network is an artifact of the initial ring structure of the Watts-Strogratz network.

        The scale-free network, on the other hand is particularly interesting in the capped and uncapped cases. With capped $d_{in}$, the average distance to miners (both in the homogeneous and 10 miner case) starts lower than the other networks and stays practically at the same value. Interestingly, with only a subset of the network as miners,  when we compare the eccentricity and diameter of the scale-free versus random network simulation in Figure~\ref{fig:other-diam-eccen} we see the scale-free diameters remain somewhat constant and at higher levels than the random graph for the miners. This suggests that a scale-free network is somewhat stable under \greedy but that the network we arrive from a random network is different.
    
        \textbf{The main take away is that generally it appears that \greedy is converging to a network with scale-free properties, but if we start from such a network we get instances where miners are particularly worse off in the network- suggesting the stability of the initial configuration leads to a different evolution of the network.}

        \begin{figure} [t]
                \centering
                \begin{subfigure}{.44\textwidth}
                    \centering
                    \includegraphics[width=\linewidth]{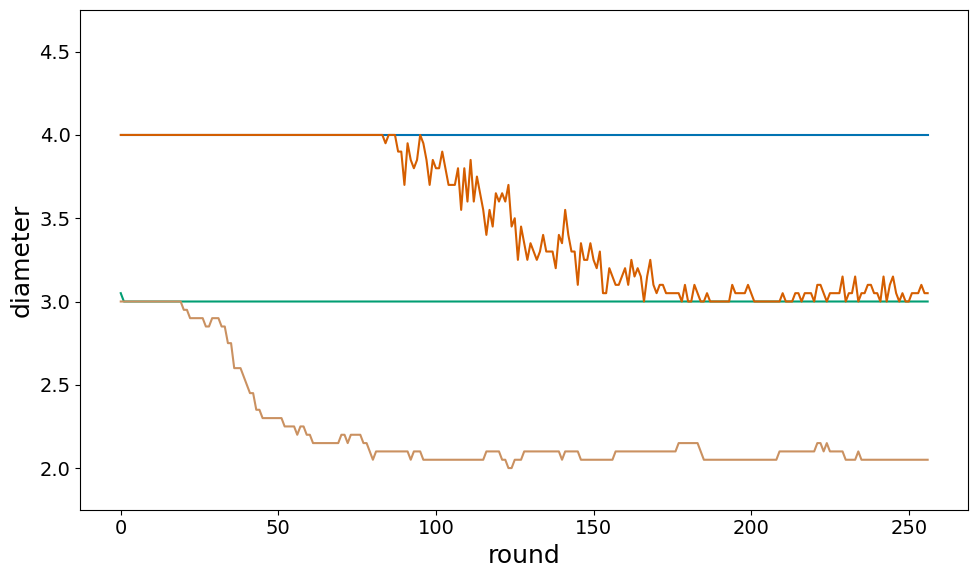}
                    \caption{random network diameter}
                \end{subfigure}
                \begin{subfigure}{.44\textwidth}
                    \centering
                    \includegraphics[width=\linewidth]{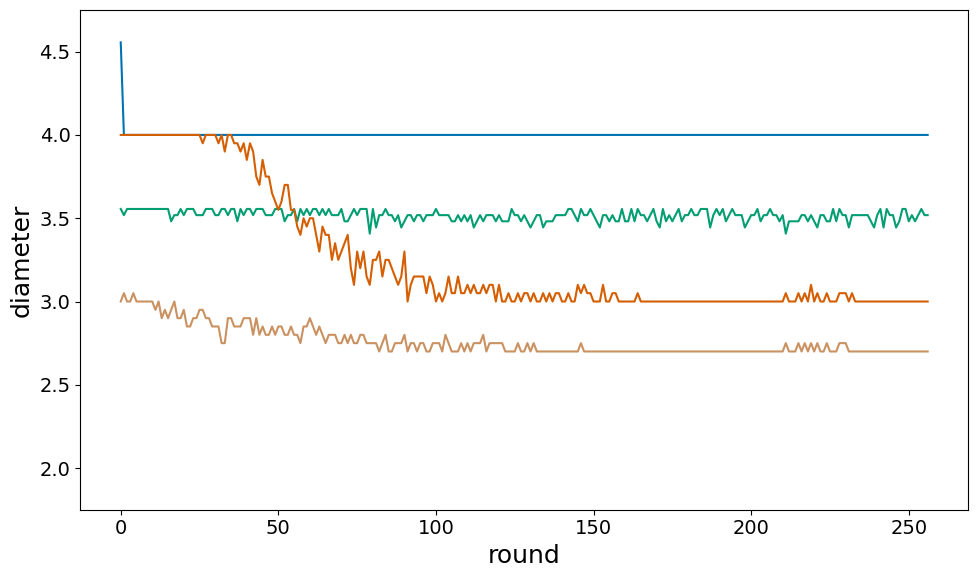}
                    \caption{scale-free diameter}
                \end{subfigure}\\
                \begin{subfigure}{.44\textwidth}
                    \centering
                    \includegraphics[width=\linewidth]{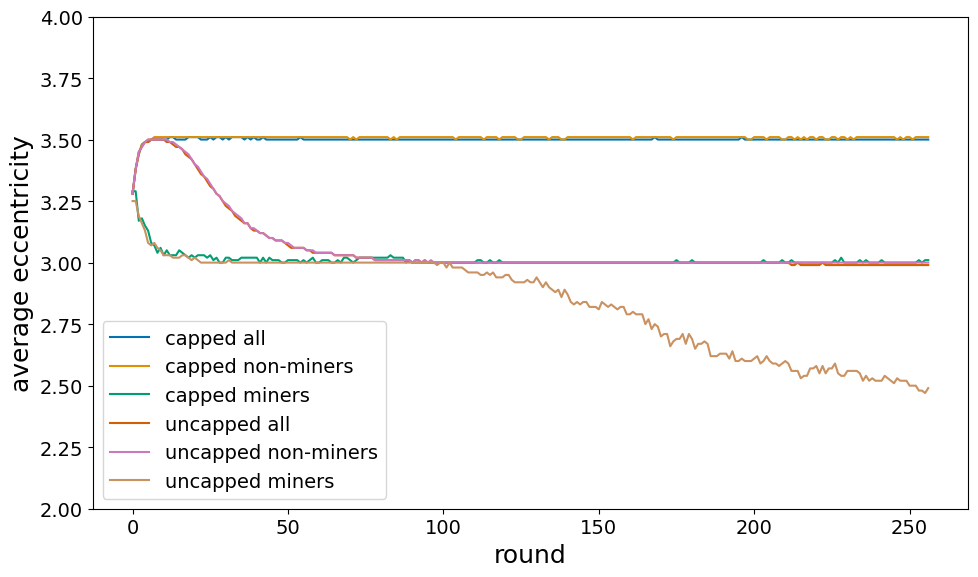}
                    \caption{random network eccentricity}
                \end{subfigure}
                \begin{subfigure}{.44\textwidth}
                    \centering
                    \includegraphics[width=\linewidth]{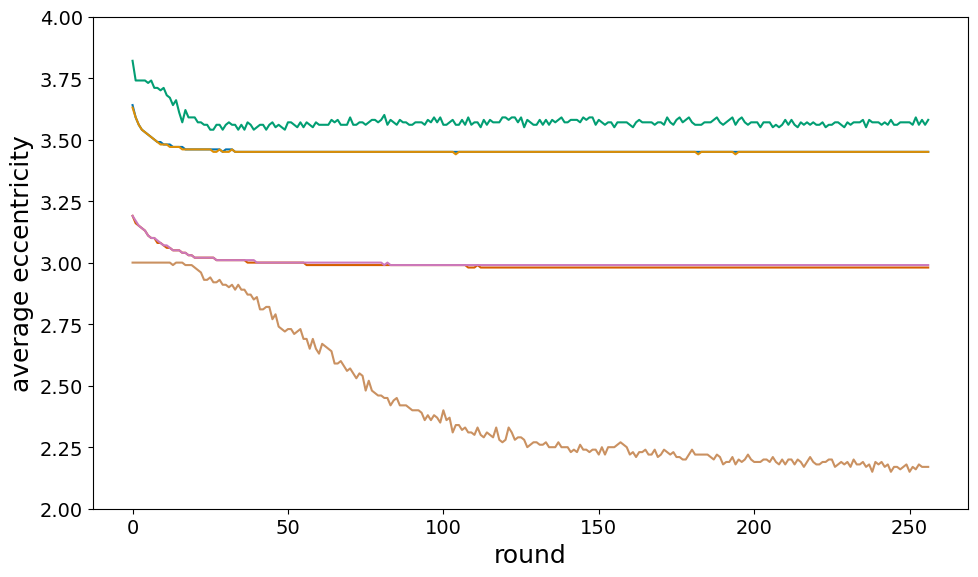}
                    \caption{scale-free eccentricity}
                \end{subfigure}
                \caption{Diameter and eccentricity over time for simulations of 900 nodes and 10 miners starting from a random or a scale-free topology.}
                \label{fig:other-diam-eccen}
        \end{figure}

        \begin{figure} [H]
                \centering
                \begin{subfigure}{.32\textwidth}
                    \centering
                    \includegraphics[width=\linewidth]{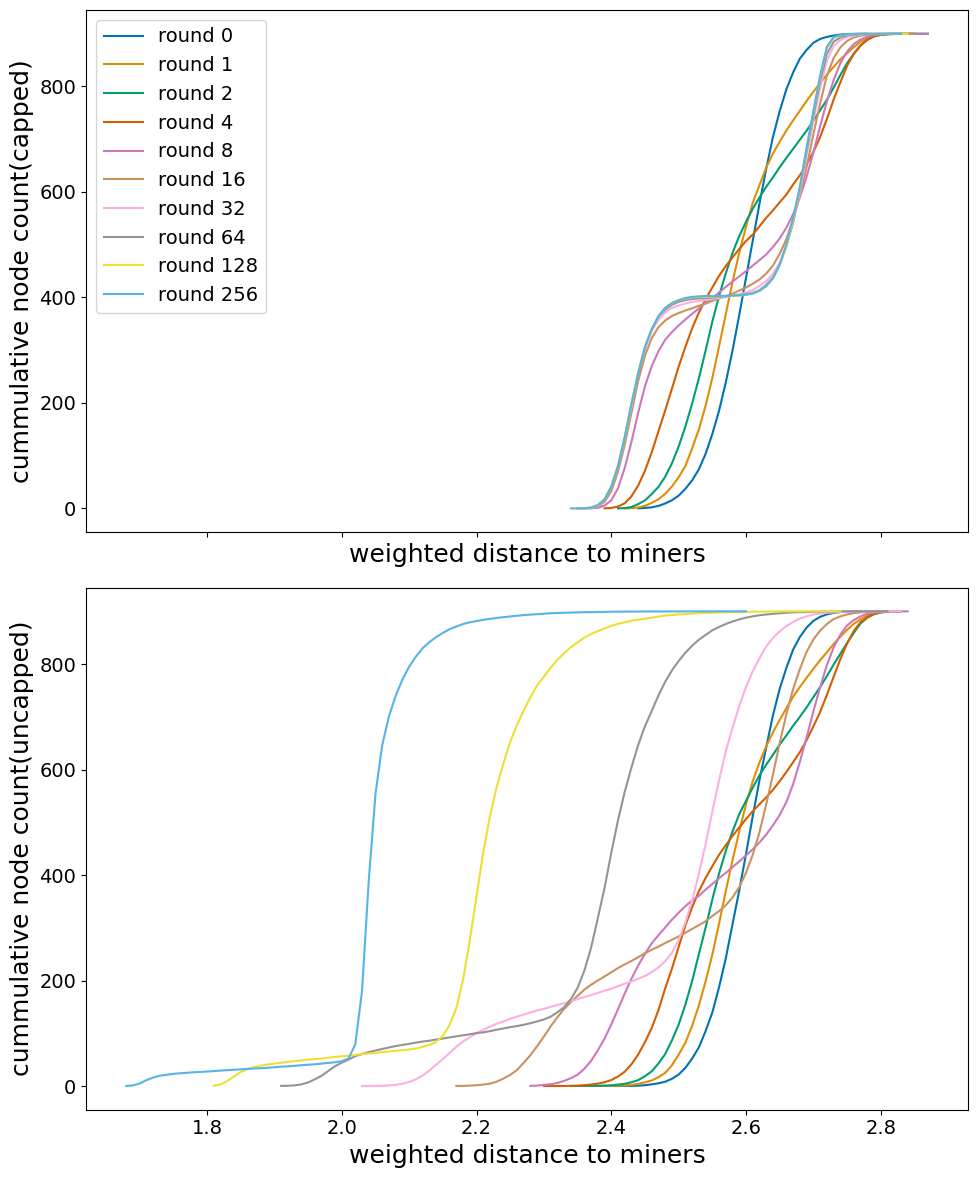}
                    \caption{Random}
                \end{subfigure}
                \begin{subfigure}{.32\textwidth}
                    \centering
                    \includegraphics[width=\linewidth]{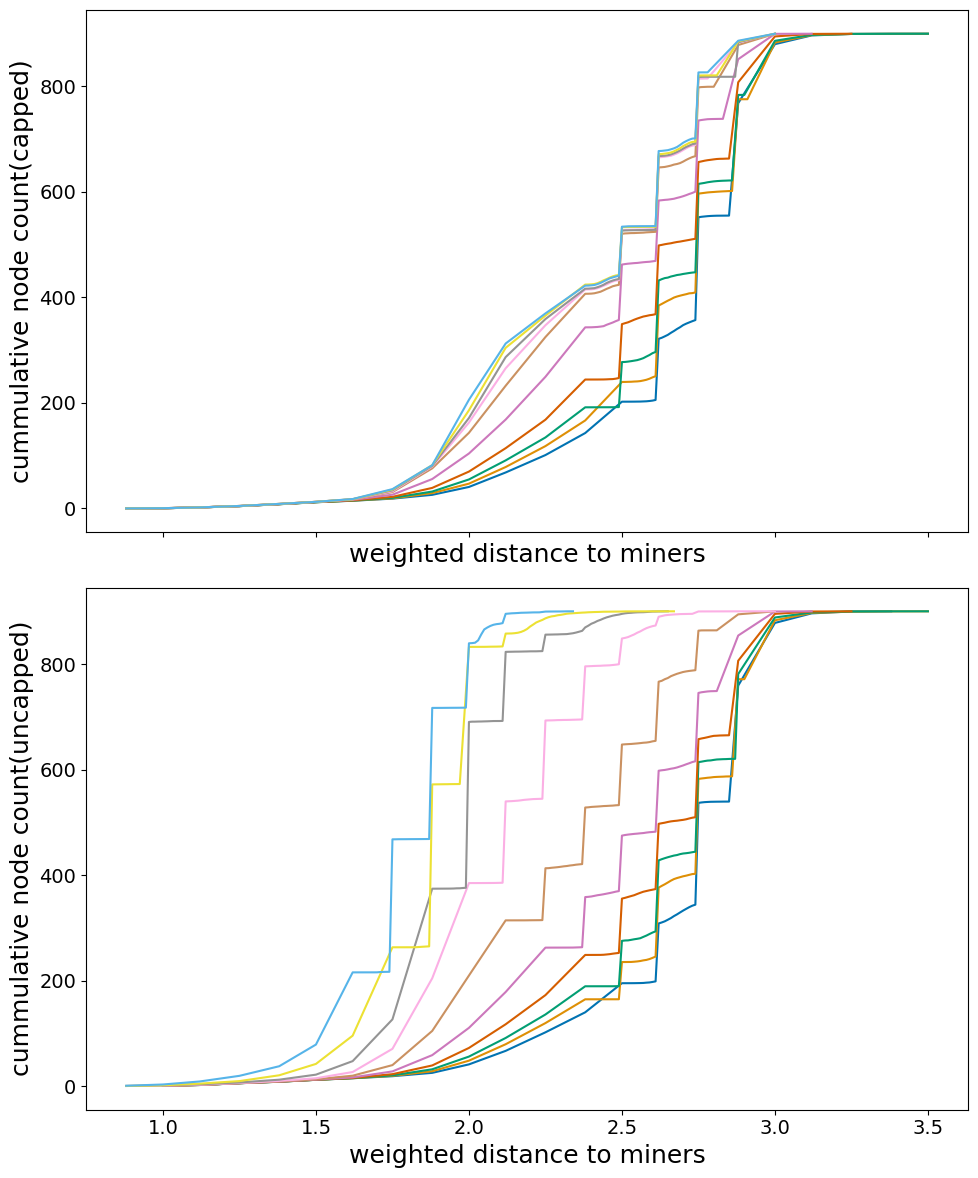}
                    \caption{Small-world}
                \end{subfigure}
                \begin{subfigure}{.32\textwidth}
                    \centering
                    \includegraphics[width=\linewidth]{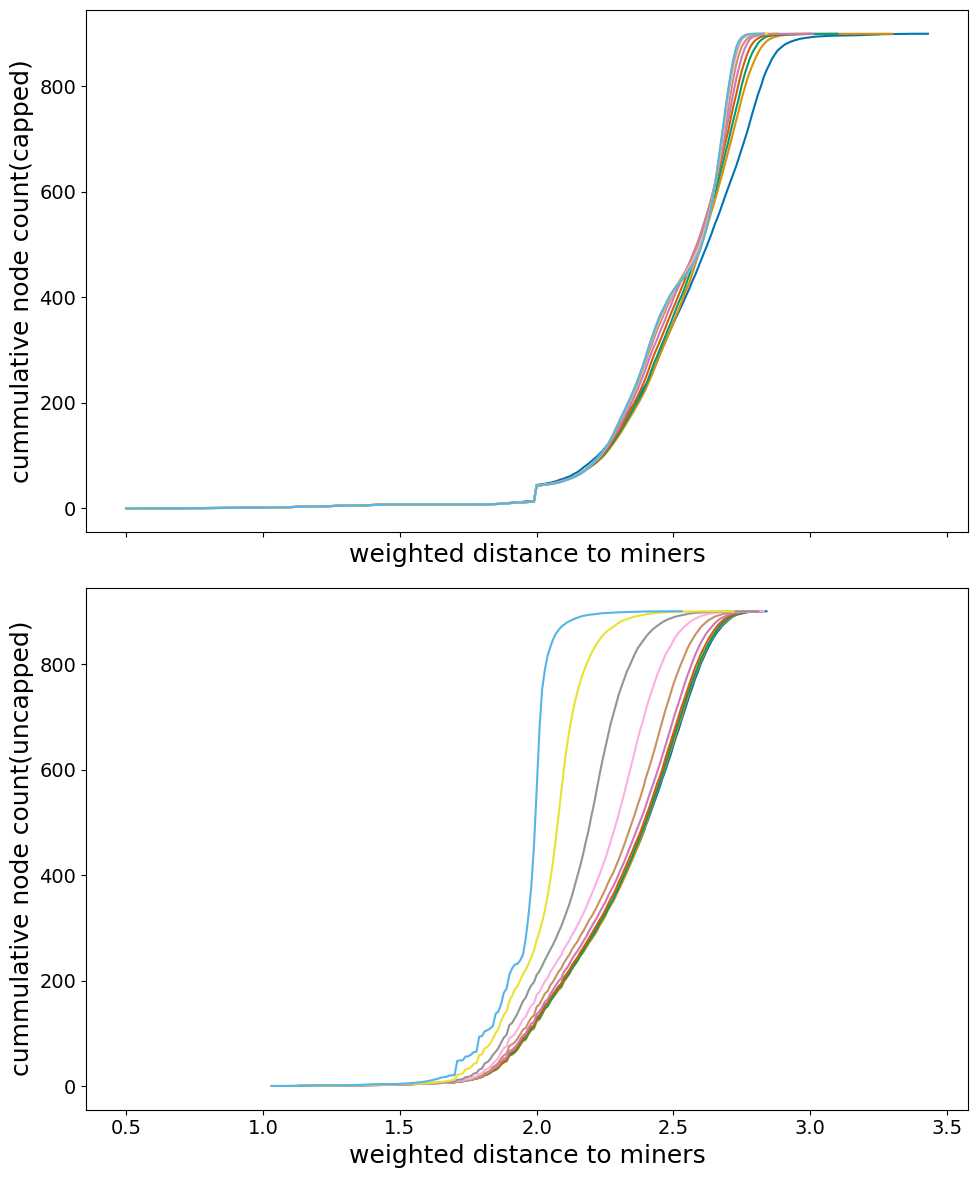}
                    \caption{Scale-free}
                \end{subfigure}
                \caption{PDF of the average distance to miners for simulations starting from a random graph and small world graph of homogeneous nodes.}
                \label{fig:small-world-dist}
        \end{figure}

        \subsection{Miner Heterogeneity}
        \label{app:miner-heterogeneity}
        In this work, we primarily consider miners as homogeneous entities. We consider also a couple examples of heterogeneous hashrate distributions among the miners, where node scores are weighted by the fraction of the hashrate each miner controls. We consider miners whose hashrate are linearly diminishing (miner $i$ has hashrate $\frac{i}{i+1}$ ), and miners whose hashrate are exponentially diminishing (miner $i$ has hashrate $\frac{1}{2}^i$). For these runs we look at 100 nodes with $d=10$ and $d_{in}= 20$ or uncapped. \textbf{Our main observation is that the bigger the disparity in hashrates, the more the network acts like the case where only a small subset of the network are miners.} We see this in the average distance to miners in Figure~\ref{fig:avg-d-hashrates}. The linearly diminishing hashrate (right) behaves very similarly to the homogeneous miner case. The exponentially decreasing hashrate, however, behaves different. Most obviously, we observe the range on the y-axis and that the exponential case has an order of magnitude greater decrease in average distance to miners (i.e., to each of the other nodes). We observe also that the more disparity there is in mining hashrates, the more the network behaves like the subset miner case.
        
        \begin{figure}[H]
                \centering
                \begin{subfigure}{.45\textwidth}
                    \centering
                    \includegraphics[width=\linewidth]{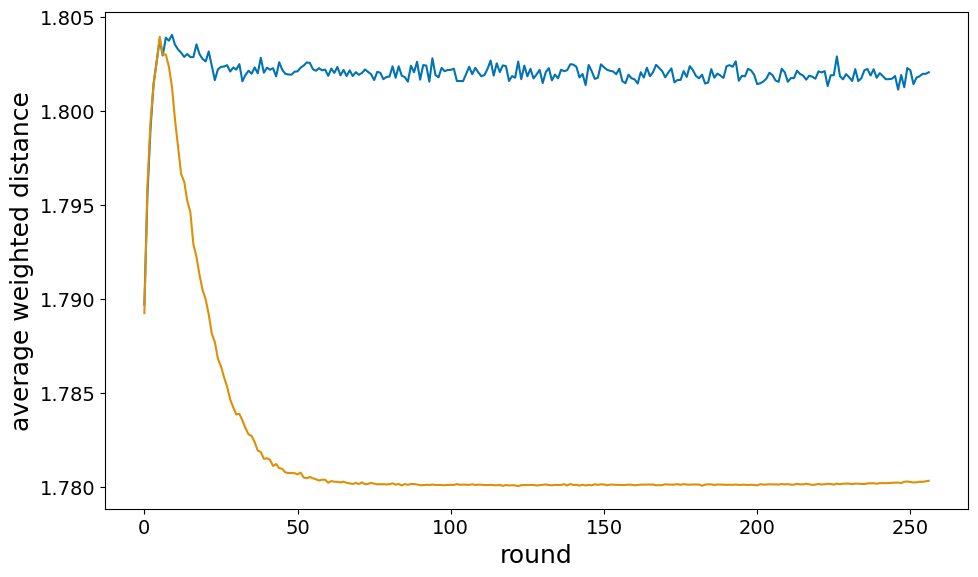}
                    \caption{Linear diminishing hashrate}
                \end{subfigure}
                \begin{subfigure}{.45\textwidth}
                    \centering
                    \includegraphics[width=\linewidth]{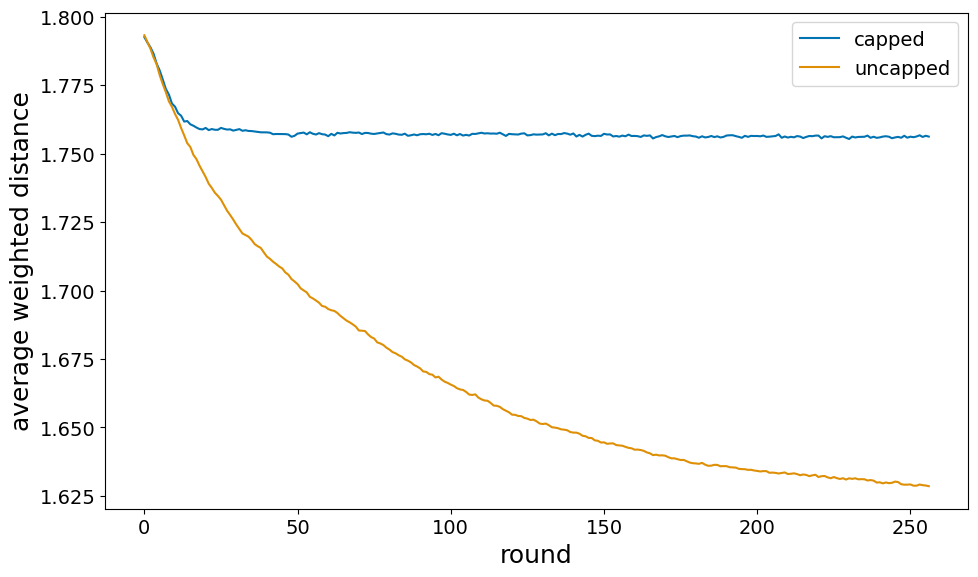}
                    \caption{Exponential diminishing hashrate}
                \end{subfigure}
                \caption{Average weighted (by hashrate) distance to miners per round for an 100 node network with all miners and hashrate distributions linearly and exponentially decreasing.}
                \label{fig:avg-d-hashrates}
        \end{figure}  

\end{document}